\numberwithin{equation}{section}
\renewcommand{\asgn}{\mathop{{:}{=}}}
\newcommand{\vecaca}{\vec{\aca}}
\renewcommand{\AllStores}{\mathbb{S}}
\renewcommand{\AllLoads}{\mathbb{L}}
\renewcommand{\fencep}[1]{\mathbf{br}(#1)}
\renewcommand{\fencep}[1]{#1}
\renewcommand{\fullfencespec}{\mathsf{full_{fnc}}}
\renewcommand{\loadfencespec}{\mathsf{load_{fnc}}}
\renewcommand{\storefencespec}{\mathsf{store_{fnc}}}
\renewcommand{\fwd}[2]{{}_{#1\mbox{\raisebox{-0.5pt}{\guillemetright}}}#2}
\newcommand{\fwda}[1]{\fwd{\aca}{#1}}
\newcommand{\MP}{\T{mp}}
\newcommand{\MPC}{\MP}
\newcommand{\MPra}{\MP^{RA}}
\newcommand{\plain}[1]{{#1}^{-}}
\newcommand{\plainc}{\plain{\cmdc}}
\newcommand{\plaincp}{\plain{\cmdc'}}
\renewcommand{\refeqn}[1]{(\ref{eqn:#1})}
\renewcommand{\refeqns}[2]{\refeqn{#1} and~\refeqn{#2}}
\renewcommand{\reffig}[1]{Fig.~\ref{fig:#1}}
\renewcommand{\refsect}[1]{Sect.~\ref{sect:#1}}
\newcommand{\labeldefn}[1]{\label{defn:#1}}
\newcommand{\arefeq}{&\refeq&}
\newcommand{\arefsto}{&\refsto&}
\newcommand{\seps}[2]{#2#1#2} 
\newcommand{\impseps}[1]{\seps{\imp}{#1}} 
\newcommand{\impS}{\impseps{\sspace}}
\newcommand{\atmexchexpl}[1]{\T{atomic\_exhange\_explicit}(#1)}
\newcommand{\atmexch}[1]{\T{atomic\_exhange}(#1)}
\newcommand{\atmcmpexchstrexpl}[1]{\T{atomic\_compare\_exhange\_strong\_explicit}(#1)}
\newcommand{\atmcmpexchstr}[1]{\T{atomic\_compare\_exhange\_strong}(#1)}
\newcommand{\atmfetchaddexpl}[1]{\T{atomic\_fetch\_add\_explicit}(#1)}
\newcommand{\atmfetchadd}[1]{\T{atomic\_fetch\_add}(#1)}
\newcommand{\ocsucc}{\oc_{succ}}
\newcommand{\ocfail}{\oc_{fail}}
\newcommand {\improc}{\impro{}${}^\Clang$}
\renewcommand{\id}[1]{{\sf id}_{#1}}
\newcommand{\idn}{{\sf id}}
\renewcommand{\ro}{\roM{\Cmm}}
\renewcommand{\nro}{\nroM{\Cmm}}
\newcommand{\OCmm}{\mmname{oc}}
\newcommand{\roOC}{\roM{\OCmm}}
\newcommand{\nroOC}{\nroM{\OCmm}}
\newcommand{\OCsmm}{\mmname{ocs}}
\newcommand{\roOCs}{\roM{\OCsmm}}
\definecolor{dkgreen}{rgb}{0,0.6,0}
\definecolor{gray}{rgb}{0.5,0.5,0.5}
\definecolor{mauve}{rgb}{0.58,0,0.82}
\newcommand{\Clang}{{\tt C}\xspace}
\newcommand{\Clangctee}{\Clang committee}
\newcommand{\Cpp}{{\tt C++}\xspace}
\newcommand{\rgq}[5]{\quint{#1}{#2}{#5}{#3}{#4}}
\newcommand{\rgqp}[4]{\rgq{p}{#1}{#2}{#3}{#4}}
\newcommand{\rgqpr}[3]{\rgqp{r}{#1}{#2}{#3}}
\newcommand{\rgqprg}[2]{\rgqpr{g}{#1}{#2}}
\newcommand{\rgqprgq}[1]{\rgqprg{q}{#1}}
\newcommand{\rgqprgqc}{\rgqprgq{\cmdc}}
\renewcommand{\refdefn}[1]{Defn.~(\ref{defn:#1})}
\renewcommand{\refeqn}[1]{(\ref{eqn:#1})}
\renewcommand{\reffig}[1]{Fig.~\ref{fig:#1}}
\renewcommand{\refrule}[1]{Rule~\ref{rule:#1}}
\newcommand{\lless}{\scaleobj{0.8}{<\!\!<}}
\newcommand{\roc}[3]{#1\, \lless\, #2\, \lless\, #3}
\newcommand{\indivisword}{\mathsf{indivis}}
\newcommand{\indivisact}[1]{\indivisword(#1)}
\newcommand{\indivisa}{\indivisact{\aca}}
\newcommand{\indivisb}{\indivisact{\acb}}
\renewcommand{\True}{\mathsf{True}}
\renewcommand{\False}{\mathsf{False}}
\newcommand{\tick}{\green{\checkmark}}
\newcommand{\ntick}{\red{\cross}}
\newcommand{\ceiling}[1]{\lceil#1\rceil}
\newcommand{\getocs}[1]{\ceiling{#1}}
\newcommand{\getocse}{\getocs{e}}
\newcommand{\getocsep}{\getocs{e'}}
\newcommand{\OCtype}{OC}
\newcommand{\binop}{\oplus}
\newcommand{\unop}{\ominus}
\newcommand{\relfence}{\mathbf{rel\_fence}}
\newcommand{\acqfence}{\mathbf{acq\_fence}}
\newcommand{\scfence}{\mathbf{sc\_fence}}
\newcommand{\fencemm}{\mmname{fnc}}
\newcommand{\roFNC}{\roM{\fencemm}}
\newcommand{\nroFNC}{\nroM{\fencemm}}
\title[A simpler basis for working with the \Clang\Cpp memory model (extended version)]{%
Separation of concerning things: 
a simpler basis for defining and programming with the \Clang/\Cpp memory model (extended version)
}
\begin{document}


\OMIT{
\author{
\me
\inst{1,2}
}
\institute{
Defence Science and Technology Group, Australia
\and
The University of Queensland, Australia
\email{r.colvin@uq.edu.au}
}
\date{}
}

\author{Robert J. Colvin}

\affiliation{
    \institution{
    Defence Science and Technology Group, and
	The School of Information Technology and Electrical Engineering,
    The University of Queensland
    }
    \department{
    The University of Queensland
    }
	\city{Brisbane}
	\country{Australia}
}
\email{r.colvin@uq.edu.au}


\begin{abstract}

The \Clang/\Cpp memory model provides an interface and execution model for programmers of concurrent (shared-variable) code. 
It provides a range of mechanisms that
abstract from underlying hardware memory models -- that govern how multicore architectures handle concurrent accesses to main memory --
as well as abstracting from compiler transformations.
The \Clang standard describes the memory model in terms of 
cross-thread relationships between events, and has been influenced by several research works that 
are similarly based.
In this paper we provide a thread-local definition of the fundamental principles of the \Clang memory model,
which, for concise concurrent code, serves as a basis for relatively straightforward reasoning about the effects of the \Clang ordering mechanisms.
We argue that this definition is more practical from a programming perspective and is amenable to analysis by already established techniques for concurrent code.
The key aspect is that the memory model definition is separate to other considerations of a rich programming language such as \Clang,
in particular, expression evaluation and optimisations, though we show how to reason about those considerations in the presence of \Clang concurrency.
A major simplification of our framework compared to the description in the \Clang standard and related work in the literature
is separating out considerations around the ``lack of multicopy atomicity'',
a concept that is in any case irrelevant to developers of code for x86, Arm, RISC-V or SPARC architectures.
We show how the framework is convenient for reasoning about well-structured code,
and for formally addressing unintuitive behaviours such as ``out-of-thin-air'' writes.

\end{abstract}

\begin{CCSXML}
<ccs2012>
<concept>
<concept_id>10011007.10011006.10011039.10011311</concept_id>
<concept_desc>Software and its engineering~Semantics</concept_desc>
<concept_significance>500</concept_significance>
</concept>
<concept>
<concept_id>10011007.10011006.10011008.10011009.10011014</concept_id>
<concept_desc>Software and its engineering~Concurrent programming languages</concept_desc>
<concept_significance>300</concept_significance>
</concept>
<concept>
<concept_id>10010147.10010148.10010162</concept_id>
<concept_desc>Computing methodologies~Computer algebra systems</concept_desc>
<concept_significance>100</concept_significance>
</concept>
</ccs2012>
\end{CCSXML}


\acmBooktitle{Proceedings}

\maketitle

\renewcommand{\notin}{\mathbin{/\!\!\!\in}}


\section{Introduction}

\Clang/\Cpp is one of the most widely used programming languages,
including for low-level concurrent code with a high imperative for efficiency.
The \Clang (weak) memory model, governed by an ISO standard, provides an interface (\T{atomics.h}) for instrumenting shared-variable concurrency that abstracts from
the multitude of multicore architectures it may be compiled to, each with its own guarantees about and mechanisms for controlling accesses to shared variables.

The \Clang memory model standard is described in terms of a cross-thread ``happens before'' relationship, relating stores and loads within and between threads,
and ``release sequences''.  The fundamentals of this
approach were established by Boehm \& Adve \cite{BoehmAdveC++Concurrency},
and the standard has been influenced and improved by various research works (\eg, \cite{MathematizingC++,VafeiadisC11}).
However, because it is cross-thread, verification techniques are often complex
to apply, and the resulting formal semantics are often highly specialised and involve global data structures capturing a partial order on events, and rarely cover 
the full range of features available in \Clang's \T{atomics.h} \cite{VafGroundingThinAir,RepairingSCinC11}.
In many cases this is because such formalisms attempt to explain how reordering can occur by mixing considerations such as expression optimisations
and Power's cache coherence system, alongside local compiler-induced and processor pipeline reorderings.
We instead take a separation-of-concerns approach, where the three fundamental principles involved in \Clang concurrent code --
data dependencies, fences, and memory ordering constraints --
are specified separately to other aspects of \Clang which may complicate reasoning,
such as expression evaluation, expression optimisations, and arbitrary compiler transformations.
Provided a programmer steers clear of concurrent code that is subject to these extra factors, 
reasoning about their code is relatively straightforward in our framework,
but if the full richness of \Clang is insisted upon, our framework is also applicable.

Syntactically and semantically the key aspect is the
\emph{\pseqc} operator, formalising a processor-pipeline concept that
has been around since the 1960s, and which has been previously shown to have good explanatory power for most behaviours observed on hardware weak memory models
\cite{SEFM21,pseqc-arXiv}.  
Reasoning in this setting involves making explicit at the program level 
what effects the memory model has, either reducing to a sequential form where the use of \T{atomics.h} features prevents the compiler or hardware
from making problematic reorderings, or making the residual parallelism explicit; in either case, standard techniques (such as Owicki-Gries or rely/guarantee)
apply to the reduced program.
We cover a significant portion of the \Clang weak memory model, including release/acquire and release/consume synchronisation, and sequentially consistent accesses and fences.
We demonstrate verification of
some simple behaviours (litmus tests), a spin lock implementation, 
and also explain how two types of related problematic behaviours -- out-of-thin-air stores and read-from-untaken-branch -- 
can be analysed and addressed. 
We argue that this foundation provides a simpler and more direct basis for discussion of the consequences of choices within the \Clang memory model,
and in particular for analysing the soundness of compiler transformations for concurrent code.

We explain the main aspects of the \Clang memory model using a simple list-of-instructions language
in \refsect{simple-syntax}, covering all relevant aspects of the \Clang memory model's principles.
We then give the syntax and semantics of a more realistic imperative language (with conditionals and loops) in \refsect{impro}.
We give a set of reduction rules for reasoning about the effects of the memory model in \refsect{reduction},
and explain how standard reasoning techniques can be applied in \refsect{reasoning}.
We show some illustrative examples in \refsect{examples}, 
including the often-discussed ``out-of-thin-air'' behaviours,
showing how in our framework an allowed version of the pattern arises naturally,
and a disallowed version is similarly disallowed. 
In \refsectsC{incremental-evaluation}{expression-optimisations}{forwarding}
we extend the language of \refsect{impro} with other features of programming in \Clang
such as incremental (usually called non-atomic) expression evaluation and instruction execution,
expression optimisations, and forwarding of values from earlier instructions to later ones.
Crucially, despite the complexities these features introduce, the fundamental 
principles of the \Clang memory model from \refsect{simple-syntax} do not change.
In \refsect{sfp} we give a formal discussion of the ``read-from-untaken branch behaviour''
which exposes the often problematic interactions between standard compiler optimisations and 
controlling shared-variable concurrency.

\newcommand{\mca}{mca}

\section{A simple language with thread-local reordering}
\labelsect{simple-syntax}

In this section we give a simple language formed from primitive actions and ``parallelized sequential prefixing'', 
which serves to explain the crucial parts of reordering due to the \Clang memory model.  In \refsect{impro} we extend the
language to include standard imperative constructs such as conditionals, loops, and composite actions.

\subsection{Syntax and semantics of a language with instruction reordering}
\newcommand{\FenceType}{Fence}

\newcommand{\prefixop}[1]{\overset{#1}{\triangleright}}
\newcommand{\prefixopm}{\prefixop{\mm}}
\newcommand{\prefixopsc}{\mathop{\raisebox{1pt}{$\mathsmaller{\mathsmaller{\mathsmaller{\blacktriangleright}}}$}}}
\newcommand{\prefixM}[3]{#2 \prefixop{#1} #3}
\newcommand{\prefixm}[2]{\prefixM{\mm}{#1}{#2}}
\newcommand{\prefixsc}[2]{#1 \prefixopsc {#2}}
\newcommand{\prefixC}[2]{\prefixM{\Cmm}{#1}{#2}}
\newcommand{\prefixmac}{\prefixm{\aca}{\cmdc}}

To focus on the semantic point of reorderings we introduce a very basic language formed from primitive instructions representing assignments, branches, and fences, which are composed
solely by a prefix operator that allows reordering (early execution) of later instructions.  
\begin{eqnarray*}
\labeleqn{impro-simple}
	e \attdef v \csep x \csep \unop e \csep e_1 \binop e_2 
	\\
	\fencegen \attdef \storefencespec \csep \loadfencespec \csep \fullfencespec \csep \ldots
	\\
    \aca & \ttdef & x \asgn e \csep \guarde 
		\csep \fencepf
    \\
    \cmdc & \ttdef & 
		\Skip \csep 
		\prefixm{\aca}{c}
\end{eqnarray*}
Expressions $e$ can be base values $v$, variables $x$, or involve the usual unary ($\unop$) or binary ($\binop$) operators.
A primitive action $\aca$ in this language is either an assignment $x \asgn e$, where $x$ is a variable and $e$ an expression, guard $\guarde$, where $e$ is an expression,
or fence $\fencepf$, where $\fencegen$ is some fence type, described below.
A command can be the terminated command $\Nil$, or a simple prefixing of a primitive instruction $\aca$ before command $\cmdc$, 
parameterised by some memory model $\mm$,
written $\prefixmac$.
As we show elsewhere \cite{FM18,pseqc-arXiv,SEFM21}, the parameter $\mm$ can be instantiated to give the behaviour of hardware weak memory models, but in this paper we focus mostly on \Clang's
memory model, denoted formally by `\Cmm', and
the special cases of sequential and parallel composition.
The $\prefixopm$ operator essentially allows the construction of a sequence of instructions that may be reordered under some circumstances, similar to a hardware pipeline.

The semantics of the language is given operationally below.
All primitive actions are executed in a single indivisible step.%
\footnote{Normally called \emph{atomic} but we avoid that term to keep the notion separate from \Clang's notion of \T{atomics}.}
\begin{gather*}
	(i) \quad
	\prefixmac \tra{\aca} \cmdc
\qquad
\qquad
	(ii) \quad
	\Rule{
        c \tra{\acb} c'
        \qquad
        \aca \rom \acb
    }{
		\prefixmac
		\tra{\acb}
		\prefixm{\aca}{c'}
    }
\end{gather*}
A command $\prefixmac$ may either immediately execute $\aca$, (rule $(i)$) 
or it may execute some action $\acb$ of $\cmdc$ provided that $\acb$ may \emph{reorder} with $\aca$ with respect to $\mm$, written
$\aca \rom \acb$ (rule $(ii)$).
The conditions under which reordering may occur are specific to the memory model under consideration, 
and we define these below for \Clang.
The memory model parameter is defined pointwise on instruction types.  This is a relatively convenient way to express reorderings, especially as it is agnostic about global
traces and behaviours.  As shown empirically in \cite{pseqc-arXiv} it is suitable for explaining observed behaviours on architectures such as Arm, x86 and RISC-V.  
It specialises to the notions of sequential 
and parallel conjunction  straightforwardly.

\newcommand{\prefixmn}[1]{\prefixm{#1}{\Nil}}
\newcommand{\prefixscn}[1]{\prefixsc{#1}{\Nil}}

As an example, consider a memory model $\mm$ such that assignments of values to different variables can be executed in either order 
(as on Arm, but not on x86, for instance), that is, $x \asgn v \rom y \asgn w$ for $x \neq y$.
Then we have two possible terminating traces (traces ending in $\Nil$) for the program $\prefixm{x \asgn 1}{\prefixmn{y \asgn 1}}$.
\begin{gather*}
	\prefixm{x \asgn 1}{\prefixmn{y \asgn 1}}
	\ttra{x \asgn 1}
	\prefixmn{y \asgn 1}
	\ttra{y \asgn 1}
	\Nil 
	\\
	\prefixm{x \asgn 1}{\prefixmn{y \asgn 1}}
	\ttra{y \asgn 1}
	\prefixmn{x \asgn 1}
	\ttra{x \asgn 1}
	\Nil 
\end{gather*}
The first behaviour results from two applications of rule $(i)$ above (as in prefixing in CSP \cite{CSP} or CCS \cite{CCS}).
The second behaviour results from applying rule $(ii)$, noting that by assumption $x \asgn 1 \rom y \asgn 1$, and then rule $(i)$.

We define $\ffence \sdef \fencep{\fullfencespec}$, and subsequently treat it is a full fence in $\mm$ by defining
$\ffence \nrom \aca$ and $\aca \nrom \ffence$ for all $\aca$ (where $\aca \nrom \acb$ abbreviates $\neg (\aca \rom \acb)$).  
Then we have exactly one possible trace in the following circumstance.
For convenience below we omit the trailing $\Nil$.
\renewcommand{\prefixmn}[1]{#1}
\renewcommand{\prefixscn}[1]{#1}
\begin{gather}
	\prefixm{x \asgn 1}{\prefixm{\ffence}{\prefixmn{y \asgn 1}}}
	\ttra{x \asgn 1}
	\prefixm{\ffence}{\prefixmn{y \asgn 1}}
	\ttra{\ffence}
	\prefixmn{y \asgn 1}
	\ttra{y \asgn 1}
	\Nil 
\end{gather}
The fence has prevented application of the second rule (since by definition both $x \asgn 1 \nro \ffence$ and $\ffence \nro y \asgn 1$) and hence restored sequential order on the instructions.

The framework admits the definition of standard sequential and parallel composition as (extreme) memory models.
Let the ``sequential consistency'' model $\SCmm$ be the model that prevents all reordering,
and introduce a special operator for that case.
We define the complement of $\SCmm$ to be $\PARmm$, \ie, the memory model that allows all reordering, 
which corresponds to parallel execution.
\begin{eqnarray}
	\aca \roSC \acb
	\aiff
	\False \quad \mbox{for all $\aca, \acb$}
	\labeleqndefn{mm-sc}
	\\
	\aca \roPAR \acb
	\aiff
	\True \quad \mbox{for all $\aca, \acb$}
	\labeleqndefn{mm-par}
	\\
	\prefixsc{\aca}{\cmdc}
	\asdef
	\prefixM{\SCmm}{\aca}{\cmdc}
\end{eqnarray}
Using $\refeq$ for trace equality (defined formally later),
\begin{eqnarray}
	\prefixm{x \asgn 1}{\prefixm{\ffence}{\prefixmn{y \asgn 1}}}
	\arefeq
	\prefixsc{x \asgn 1}{\prefixsc{\ffence}{\prefixscn{y \asgn 1}}}
\end{eqnarray}
Without the fence $\aca$ and $\acb$ effectively execute in parallel (under $\mm$), that is,
\begin{equation}
	\prefixm{x \asgn 1}{\prefixmn{y \asgn 1}}
	\sspace
	\refeq
	\sspace
	x \asgn 1 \pl y \asgn 1
\end{equation}
Whether $\prefixm{x \asgn 1}{\prefixmn{y \asgn 1}}$ satisfies some property depends exactly on whether
the parallel execution satisfies the property.

\subsection{Reordering in \Clang}

\newcommand{\effword}{{\sf eff}}
\newcommand{\eff}[1]{\effword(#1)}
\newcommand{\effa}{\eff{\aca}}
\newcommand{\effc}{\eff{\cmdc}}
\newcommand{\State}{\Sigma}

\newcommand{\Visible}[1]{{\sf visible}~#1}
\newcommand{\SilentWord}{{\sf silent}}
\newcommand{\Silent}[1]{\SilentWord~#1}
\newcommand{\Infeasible}[1]{{\sf infeasible}~#1}
\newcommand{\Visa}{\Visible{\aca}}
\newcommand{\Silenta}{\Silent{\aca}}
\newcommand{\Infeasa}{\Infeasible{\aca}}

\newcommand{\datadep}[2]{#1 \rightsquigarrow #2}
\newcommand{\ndatadepop}{\,\,\mathop{\not\!\hspace{0.0pt}\rightsquigarrow}\,}
\newcommand{\ndatadep}[2]{#1 \ndatadepop #2}
\newcommand{\nddep}[2]{\ndatadep{#1}{#2}}
\newcommand{\nointf}[2]{interference\_free(#1, #2)}
\newcommand{\loadindep}[2]{load\_indep(#1, #2)}
\newcommand{\orderindep}[2]{order\_indep(#1, #2)}

\newcommand{\ddepab}{\datadep{\aca}{\acb}}
\newcommand{\nddepab}{\ndatadep{\aca}{\acb}}

\newcommand{\nointfab}{\nointf{\aca}{\acb}}
\newcommand{\loadindepab}{\loadindep{\aca}{\acb}}
\newcommand{\orderindepab}{\orderindep{\aca}{\acb}}

We now consider the specifics of reordering in the $\Clang$ memory model, which considers three aspects: (i) variable (data) dependencies; (ii)  fences;
and (iii) ``memory ordering constraints'', that can be used to annotate variables or fences.
We cover each of these three aspects in turn.

\subsubsection{Data dependencies/respecting sequential semantics}
\labelsect{datadeps}

A key concept underpinning both processor pipelines and compiler transformations is that
of data dependencies, where one instruction depends on a value being calculated by another.
To capture this we write $\ddepab$ if instruction $\acb$ depends on 
a value that instruction $\aca$ writes to.
We define a range of foundational syntactic and semantic concepts below.
In a concurrent setting we distinguish between local and shared variables, that is, the set $\AllVars$ is divided into two
mutually exclusive and exhaustive sets $\LocalVars$ and $\SharedVars$.
By convention we let $x, y, z$ be shared variables and $r, r_1, r_2 \ldots$ be local variables.
For convenience we introduce the syntax $\mx{s_1}{s_2}$ to mean that sets
$s_1$ and $s_2$ are mutually exclusive.
\begin{eqnarray}
	\mx{s_1}{s_2}
	\asdef
	s_1 \int s_2 = \ess
	\notag
	\\
	\AllVars \aeq \LocalVars \union \SharedVars \qquad \mbox{(with $\mx{\LocalVars}{\SharedVars}$)}
	\\
	\wv{x \asgn e} \sseq \{x\} 
	\qquad
	\wv{\guarde} \aeq \ess
	\qquad
	\quad
	\wv{\fencepf} \sseq \ess
	\labeleqn{defn-wv}
	\\
	\rv{x \asgn e} \sseq \fve
	\qquad
	\rv{\guarde} \aeq \fve
	\qquad
	\rv{\fencepf} \sseq \ess
	\labeleqn{defn-rv}
	\\
	\fva \aeq \wva \union \rva 
	\labeleqn{defn-fva}
	\\
	\sva = \fva \int\SharedVars
	\qquad
	\rsva \aeq \rva \int \SharedVars
	\qquad
	\wsva = \wva \int \SharedVars
	\labeleqn{defn-sva}
	\\
	\AllStores
	\sssdef
	\{\aca | \wsva \neq \ess\}
	&&
	\AllLoads 
	\sssdef
	\{\aca | \rsva \neq \ess\}
	\labeleqn{defn-stores}
	\\
	\Also
	\datadep{\aca}{\acb}
	\asdef
		\wv{\aca} \int \rv{\acb} \neq \ess
	\labeleqn{defn-datadep}
	\\
	\ndatadep{\aca}{\acb}
	\asdef
		\neg (\datadep{\aca}{\acb})
	\labeleqn{defn-ndatadep}
	\\
	\nointf{\aca}{\acb}
	\asdef
		\mx{\wv{\aca}}{\fv{\acb}} 
		\sspace \land \sspace 
		\mx{\wv{\acb}}{\fv{\aca}}
	\labeleqn{defn-nointf}
	\\
	\aeq
		\ndatadep{\aca}{\acb} \land \ndatadep{\acb}{\aca}
		\land \mx{\wv{\aca}}{\wvb}
	\notag
	\\
	\loadindep{\aca}{\acb}
	\asdef
		\mx{\rsv{\aca}}{\rsv{\acb}}
	\labeleqn{defn-loadindep}
	\\
	\orderindep{\aca}{\acb}
	\asdef
	\eff{\aca} \comp \eff{\acb} = \eff{\acb} \comp \eff{\aca}
	\labeleqn{defn-orderindep}
\end{eqnarray}
The write variables of instructions (written $\wva$) are collected syntactically \refeqn{defn-wv}, as are the read variables (written
$\rva$) \refeqn{defn-rv}, which depend on the usual notion of the free variables in an expression (written $\fve$, defined
straightforwardly over the syntax of expressions).  The free variables of an instruction are the union of the write and read variables
\refeqn{defn-fva}.  Shared and local variables have different requirements from a reordering perspective and so we introduce specialisations
of these concepts to just the shared variables \refeqn{defn-sva}.
We define ``store'' instructions ($\AllStores$) as those that write to a shared variable,
and ``load'' instructions ($\AllLoads$) as those that read shared variables (and hence an instruction such as $x \asgn y$
is both a store and a load).

Using these definitions we can describe various relationships between actions.
One of the key notions is that of ``data dependence'', where 
we write $\ddepab$ if 
instruction $\acb$ references a variable being modified by instruction $\aca$ \refeqn{defn-datadep}
(and similarly we write $\nddepab$ if there is no data dependence \refeqn{defn-ndatadep}).
For instance, $\datadep{x \asgn 1}{r \asgn x}$ but $\ndatadep{x \asgn 1}{r \asgn y}$.
The former can be expressed as $x \asgn 1$ ``carries a dependency into'' $r \asgn x$.
Two instructions are \emph{interference free} if there is no data dependence in either direction,
and they write to different variables \refeqn{defn-nointf}.%
\footnote{
This is a well-known property, the earliest example being Hoare's disjointness
\cite{HoareTowardsPP,HoareTowardsPP2002,50yHoare}, 
and is also called non-interference in separation logic
\cite{BrookesSepLogic07}.
This condition, formally discussed since the 1960's, is remarkably powerful for explaining the majority of observed behaviours of basic instructions types on modern hardware, although this
author did not find evidence of cross-over in older references (\cite{CDC6600,Tomasulo67}, etc).
}
Note that instructions that are interference-free may still load the same variables; we say 
they are \emph{load independent} if they access distinct \emph{shared} variables \refeqn{defn-loadindep}.
Finally, two instructions are ``order independent'' if the effect of executing them is independent of the execution order
\refeqn{defn-orderindep}.  The ``effect'' function $\effa$ denotes actions as a set of pairs of states in the usual imperative style
(defined later in \reffig{eff-semantics}), using
`$\comp$' for relational composition.
Note that order-independence is a weaker condition than non-interference,
for example, $x \asgn 1$ and $x \asgn 1$ are order-independent but not interference-free.
All of these definitions or concepts defined for instructions can be lifted straightforwardly to commands by induction on the syntax
(see \refappendix{syntax-lifting}).

The key aspect of interference-freedom is the following.
\begin{theorem}[Disjointness]
\begin{equation*}
	\nointf{\aca}{\acb}
	\imp
	\orderindepab
\end{equation*}
\end{theorem}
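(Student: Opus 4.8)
The plan is to read both sides of the implication as binary relations on states (over $\State$) and to prove the two inclusions between $\eff{\aca}\comp\eff{\acb}$ and $\eff{\acb}\comp\eff{\aca}$ separately; since $\nointfab$ is symmetric in $\aca$ and $\acb$, I only need to argue $\eff{\aca}\comp\eff{\acb}\subseteq\eff{\acb}\comp\eff{\aca}$ and then invoke that symmetry. Everything will hinge on two ``frame'' lemmas about the effect function, which I would establish first by a three-way case split on the form of a primitive action (assignment $x\asgn e$, guard, or fence), reading off \reffig{eff-semantics}:
(a) if $(\sigma,\sigma')\in\eff{\aca}$ then $\sigma$ and $\sigma'$ agree on every variable outside $\wv{\aca}$; and
(b) whether $(\sigma,\sigma')\in\eff{\aca}$ holds, and which values $\sigma'$ carries on $\wv{\aca}$, are determined by the restriction of $\sigma$ to $\rv{\aca}$ alone.
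For an assignment both are immediate because only $x$ is updated and the stored value is the evaluation of $e$, whose free variables are exactly $\rv{x\asgn e}$ by \refeqn{defn-rv}; a guard contributes a subset of the identity picked out by the value of $e$ in $\sigma$; a fence contributes the identity.

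Given (a) and (b), the remaining work is a short calculation on variable sets. I would assume $\nointfab$ and use the reformulation of \refeqn{defn-nointf} (namely $\nddepab \land \nddep{\acb}{\aca} \land \mx{\wv{\aca}}{\wv{\acb}}$; recall by \refeqns{defn-datadep}{defn-ndatadep} that $\nddepab$ is just $\mx{\wv{\aca}}{\rv{\acb}}$) to extract the three disjointness facts $\mx{\wv{\aca}}{\rv{\acb}}$, $\mx{\wv{\acb}}{\rv{\aca}}$ and $\mx{\wv{\aca}}{\wv{\acb}}$. Then, given $(\sigma,\sigma')\in\eff{\aca}\comp\eff{\acb}$ witnessed by $\tau$ with $(\sigma,\tau)\in\eff{\aca}$ and $(\tau,\sigma')\in\eff{\acb}$, I would define $\tau'$ to agree with $\sigma'$ on $\wv{\acb}$ and with $\sigma$ everywhere else, and verify $(\sigma,\tau')\in\eff{\acb}$ and $(\tau',\sigma')\in\eff{\aca}$. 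The first holds because (a) forces $\tau$ to differ from $\sigma$ only on $\wv{\aca}$, so by $\mx{\wv{\aca}}{\rv{\acb}}$ the states $\tau$ and $\sigma$ agree on $\rv{\acb}$, and (b) then transfers the $\acb$-step from $\tau$ to one from $\sigma$ writing the same values to $\wv{\acb}$ and, by (a), leaving all else as in $\sigma$ --- i.e.\ reaching $\tau'$. The second is symmetric: $\tau'$ differs from $\sigma$ only on $\wv{\acb}$, so by $\mx{\wv{\acb}}{\rv{\aca}}$ it agrees with $\sigma$ on $\rv{\aca}$, and (b) transfers the $\aca$-step from $\sigma$ to one from $\tau'$ writing $\tau$'s values on $\wv{\aca}$; using $\mx{\wv{\aca}}{\wv{\acb}}$ together with (a) applied to the two original steps, I would check that the resulting state agrees with $\sigma'$ on $\wv{\aca}$, on $\wv{\acb}$, and on the complement of their union. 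That yields $(\sigma,\sigma')\in\eff{\acb}\comp\eff{\aca}$, and symmetry closes the equality.

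The hard part, I expect, will be (b) --- not establishing it, but applying it with enough bookkeeping that the two orderings reach the \emph{same} final state rather than merely both being able to take a step; the transfer argument has to carry the \emph{output} on the write set, not just membership in the effect relation, which is why I would phrase (b) to include the written values from the outset. Everything else is elementary set algebra via \refeqn{defn-wv}--\refeqn{defn-fva}. One caveat: should \reffig{eff-semantics} include behaviour not visible in the surface grammar \refeqn{impro-simple} --- partial expression evaluation, explicit nondeterminism, and the like --- then (a) and (b) must be rechecked against it; for the three action forms at hand they hold as stated. Finally, note that non-interference is sufficient but not necessary for order-independence --- the text's $x\asgn 1$ against $x\asgn 1$ example is order-independent yet not interference-free --- so I would not expect, nor attempt, a converse.
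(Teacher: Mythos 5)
Your proof is correct and is exactly the argument the paper has in mind: the paper's own proof is the one-line remark that ``independent actions do not change each other's outcomes,'' and your frame lemmas (a) and (b) together with the explicit construction of the intermediate state $\tau'$ are precisely the details that remark elides. No gap.
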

\begin{proof}
Straightforward: independent actions do not change each other's outcomes.
\end{proof}

We say $\acb$ may be \emph{reordered before} $\aca$ with respect to sequential semantics, written $\aca \roG \acb$, 
if they are interference-free and load-independent; the latter constraint maintains \emph{coherence} of loads.
\begin{definition}[\Gmm]
\labelmm{Gmm}
For instruction $\aca,\acb$,
\begin{equation}
	\aca \roG \acb \qquad \sdef  \qquad
		\nointfab \land
		\loadindepab 
\end{equation}
\end{definition}
A syntactic check may be used to validate $\aca \roG \acb$.  
Of course one could do a semantic check to establish the weaker property of order-independence,
however this is typically not done on a case-by-case basis but rather is used to justify
compiler transformations.
It is not feasible to check order independence directly for all cases.

\newcommand{\abut}{&\mbox{but}&}

We can derive the following, where we assume $x,y \in \SharedVars$, $r, r_1, r_2 \in \LocalVars$, 
and all are distinct, and $v,w \in \Val$.
We write $\aca \nroG \acb$ if $\neg (\aca \roG \acb)$.  
\begin{eqnarray}
	x \asgn v \roG y \asgn w
	\abut
	x \asgn v \nroG x \asgn w
	\labeleqn{c11ro-ww}
	\\
	x \asgn v \roG r \asgn y
	\abut
	x \asgn v \nroG r \asgn x
	\labeleqn{c11ro-wl}
	\\
	r_1 \asgn x \roG r_2 \asgn y
	\abut
	r_1 \asgn x \nroG r_2 \asgn x
	\labeleqn{c11ro-ll}
	\\
	x \asgn r \roG y \asgn r
	\labeleqn{c11ro-llr}
	\\
	\guard{r = v} \roG x \asgn w
	\labeleqn{c11ro-gw}
\end{eqnarray}
This shows that independent stores can be reordered, but not stores to the same variable
\refeqn{c11ro-ww}; 
similarly independent stores and loads can be reordered, as can loads, provided they are not referencing the same (shared) variable 
(\refeqns{c11ro-wl}{c11ro-ll}).
Accesses of the same local variable, however, can be reordered
\refeqn{c11ro-llr}, since no interference is possible.
Finally, stores can be reordered before (independent) guards
\refeqn{c11ro-gw}.  Since, as we show later, guards model branch points, this is a significant aspect of \Cmm.  For hardware weak memory models \refeqn{c11ro-gw} is not allowed,
as it implies a ``speculative'' write that may not be valid if $r = v$ is eventually found not to hold; however at compile-time
whether a branch condition will eventually hold may be able to be predetermined.

\subsubsection{Respecting fences}

\newcommand{\imageof}[2]{#1\limg #2 \rimg}
\newcommand{\compl}[1]{\overline{#1}}

\renewcommand{\isStore}[1]{#1 \in \AllStores}
\renewcommand{\isLoad}[1]{#1 \in \AllLoads}
\newcommand{\nisStore}[1]{#1 \notin \AllStores}
\newcommand{\nisLoad}[1]{#1 \notin \AllLoads}
\newcommand{\nisStorea}{\nisStore{\aca}}
\newcommand{\nisLoada}{\nisLoad{\aca}}

\newcommand{\getfences}[1]{|\!#1\!|}
\newcommand{\getfencesa}{\getfences{\aca}}
\newcommand{\getfencesb}{\getfences{\acb}}

Since the reorderings allowed by weak memory models may be problematic for establishing communication protocols between processes such
models typically have their own ``fence'' instruction types,
which
are \emph{artificial} constraints on reordering (as opposed to the ``natural'' data-dependence constraint).
\Clang has fences specifically to establish order between stores, between loads, or between either type.
We define $\roFNC$ below, relating fences and instructions.
\begin{eqnarray}
	\roFNC &\in& \FenceType \cross \Instr
	\qquad (\fencegen \roFNC \aca \sspace \sdef \sspace (\fencegen, \aca) \in \roFNC)
	\notag
	\\
	\storefencespec \nroFNC \aca
	\aiff
	\isStorea
	\labeleqn{defn-storebr}
	\\
	\loadfencespec \nroFNC \aca
	\aiff
	\isLoada
	\labeleqn{defn-loadbr}
	\\
	\fullfencespec \nroFNC \aca
	&&
	\mbox{for all $\aca$}
	\labeleqn{defn-scbr}
\end{eqnarray}
\refEqn{defn-storebr} states that a store fence blocks store instructions (recall \refeqn{defn-stores}),
while \refeqn{defn-loadbr} similarly states that load fences block loads.
\refEqn{defn-scbr} states that a ``full'' fence blocks all instruction types.

We use this base definition to define when two instructions can be reordered according 
to their respective fences (lifting the relation name $\roFNC$).
\begin{definition}[Fence reorderings]
\begin{eqnarray*}
	\aca 
	\roFNC 
	\acb
	\aiff
	(\forall \fencegen \in \getfencesa @ \fencegen \roFNC \acb)
	\sspace
	\land
	\sspace
	(\forall \fencegen \in \getfencesb @ \fencegen \roFNC \aca)
\end{eqnarray*}
\end{definition}
where $\getfences{\aca}$ extracts the fences present in $\aca$, \ie,
\begin{eqnarray}
	\getfences{x \asgn e} \sseq \ess
	\qquad
	\getfences{\guarde} \aeq \ess
	\qquad
	\getfences{\fencepf} \sseq \{\fencegen\}
\end{eqnarray}
Hence $\aca$ and $\acb$ can be reordered (considering fences only) 
if they each respect the others' fences.  Note that all \Clang fences are defined symmetrically so we simply check the pair
in each direction.%
\footnote{
For asymmetric fences, such as
Arm's \T{isb} instruction, both directions need to be specified \cite{pseqc-arXiv}.
}

Based on these definitions we can determine the following special cases,
where $x,y \in \SharedVars$ and $r,r_i \in \LocalVars$.
\begin{eqnarray}
	x \asgn 1
	\nroFNC
	&
	\storefence
	&
	\nroFNC
	y \asgn 1
	\\
	r_1 \asgn x
	\nroFNC 
	&
	\loadfence
	&
	\nroFNC 
	r_2 \asgn y
	\\
	x \asgn 1
	\nroFNC 
	&
	\fullfence
	&
	\nroFNC 
	r \asgn y
\end{eqnarray}
Statements of the form $\aca \nro \acb \nro \acc$ should be read as a shorthand for $\aca \nro \acb$ and $\acb \nro \acc$.
Inserting fences limits the number of possible traces of a program, an outcome which obviously may restore intended relationships between
variables.

\subsection{Memory ordering constraints}

\Clang introduces ``\emph{atomics}'' in the \T{stdatomic} library, which includes several types of ``memory ordering constraints'' (which we herewith call ordering constraints),  
which can tag loads and stores of variables declared to be
``atomic'' (\eg, type \T{atomic\_int} for a shared integer).  These ordering constraints control how atomic variables interact.
We start with an informal overview of how ordering constraints are intended to work, and then show how we incorporate them into the syntax of the language and
define a reordering relation over them.

For the discussions below we use a more compact notation for stores and loads,
as exemplified by the following \Clang equivalents.%
\footnote{
In \Cpp,
	\T{
		r = y.load(std::memory\_order\_relaxed)
	}
	and
	\T{
		y.store(3, std::memory\_order\_seq\_cst) 
	}
}
\begin{eqnarray}
	\T{
		r = atomic\_load\_explicit(\&y, memory\_order\_relaxed)
	} 
\asdef
	r \asgn \modX{y}
\\
	\T{
		atomic\_store\_explicit(\&y, 3, memory\_order\_seq\_cst)
	}
\asdef
	\modSC{y} \asgn 3
\end{eqnarray}

\subsubsection{Informal overview of ordering constraints}

We describe the six ordering constraints defined in type \T{memory_oder} below.
\begin{itemize}
\item
\emph{Relaxed} ($\relaxed$, \T{memory\_order\_relaxed}).
A \emph{relaxed} access is the simplest, not adding any extra constraints to the variable access, allowing both the hardware and compiler to potentially reorder
independent relaxed accesses.  For instance, consider the following program snippet (where `$;$' is \Clang's semicolon).
\begin{equation}
	\modxX \asgn 1 ; \modX{flag} \asgn \True
\end{equation}
If the programmer's intention is that the $flag$ variable is set to $\True$ after the data $x$ is set to 1, then they have erred: either the compiler
or the hardware may reorder the apparently independent stores to $x$ and $flag$.

\item
\emph{Release} ($\release$, \T{memory\_order\_release}).
The \emph{release} tag can be applied to stores, indicating the end of a set of shared accesses (and hence control can be ``released''
to the system).  Typically this tag is used on the setting of a flag variable, e.g., modifying the above case,
\begin{equation}
	\modxX \asgn 1 ; \modR{flag} \asgn \True
\end{equation}
now ensures that
any other
process that sees $flag = \True$ knows that the update ($\modxX \asgn 1$) preceding the change to $flag$ has taken effect.

\item
\emph{Acquire} ($\acquire$, \T{memory\_order\_acquire}).
The \emph{acquire} tag can be applied to loads, and is the reciprocal of a release: any subsequent loads will see everything the acquire
can see.  For example, continuing from above, a simple process that reads the $flag$ in parallel with the above process may be written as follows:
\begin{equation}
	f \asgn \modVar{flag}{\acquire} ; r \asgn \modxX
\end{equation}
In this case the loads are kept in order by the acquire constraint, and hence at the end of the program, in the absence of any other interference,
$f = 1 \imp r = 1$.

\item
\emph{Consume} ($\consume$, \T{memory\_order\_consume}).
The \emph{consume} tag is similar to, but weaker than, $\acquire$, in that it is intended to be partnered with a $\release$,
but
only subsequent loads that are data-dependent on the loaded value are guaranteed to see the change.  Hence,
\begin{equation}
	\modC{f \asgn flag} ; \modX{r \asgn x}
\end{equation}
does \emph{not} give $f = 1 \imp r = 1$.  However, after
\begin{equation}
	f \asgn \modC{flag} ; \modX{y \asgn f}
\end{equation}
then $f = 1 \imp y = 1$.   Data-dependence is maintained by the $\roG$ relation, and in that sense the $\consume$ constraint provides no extra reordering information
to that of $\relaxed$.  However
the intention is that a $\consume$ load indicates to the compiler that it must not lose, via optimisations, data dependencies to later instructions.
We return to $\consume$ in \refsect{consume}, in the context of expression optimisations,
but for concision in the rest of the paper we omit consideration of $\consume$,
which otherwise behaves as a $\relaxed$ constraint.

\item
\emph{Sequentially consistent} ($\seqcst$, \T{memory\_order\_seq\_cst}).
The \emph{sequentially consistent} constraint is the strongest, forcing order between $\seqcst$-tagged instructions and any other instructions.
For example, the snippet
\begin{equation}
	\modSC{x} \asgn 1; \modSC{flag} \asgn \True
\end{equation}
ensures $flag = 1 \imp x = 1$ (in fact only one instruction needs the constraint for this to hold).  This is considered a more ``heavyweight'' method for enforcing order than the acquire/release constraints.

\item
\emph{Acquire-release} ($\acqrel$, \T{memory\_order\_acq\_rel}).
This constraint is used on instructions that have both an acquire and release component, and as will be seen it is straightforward to combine them in our syntax.

\end{itemize}

\paragraph{Non-atomics}
Additionally shared data can be ``non-atomic'', \ie, variables that are not declared of type \T{atomic_*}, 
and as such cannot be directly associated with the ordering constraints above.
Programs which attempt to concurrently access shared data without any of the ordering mechanisms of \T{atomics.h}
essentially have no guarantees about the behaviour, and so we ignore such programs.
For programs that include shared non-atomic variables which are correctly synchronised, \ie, are associated with some
ordering mechanism (\eg, barriers or acquire/release flag variables) they can be treated as if relaxed, and are subject to 
potential optimisations as outlined in \refsect{expression-optimisations}.
Distinguishing between correctly synchronised shared non-atomic variables is a syntactic check that compilers carry out,
which could be carried over into our syntactic framework, but this is not directly relevant to the question of reorderings.

\subsubsection{Formalising memory order constraints}

The relationship between the ordering constraints can be thought of in terms of a reordering relationship, $\roOC$, as in the previous sections.
\OMIT{
There is an ordering on the strength of memory order induced by the constraints, namely
\begin{equation}
	\relaxed < \{\release, \acquire, \consume\} < \acqrel < \seqcst
\end{equation}
However, this is not all that helpful.
}
\begin{definition}[Memory ordering constraints]
\labeldefn{roOC}
\begin{eqnarray*}
	\roOC \asdef \{(\relaxed, \relaxed), (\relaxed, \acquire), (\release,\relaxed), (\release, \acquire)\}
\end{eqnarray*}
\end{definition}
We write $\oc_1 \roOC \oc_2$ if $(\oc_1, \oc_2) \in \roOC$, and as before $\oc_1 \nroOC \oc_2$ otherwise.
Expressing the relation as a negative is perhaps more intuitive, \ie, for all constraints $\oc$,
$\oc \nroOC \seqcst \nroOC \oc$, $\oc \nroOC \release$, and $\acquire \nroOC \oc$.
Additionally, the $\consume$ constraint is equal to a $\relaxed$ constraint for the purposes of ordering (but see \refsect{consume}), 
and $\acqrel$ is the combination of both $\acquire$ and $\release$.
An alternative presentation of the relationship of 
$\oc_1 \roOC \oc_2$
is as a grid, \ie,
\begin{equation}
\begin{array}{lcccc}
\downarrow\!\oc_1 \quad \oc_2\!\fun
		 & \relaxed & \release & \acquire & \seqcst  \\
\relaxed & \tick    & \ntick   & \tick    & \ntick  \\
\release & \tick   & \ntick   & \tick   & \ntick  \\
\acquire & \ntick   & \ntick   & \ntick   & \ntick  \\
\seqcst  & \ntick   & \ntick   & \ntick   & \ntick 
\end{array}
\end{equation}
Note that acquire loads may come before release stores, 
that is, \Clang follows the \RCmm model rather than the
the stronger \RCSCmm model in \cite{ReleaseConsistency90} (it is of course straightforward to accommodate either; see also \cite{pseqc-arXiv}).

We extend the syntax of instructions to incorporate ordering constraints, allowing
variables and fences
to be annotated with a set thereof,
as shown in \reffig{c11-ro}.
The reordering relation for instructions, considering only ordering constraints, can be defined as below.
\begin{definition}{Reordering instructions with respect to ordering constraints}
\labelmm{roOCs}
\begin{eqnarray}
	\aca \roOCs \acb 
	&\sdef&
	\getocs{\aca} \cross \getocs{\acb} \subseteq \roOC
\end{eqnarray}
\end{definition}
where the $\getocs{.}$ function extracts the ordering constraints from the expression and instructions syntax.
\begin{gather}
\notag
	\getocs{v} \sseq \ess 
	\qquad
	\getocs{\modxocs} \sseq \ocs 
	\\
	\labeleqndefn{getocs}
	\getocs{\binop e} \sseq \getocs{e} 
	\qquad
	\getocs{e_1 \binop e_2} \sseq \getocs{e_1}  \union \getocs{e_2} \\
\notag
	\getocs{\modxocs \asgn e} \sseq \ocs \union \getocs{e}
	\qquad
	\getocs{\guarde} \sseq \getocs{e}
	\qquad
	\getocs{\modAct{\fencepf}{\ocs}} \sseq \ocs
\end{gather}
For example, $\getocs{\modxR \asgn \modyA} = \{\release, \acquire\}$.
Thus $\aca \roOCs \acb$ is checked by comparing point-wise each pair of ordering constraints in $\aca$ and $\acb$.
If $\aca$ or $\acb$ has no ordering constraints then $\aca \roOCs \acb$ vacuously holds.

For example 
\begin{eqnarray*}
	\modxR \asgn \modyA \roOCs \modxX \asgn 1
	\aiff
	\getocs{\modxR \asgn \modyA} \cross \getocs{\modxX \asgn 1} \subseteq \roOC
	\\
	\aiff
	\{\release, \acquire \} \cross \{\relaxed\} \subseteq \roOC
	\\
	\aiff
	\release \roOC \relaxed \land \acquire \roOC \relaxed
	\\
	\aiff
	\False
	\quad
	\mbox{by \refdefn{roOC}}
\end{eqnarray*}

Note that a locals-only instructions, such as $r_1 \asgn r_2 * 2$, does not have any ordering constraints, and so is not affected by them,
that is $\getocs{r_1 \asgn r_2 * 2} = \ess$ and hence $r_1 \asgn r_2 * 2 \roOCs \acb$ for all $\acb$.

\begin{figure}
$x \in \AllVars 
\quad \ocs \in \power \OCtype$
\begin{eqnarray*}
	\orderingConstraint &\ttdef& 
		\relaxed \csep 
		\release \csep 
		\acquire \csep 
		\consume \csep 
		\seqcst
	\notag
	\\
	\acqrel \asdef \{\acquire, \release\}
	\\
	e \attdef v \csep \modxocs \csep \unop e \csep e_1 \binop e_2 
	\\
    \aca 
    \attdef
        \modxocs \asgn e
        \csep
        \guarde
        \csep
        \modAct{\fencepf}{\ocs}
\\
	\relfence
	\asdef
	\modR{\storefence }
	\\
	\acqfence
	\asdef
	\modA{\loadfence }
	\\
	\scfence
	\asdef
	\modSC{\fullfence }
	\\
	\Also
	\mbox{for $r \in \LocalVars$,}
	&&
	\mbox{$r$ abbreviates $\modVar{r}{\ess}$}
	\\
	\mbox{for $x \in \SharedVars$,}
	&&
	\mbox{$\modxoc$ abbreviates $\modx{\{\oc\}}$, and}
	\\
	&&
	\mbox{$x$ abbreviates $\modxX$}
	\\
	\mbox{for $\fencepf$ a fence,}
	&&
	\mbox{
	$ \modAct{\fencepf}{\oc} $
	abbreviates
	$ \modAct{\fencepf}{\{\oc\}} $
	}
\end{eqnarray*}

\Description{TODO}
\caption{
Syntax extensions for \Clang ordering constraints 
}
\labelfig{c11-ro}
\end{figure}

For convenience we define some abbreviations and conventions at the bottom of \reffig{c11-ro}:
we require every reference to a shared variable to have (at least one) ordering constraint, with the default being $\relaxed$;
when there is exactly one ordering constraint we omit the set comprehension brackets in the syntax, and typically, when the types are clear,
we abbreviate $\modxX$ to plain $x$ as $\relaxed$ is the default.
Local variables are by definition never declared ``atomic'' and hence their set of ordering constraints is always empty; hence, when
$r$ is a local variable, we abbreviate $\modVar{r}{\ess}$ to $r$.
Similarly we abbreviate ordering constraints on fences.
We can now define release, acquire and sequentially consistent fences as the combination of a fence and ordering constraint.
A ``release fence'' ($\relfence$) operates according to the $\release$ semantics above, and in addition blocks stores, hence is a combination of a $\storefence$ and $\release$ ordering,
and similarly an ``acquire fence'' ($\acqfence$) acts as a $\loadfence$ and $\acquire$ ordering.  
A ``sequentially consisitent'' fence ($\scfence$) is also defined straightforwardly; these fences map to \Clang's \T{atomic_thread_fence(...)} definition.


To complete the syntax extension
we update the syntax-based definitions for extracting variables from expressions and instructions by defining
$\wv{\modxocs \asgn e} = \{x\}$ and $\rv{\modxocs} = \{x\}$, that is, read/write variables do not include the ordering constraints
(and $\wv{\modocs{\fencepf}} = \rv{\modocs{\fencepf}} = \ess$).

\subsection{The complete reordering relation}

We can now define the \Clang memory model as the combination of the three aspects above.
\begin{definition}[Reordering of instructions in \Cmm]
\labelmm{Cmm}
\begin{eqnarray}
	\aca \ro \acb 
	&iff&
	\mbox{
		(i)~
		$\aca \roG \acb$,
		~~~~(ii)~
		$\aca \roFNC \acb$,
		~~~~and
		(iii)~
		$\aca \roOCs \acb$
	}
\end{eqnarray}
\end{definition}
Hence reordering of instructions within a \Clang program can occur provided the sequential semantics, fences, and ordering constraints are respected.
We show in later sections how this principle does not change for more complex language features, though, of course, 
the semantics and hence the analysis is correspondingly more complex.

As examples, for distinct $x,y \in \SharedVars$ and $r,r_i \in \LocalVars$,
\begin{eqnarray*}
	x \asgn 1 \ro y \asgn 1 
	&&
	r_1 \asgn x \ro r_2 \asgn y
	\labeleqn{rlx-wr-ro} 
	\\
	y \asgn 1 \nro \modxR \asgn 1 
	&but&
	\modxR \asgn 1 \ro r \asgn y
	\labeleqn{rel-w-ro} 
	\\
	r_1 \asgn \modxA \nro r_2 \asgn y 
	&but&
	r_1 \asgn y \ro r_2 \asgn \modxA 
	\labeleqn{acq-r-ro} 
	\\
	\aca \nro 
	\relfence
	& &
	\relfence
	\ro r \asgn x
	\\
	\acqfence
	\nro \aca
	&&
	x \asgn 1
	\ro
	\acqfence
\end{eqnarray*}

Hence, following the earlier definitions, we have various ways of enforcing program order
using the flag set/check (message passing) pattern from earlier.
We leave $\relaxed$ accesses of shared variables $x$ and $flag$ implicit.
\begin{equation} \begin{array}{ccc}
	\prefixC{x \asgn 1}{\prefixC{\relfence}{flag \asgn \True}}
	\arefeq
	\prefixsc{x \asgn 1}{\prefixsc{\relfence}{flag \asgn \True}}
	\\
	\prefixC{f \asgn flag}{\prefixC{\acqfence}{r \asgn x}}
	\arefeq
	\prefixsc{f \asgn flag}{\prefixsc{\acqfence}{r \asgn x}}
	\\
	\prefixC{x \asgn 1}{\modR{flag} \asgn \True}
	\arefeq
	\prefixsc{x \asgn 1}{\modR{flag} \asgn \True}
	\\
	\prefixC{f \asgn \modA{flag}}{r \asgn x}
    \arefeq
	\prefixsc{f \asgn \modA{flag}}{r \asgn x}
\end{array} \end{equation}

\section{An imperative language with reordering}
\labelsect{impro}

We now show how reordering according to the \Clang memory model
can be embedded into a more realistic imperative language that has conditionals and loops,
based on the previously described wide-spectrum language \impro \cite{SEFM21,FM18,pseqc-arXiv}.  
We give a small-step operational semantics and define trace equivalence for its notion of correctness.

\subsection{Syntax}
\newcommand{\CAS}[3]{\mathbf{cas}(#1,#2,#3)}
\newcommand{\CASx}[2]{\CAS{x}{#1}{#2}}
\newcommand{\CASxe}{\CASx{e}{e'}}
\newcommand{\CASxv}{\CASx{v}{v'}}
\newcommand{\FAA}[2]{\mathbf{faa}(#1,#2)}
\newcommand{\FAAx}[1]{\FAA{x}{#1}}

\newcommand{\IFM}[4]{\If^{#1} #2 \Then #3 \Else #4}
\newcommand{\IFm}[3]{\IFM{\mm}{#1}{#2}{#3}}
\newcommand{\IFC}[2]{\IFM{\Cmm}{#1}{#2}}
\newcommand{\IFbcm}{\IFm{b}{c_1}{c_2}}
\newcommand{\IFbcM}[1]{\IFM{#1}{b}{c_1}{c_2}}
\newcommand{\IFbcC}{\IFbcM{\Cmm}}
\newcommand{\IFbcSC}{\IFbcM{\SCmm}}

\newcommand{\IFsM}[3]{\If^{#1} #2 \Then #3}
\newcommand{\IFsm}[2]{\IFsM{\mm}{#1}{#2}}
\newcommand{\IFsC}[2]{\IFsM{\Cmm}{#1}{#2}}
\newcommand{\IFsSC}[2]{\IFsM{\SCmm}{#1}{#2}}
\newcommand{\IFsbcm}{\IFsm{b}{c}}
\newcommand{\IFsbcM}[1]{\IFsM{#1}{b}{c}}
\newcommand{\IFsbcC}{\IFsbcM{\Cmm}}
\newcommand{\IFsbcSC}{\IFsbcM{\SCmm}}

\newcommand{\WHM}[3]{\While^{#1} #2 \Do #3}
\newcommand{\WHm}[2]{\WHM{\mm}{#1}{#2}}
\newcommand{\WHbcm}{\WHm{b}{\cmdc}}
\newcommand{\WHbcM}[1]{\WHM{#1}{b}{\cmdc}}

\begin{figure}
\begin{eqnarray}
    \cmdc 
    \attdef
        \Skip 
        \cbar
        \vec{\aca}
        \cbar
        \cmdc_1 \ppseqm \cmdc_2
        \cbar
        \cmdc_1 \choice \cmdc_2
        \cbar
        \iteratecm
    \labeleqndefn{cmd}
\\
    \also
    \tau &\sdef& \guard{\True}
    \labeleqn{defn-tau}
    \\
    \cmdc_1 \ppseqc \cmdc_2
    \asdef
    \cmdc_1 \ppseq{\Cmm} \cmdc_2
    \labeleqn{defn-ppseqc}
    \\
    \cmdc_1 \bef \cmdc_2
    \asdef
    \cmdc_1 \ppseq{\SCmm} \cmdc_2
    \labeleqn{defn-ppseqs}
    \\
    \cmdc_1 \pl \cmdc_2
    \asdef
    \cmdc_1 \ppseq{\PARmm} \cmdc_2
    \labeleqn{defn-pl}
    \\
    \finiteratecm{0} \sdef \Nil
    \sspace&&\sspace
    \finiteratecm{n+1} \sdef c \ppseqm \finiteratecmn
    \labeleqndefn{finiter}
    \\
    \also
    \IFbcm
    \asdef
    \guard{b} \ppseqm \cmdc_1 \ \ \choice \ \  \guard{\neg b} \ppseqm \cmdc_2
    \labeleqndefn{if}
    \\
    \WHbcm
    \asdef
    \iterate{(\guardb \ppseqm \cmdc)}{\mm} \ \ppseqm \ \guard{\neg b}
    \labeleqn{defn-while}
	\\
	\CASxe
	\asdef
	\seqT{\guard{x = e} \asep x \asgn e'} \choice \guard{x \neq e}
    \labeleqndefn{cas}
	\\
	\modAR{\CASxe}
	\asdef
	\seqT{\guard{\modxA = e} \asep \modxR \asgn e'}  \choice \guard{\modxA \neq e}
    \labeleqn{defn-casAR}
	\\
	r \asgn \CASxe
	\asdef
	(\seqT{\guard{x = e} \asep x \asgn e'} \scomp r \asgn \True) \choice (\guard{x \neq e} \scomp r \asgn \False)
    \labeleqn{defn-r:=cas}
	\\
	r \asgn \FAAx{e}
	\asdef
	\seqT{r \asgn e \asep x \asgn x + e} 
    \labeleqn{defn-faa}
\end{eqnarray}
\caption{Syntax of \improc (building on \reffig{c11-ro})}
\labelfig{syntax}
\Description{TODO}
\end{figure}

The syntax of \improc is given in \reffig{syntax}, with
expressions and instructions remaining as shown in \reffig{c11-ro}.

\paragraph{Commands}
The command syntax \prefeqndefn{cmd} includes the terminated command, $\Nil$, 
a sequence of instructions, $\vecaca$ (allowing composite actions to be defined),
the parallelized sequential composition of two commands according to some memory model $\mm$,
$c_1 \ppseqm c_2$, 
a choice between two commands,
$c_1 \choice c_2$, 
or a \emph{parallelized iteration} of a command according to some memory model $\mm$,
$\iteratecm$.
From this language we can build an imperative language with conditionals and loops following algebraic patterns \cite{FischerLadner79,KozenKAT}.
We define the special action type $\tau$ as a $\True$ guard \refeqn{defn-tau}; this action has no observable effect and is not considered for the purposes of 
determining (trace) equivalence.
We allow the abbreviation $c_1 \ppseqc c_2$ for the case where the model parameter is \Cmm
\refeqn{defn-ppseqc}.
We also introduce abbreviations for strict (traditional) order \refeqn{defn-ppseqs}
and parallel composition
\refeqn{defn-pl},
based on the memory models $\SCmm$ and $\PARmm$ introduced earlier (\refeqndefns{mm-sc}{mm-par}).
The (parallelized) iteration of a command a finite number of times is defined inductively in \refeqndefn{finiter}.
Conditionals \refeqndefn{if} and while loops \refeqn{defn-while} can be constructed
in the usual way using guards and iteration.
We define an empty $\False$ branch conditional,
$ \IFsbcm$,
as 
$
	\IFM{\mm}{b}{\cmdc}{\Nil}
$.

\paragraph{Guards}
The use of a guard action type $\guarde$ allows established encodings of conditionals and loops as described above.  However
treating a guard as a separate action is useful in considering reorderings as well, and in particular in understanding
the interaction of conditionals with compiler optimisations: the fundamentals of reorderings involving guards are based on the principles
of preserving sequential semantics (on a single thread) as in \refsect{datadeps}, and these can be lifted to the conditional and loop command types
straightforwardly, without needing to deal with them monolithically.
Note that if a guard evaluates to $\False$ this represents a behaviour that cannot occur.

\paragraph{Composite actions}
Since we allow sequences of instructions $\vecaca$ to be the basic building block of the language,
with the intention that all instructions in the sequence are executed, in order, as a single indivisible step,
we can straightforwardly define complex instructions types such as ``compare-and-swap''.
For lists we write $\eseq$ for the empty list, $\cat$ for concatenation, and $\seqT{\ldots}$ as the list constructor.  
For notational ease we let a singleton sequence of actions $\seqT{\aca}$ just be written $\aca$ where the intended type is clear from the context.
For brevity we allow $\cat$ to accept single elements in place of singleton lists.
Note that
an instruction such as $x \asgn y$ happens as a single indivisible step, that is, the value for $y$ is fetched and written into $x$ in one step.
This is not realistic for \Clang, and as such we later (\refsect{incremental-evaluation}, see also \refappendix{indivis-vecaca}) 
show how instructions can be incrementally executed (in the above case, with the value for $x$ fetched, and only later
updating $y$ to that value).  For now we assume that anything evaluated incrementally is written out to make the granularity explicit, for example,
the above assignment becomes $tmp \asgn y ; x \asgn tmp$, for some fresh identifier $tmp$; see further discussion in \refsect{incremental-reasoning}.

The composite compare-and-swap command $\CASxe$ 
is defined as a choice between determining that $x = e$ and updating $x$ to $e'$
in a single indivisible step, or determining that $x \neq e$ \prefeqndefn{cas}.  This can be generalised to include ordering constraints, for instance,
in the $\acqrel$ case \refeqn{defn-casAR}.
Updates to a local variable can be included to record the result \refeqn{defn-r:=cas}.
It is of course straightforward to define other composite commands, such as fetch-and-add \refeqn{defn-faa}, which map to \Clang's inbuilt functions 
such as \atmcmpexchstrexpl{\ldots} and \atmfetchaddexpl{\ldots}.
We show these to emphasise that the definition of reordering does not have to change whenever a new instruction type is added; we can easily 
syntactically extract the required elements.
For instance, given the above definitions, we can determine the following.
\begin{equation}
	\CASxv \ro r \asgn y
	\quad \mbox{but} \quad
	\CASxv \nro r \asgn x
	\quad \mbox{and} \quad
	\modAR{\CASxv} \nro r \asgn y
\end{equation}
since $\wv{\CASxv} = \{x\}$, and $\acquire \in \getocs{\modAR{\CASxv}}$.
We lift the write/read variables of instructions to commands straightforwardly (see \refappendix{syntax-lifting}),
and as such
reordering on commands can be calculated, for example,
\begin{equation}
\labeleqn{c11ro-if-z}
	(\IFC{r > 0}{x \asgn 1}{y \asgn 1}) \ro z \asgn 1
\end{equation}
since the assignment to $z$ is not dependent on anything in the conditional statement.  

\subsection{Small-step operational semantics}
\labelsect{semantics}

\begin{figure}

\def\colwidthC{0.26\textwidth}
\def\colwidthA{0.35\textwidth}
\def\colwidthB{0.873\textwidth}
\ruledefNamed{\colwidthC}{Action}
{action-list}{
	\vec{\aca} \tra{\vec{\aca}} \Skip
}
\ruledefNamed{\colwidthA}{Choice}
{choice}{
	\cmdc \choice \cmdd \tra{\tau} \cmdc
	\qquad
	\cmdc \choice \cmdd \tra{\tau} \cmdd
}
\ruledefNamed{\colwidthC}{Iterate}
{iterate}{
	\iteratecm
	\tra{\tau} 
	\finiteratecmn
}

\ruledefNamed{0.91\textwidth}{\Pseqc}
{pseqcA}{
    \Rule{
        c_1 \tra{\aca} c_1'
    }{
        c_1 \ppseqm c_2 \tra{\aca} c_1' \ppseqm c_2
    }
\qquad
		\Nil \ppseqm c_2 \tra{\tau} c_2 
\qquad
%
%
	    \Rule{
        c_2 \tra{\acb} c_2'
        \qquad
        c_1 \rom \acb
    }{
        c_1 \ppseqm c_2 \tra{\acb} c_1 \ppseqm c_2'
    }
}

\caption{Operational semantics}
\labelfig{semantics}
\Description{TODO}
\end{figure}

The semantics of \impro is given in \reffig{semantics} (an adaptation of \cite{pseqc-arXiv}).
A step (action) is a sequence of instructions which are considered to be executed together, without interference;
in the majority of cases the sequences (actions) are singleton.
\refrule{action-list} places a list of instructions into the trace as a single list (so a trace is a list of lists).
\refrule{choice} states that a nondeterministic choice is resolved silently to either branch (in the rules we let $\tau$
abbreviate the action $\seqT{\tau}$; alternatively we could define $\tau$ as the empty sequence).
\refrule{iterate} nondeterministically picks a finite number ($n$) of times to iterate $\cmdc$,
where finite iteration is defined in \refeqndefn{finiter}.

\refrule{pseqcA} is the interesting rule, which generalises the earlier rule for prefixing:
a command $\cmdc_1 \ppseqm \cmdc_2$ can take a step of $c_1$, or begin execution of $c_2$ if $c_1$ is terminated,
or execute a step $\acb$ of $c_2$ if $\acb$ reorders with $c_1$.
Reordering of an action (list of instructions) with a command is lifted from reordering on instructions straightforwardly
(\refappendix{syntax-lifting}).

As an example, from \refeqn{c11ro-if-z} we can deduce the following.
\begin{equation*}
	(\IFC{r > 0}{x \asgn 1}{y \asgn 1}) \ppseqc z \asgn 1
	\ttra{z \asgn 1}
	(\IFC{r > 0}{x \asgn 1}{y \asgn 1}) \ppseqc \Nil
\end{equation*}
That is, the assignment to $z$ can occur before the conditional; this represents the compiler deciding to move
the store before the test since it will happen on either path.

\subsection{Trace semantics}
\labelsect{trace-semantics}

Given a program $\cmdc_0$ the operational semantics generates a \emph{trace}, that is, a finite sequence of steps
$\cmdc_0 \tra{\aca_1} \cmdc_1 \tra{\aca_2} \ldots$ where the labels in the trace are actions%
\footnote{
Since infinite traces do not add anything of
special interest to the discussion of weak memory models over and above finite traces,
we focus on finite traces only to avoid the usual extra complications that infinite traces introduce.
}.
We write $c \xtra{t} c'$ to say that $c$ executes the actions in trace $t$ and evolves to $c'$,
inductively constructed below.
The base case for the induction is given by $c \xtra{\eseq} c$.
\begin{eqnarray}
    c \tra{\aca} c' \land \Visa \land c' \xtra{t} c'' &\imp& c \xtra{\aca \cat t} c''
    \labeleqn{ptrace-visible}
    \\
    c \tra{\aca} c' \land \Silenta \land c' \xtra{t} c'' &\imp& c \xtra{t} c''
    \labeleqn{ptrace-silent}
    \\
    \also
    \Meaning{c} ~~\sdef~~  \{t | c \xtra{t} \Nil \}
    \labeleqn{def-meaning}
    \quad\qquad
    c \refsto d &\sdef & \Meaning{d} \subseteq \Meaning{c}
    \labeleqndefn{refsto}
    \quad\qquad
    c \refeq d ~~\sdef~~  c \refsto d \land d \refsto c
    \labeleqn{def-refeq}
\end{eqnarray}
Traces of \emph{visible} actions are accumulated into the trace (using `$\cat$' for list concatenation) \refeqn{ptrace-visible},
and \emph{silent} actions (such as $\tau$) are discarded \refeqn{ptrace-silent}, \ie, we have a ``weak'' notion of equivalence \cite{CCS}.
A visible action is any action with a visible effect, for instance, fences, assignments, and guards with free variables.
Silent actions include any guard which is $\True$ in any state \emph{and} contains no free variables; for instance, $\guard{0 = 0}$ is silent while
$\guard{x = x}$ is not.
A third category of actions, $\Infeasa$, includes exactly those guards $\guard{b}$ where
$b$ evaluates to $\False$ in every state.  This includes actions such as $\guard{x \neq x}$, with the simplest example being $\guard{\False}$, which we abbreviate
to $\Magic$ \cite{Morgan:94}.  Any behaviour of $c$ in which an infeasible action occurs does not result in a finite terminating trace, and hence is excluded from
consideration.  Such behaviours include those where a branch is taken that eventually evaluates to $\False$.

The meaning of a command $\cmdc$ is its set of all possible terminating behaviours $\Meaning{c}$, leading to the usual (reverse) subset inclusion notion of refinement, 
where $c \refsto d$ if every behaviour of $d$ is a behaviour of $c$; our notion of command \emph{equivalence} is refinement in both directions \refeqn{def-meaning}.

From the semantics we can derive the usual properties such as
$\Meaning{c \choice d} = \Meaning{c} \union \Meaning{d}$
and
$\Meaning{c \scomp d} = \Meaning{c} \cat \Meaning{d}$ (overloading `$\cat$' to mean pairwise concatenation of lists).
We can use trace equivalence to define a set of rules for manipulating a program under refinement or equivalence;
we elucidate a general set of these in the following section.

\section{Reduction rules}
\labelsect{reduction}

Using the notion of trace equivalence
the following properties can be derived for the language, and verified in Isabelle/HOL \cite{pseqc-arXiv}.
The usual properties of commutativity, associativity, \etc, for the standard operators of the language hold, and so we focus below on 
properties involving \pseqc.
\begin{eqnarray}
    \cmdc_1 \ppseqm \cmdc_2
    &\refsto&
    \cmdc_1 \bef \cmdc_2
    \labellaw{keep-order}
    \\
    c \choice d
    \ssrefsto
    c
	&&
    c \choice d
    \ssrefsto
    d
    \labellaw{chooseL}
\\
    (\aca \bef \cmdc) \pl \cmdd
    \sspace \refsto \sspace
    \aca \bef (\cmdc \pl \cmdd)
	&&
    \cmdc \pl (\acb \bef \cmdd)
    \sspace \refsto \sspace
    \acb \bef (\cmdc \pl \cmdd)
    \labellaw{fix-interleaving}
    \\
    (c_1 \choice c_2) \pl d 
    &\refeq&
    (c_1 \pl d) \choice (c_2 \pl d )
    \labellaw{dist-choice-pl}
	\\
    (c_1 \ppseqm c_2) \ppseqm c_3
    &\refeq&
    c_1 \ppseqm (c_2 \ppseqm c_3)
    \labellaw{pseqc-assoc}
    \\
    \aca \nro \acb
    \entails \sspace \sspace
    \aca \ppseqc \acb 
    &\refeq&
    \aca \bef \acb
    \labellaw{2actions-keep-order}
    \\
    \aca \ro \acb
    \entails \sspace \sspace
    \aca \ppseqc \acb 
    & \refeq &
    \aca \pl \acb 
\labellaw{2actions-reduce}
    \\
    \aca \ro \acb
    \entails \sspace \sspace
    \aca \ppseqc (\acb \bef \cmdc)
    & \refsto &
    \acb \bef (\aca \ppseqc \cmdc)
	\labellaw{reorder-action}
\end{eqnarray}

\reflaw{keep-order} states that a \pseqc can always be refined to a strict ordering.
\reflaw{chooseL} states that a choice can be refined to its left branch (a symmetric rule holds for the right branch).
\reflaw{fix-interleaving} says that the first instruction of either process in a parallel composition can be the first step of the composition as a whole.
Such refinement rules are useful for elucidating specific reordering and interleavings of parallel processes that lead to particular behaviours
(essentially reducing to a particular trace).
\reflaw{dist-choice-pl} is an equality, which states that if there is nondeterminism in a parallel process the effect can be understood by lifting the nondeterminism
to the top level; such a rule is useful for the application of, for instance, Owicki-Gries reasoning (\refsect{owicki-gries}).
\reflaw{pseqc-assoc} states that \pseqc is associative (provided the same model $\mm$ is used on both instances).
\reflaws{2actions-keep-order}{2actions-reduce} are special cases 
where, given two actions $\aca$ and $\acb$, if they cannot reorder then they are executed in order, 
and if they can it is as if they are executed in parallel.
\reflaw{reorder-action} straightforwardly promotes action $\acb$ of $\acb \bef \cmdc$ before $\aca$, and depending on the structure of $\cmdc$, further of its actions may be reordered before $\aca$.

\renewcommand{\svb}{\sv{b}}

We can extend these rules to more complex structures.
\begin{eqnarray}
	c_1 \ppseqc \scfence \ppseqc c_2
	\arefeq
	c_1 \bef \scfence \bef c_2
	\labellaw{reduce-scfence}
	\\
	\Also
	\IFsm{y \geq 0}{x \asgn y}
	\arefeq
	\IFsM{\SCmm}{y \geq 0}{x \asgn y}
	\labellaw{reduce-if-sc}
	\\
	b \notin \fve \union \fvf
	\impS
	\qquad
	\notag
	\\
	\IFM{\Cmm}{b}{x \asgn e}{y \asgn f}
	\arefeq
	(\guardb \pl x \asgn e) 
		\choice
	(\guardnb \pl y \asgn f) 
	\labellaw{reduce-if-C-pl}
\end{eqnarray}
\reflaw{reduce-scfence} shows that (full) fences enforce ordering.
\reflaw{reduce-if-sc} gives a special case of a conditional where the $\True$ branch depends on a shared variable in the condition,
in which case the command is executed in-order (assuming model $\mm$ respects data dependencies).
\reflaw{reduce-if-C-pl} elucidates the potentially complex case of reasoning about conditionals in which there are no dependencies: 
theoretically the compiler could allow inner instructions to appear to be executed before the evaluation of the condition.

Assuming $x \in \SharedVars$, $b \in \LocalVars$, and that $x$ is independent of commands $c_1$ and $c_2$ (\ie,
$
	x \notin \wvc \union \wvd
$), we can derive the following.
\begin{eqnarray}
	(\IFbcC) \ppseqc x \asgn v
	\arefeq
	(\IFbcC) \pl x \asgn v
	\\
	\IFC{b}{(c_1 \ppseqc x \asgn v)}{(c_2 \ppseqc x \asgn v)}
	\arefeq
	(\IFbcC) \pl x \asgn v
\end{eqnarray}
These sort of structural rules help elucidate consequences of the memory model for programmers at a level that is easily understood.

The next few laws allow reasoning about ``await''-style loops, as used in some lock implementations.
\begin{eqnarray}
	\iterate{\aca}{\mm}
	\arefeq
	\iterate{\aca}{\SCmm}
	\labellaw{iterate-one-action}
	\\
	\WHM{\Cmm}{b}{\Nil}
	\arefeq
	\iterate{\guardb}{\Cmm} \ppseqc \guard{\neg b}
	\labellaw{empty-while}
	\\
	\svb \neq \ess
	\impS
	\WHM{\Cmm}{b}{\Nil}
	\arefeq
	\WHM{\SCmm}{b}{\Nil}
	\labellaw{reduce-empty-while}
\end{eqnarray}
\reflaw{iterate-one-action} states that a sequence of repeated instructions in order, according to any model $\mm$,
can be treated as executing in order.
Using this property, and others like
$
	\Nil \ppseqm c 
	\refeq
	c
$
(we omit such trivial laws that do not involve reordering)
we can deduce \reflaw{reduce-empty-while}, that states that a spin-loop that polls a shared variable
can be treated as if executed in strict order .

\paragraph{Lifting to commands}

So far we have considered reduction laws that apply to relatively simple cases involving individual actions.
Lifting the concepts to commands is nontrivial in general, that is,
for $c_1 \ppseqc c_2$ there could be arbitrary dependencies between $c_1$ and $c_2$ which mean they partially overlap 
perhaps with pre- or post-sequences of non-overlapping instructions.
Here associativity (\reflaw{pseqc-assoc}) may help, allowing rearrangement of the text to split into sections, for instance,
\begin{equation}
	(c_1 \ppseqc \scfence \ppseqc c_2) \ppseqc (c_3 \ppseqc \scfence \ppseqc c_4)
	\ssrefeq
	c_1 \bef \scfence \bef (c_2 \ppseqc c_3) \bef \scfence \bef c_4
\end{equation}

\newcommand{\cmdmixsym}{\langle\!\pl}
\newcommand{\cmdmix}[3]{#1 \overset{#3}{\cmdmixsym} #2}
\newcommand{\cmdmixm}[2]{\cmdmix{#1}{#2}{\mm}}
\newcommand{\cmdmixc}[2]{\cmdmix{#1}{#2}{\Cmm}}
\newcommand{\cmdmixn}[2]{\cmdmix{#1}{#2}{}}

\newcommand{\allowall}[2]{\cmdmixn{#1}{#2}}
\newcommand{\allowallcd}{\allowall{c}{d}}

\newcommand{\nreducesToNil}[1]{#1 \not\refeq \Nil}
\newcommand{\nreducesToNilc}{\nreducesToNil{\cmdc}}

\newcommand{\blocknext}[2]{#1 \mathrel{\overlay{$\,\cross$}{$\leftarrow$}} #2}
\newcommand{\blocknextc}[1]{\blocknext{\cmdc}{#1}}
\newcommand{\blocknextcd}{\blocknextc{\cmdd}}

\newcommand{\blockall}[2]{#1 \red{\mathrel{\overlay{$\cross$}{\kern 0.06em $\pl$}}} #2}
\newcommand{\blockallc}[1]{\blockall{\cmdc}{#1}}
\newcommand{\blockallcd}{\blockallc{\cmdd}}

We now consider the cases where two commands are completely independent, and where one is always blocked by the other.
Independence can be established by straightforwardly lifting $\ro$ from instructions to commands, using 
lifting conventions in \refappendix{syntax-lifting}.
The key property is that 
\begin{equation}
	c \ro d \ssimp
	c \ppseqc d
	\refeq
	c \pl d
\end{equation}
This follows from partial execution of the semantics: at no point is there an instruction within $c$ 
that prevents an instruction in $d$ from executing
(a trivial case is where $c = \Nil$).

To define the converse case, where $c$ always prevents $d$ from executing, consider the following.
We write $\blockallcd$ to indicate that at no point can $d$ reorder during the execution of $c$
(a trivial case is where $c = \fullfence$).
If $\blockallcd$ then one can treat them as sequentially composed.
We define these concepts with respect to the operational semantics.
\begin{eqnarray}
	\blocknextcd
	\asdef
	\forall \aca, d' @ 
		(d \tra{\aca} d' \land \Visa) \imp
				c \nro \aca
	\labeleqn{defn-blocknext}
	\\
	\blockallcd
	\asdef
	\forall t, c' @
		(c \xtra{t} c' \land \nreducesToNil{c'}) \imp
			\blocknext{c'}{d}
	\labeleqn{defn-blockall}
\end{eqnarray}
We write 
$\blocknextcd$
when all immediate next possible steps of $d$ are blocked by $c$
\refeqn{defn-blocknext}.
Thus $\blockallcd$ holds when, after any unfinished partial execution of $c$ via some trace $t$ resulting in $c'$, 
$c'$
continues to block $d$
\refeqn{defn-blockall}.  
We exclude from the set of partial executions the cases where execution is effectively finished, \ie,
when $c'$ is $\Nil$ or equivalent
(otherwise $\blockallcd$ would never hold as the final $\Nil$ allows reordering).
From these we can derive:
\begin{eqnarray}
	\blockallcd
	\aimp
	c \ppseqc d
	\refeq
	c \bef d
	\labellaw{blockall-cd}
\end{eqnarray}
Such reduction may lift to more complex structures, for instance,
the following law is useful
for sequentialising loops when each iteration has no overlap with the preceding or succeeding ones.
\begin{eqnarray}
	\blockall{c}{c} \imp \iteratecm \refeq \iteratec{\SCmm}
	\labellaw{blockall-iterate}
\end{eqnarray}

We now show how the application of reduction laws to eliminate (elucidate) the allowed reorderings in terms of 
the familiar sequential and parallel composition enables the application of standard techniques for analysis.

\section{Applying concurrent reasoning techniques}
\labelsect{reasoning}

In this section we show how the reduction rules in the previous section
can be used as the precursor to the application of already established techniques for proving properties of concurrent programs.
For \emph{correct}, real algorithms influenced by weak memory models there will be no reordering except where it does not violate whatever the desired property is
(be that a postcondition, or a condition controlling allowed interference).
In such circumstances our framework supports making the allowed reorderings explicit in the structure of the program, with a corresponding influence on the proofs of correctness.
Where a violation of a desired property occurs due to reordering, the framework also supports the construction of the particular reordering that leads to the problematic
behaviour.  

The \impro language includes several aspects which do not exist in a standard imperative language, namely, fences and the ordering constraints that annotate variables
and fences.  We start with a simple syntactic notion of equivalence modulo these features, reducing a program in \impro into its underlying `plain' equivalent
(fences and ordering constraints have no effect on sequential semantics directly).
We then explain how techniques such as Owicki-Gries and rely/guarantee may be applied.

\newcommand{\nhtrip}[3]{\neg \htrip{#1}{#2}{\neg #3}}
\newcommand{\stateeq}{\overset{\sigma}{=}}
\newcommand{\eqmodoc}{\overset{\mathsmaller{plain}}{\approx}}

\newcommand{\stripoc}[1]{{#1}_{\sf std}}
\newcommand{\stripocc}{\stripoc{\cmdc}}

\newcommand{\xasgne}{\modxocs \asgn e}
\subsection{Predicate transformer semantics and weakest preconditions}
\labelsect{eff-semantics}

The action-trace semantics of \refsect{trace-semantics} can be converted into a typical pairs-of-states semantics straightforwardly, as shown in \reffig{eff-semantics}.
Let the type
$\State$ be the set of total mappings from variables to values, and let the effect function $\effword: \Instr \fun \power (\State \cross \State)$ return a relation on states
given an
instruction.
We let `$\idn$' be the identity relation on states, and
given a Boolean expression $e$ we write $\sigma \in e$ if $e$ is $\True$ in state $\sigma$,
and $\evalse$ for the evaluation of $e$ within state $\sigma$ (note that ordering constraints are ignored for the purposes of evaluation).
The effect of an assignment $\modxocs \asgn e$ is a straightforward update 
of $x$ to the evaluation of $e$ \prefeqndefn{effa}, where $\Update{\sigma}{x}{v}$ is $\sigma$ overwritten so that $x$ maps to $v$.
A guard $\guarde$ is interpreted as a set of pairs of identical states that satisfy $e$ \prefeqndefn{effg},
giving trivial cases
$\eff{\tau} = \idn$ and $\eff{\Magic} = \ess$.
A fence $\fencepf$ has no affect on the state \prefeqndefn{efff}.
Conceptually, mapping $\effword$ onto an action trace $t$ 
yields a sequence of relations corresponding to a set of sequences of pairs of
states in a standard Plotkin-style treatment \cite{Plotkin}.
We can lift $\effword$ to traces by composing such a sequence of relations, which is defined recursively in \refeqn{efft-def},
and the effect of a command is given by the union of the effect of its traces \refeqn{effc-def}.

\newcommand{\bigcomp}{{\mathlarger{\comp}}}

\begin{figure}

\begin{eqnarray}
    \eff{\modxocs \asgn e} \aeq \{(\sigma, \Update{\sigma}{x}{\evalse})\}
	\labeleqndefn{effa}
    \\
    \eff{\guarde}  \aeq \{(\sigma, \sigma) | \sigma \in e\}
	\labeleqndefn{effg}
    \\
    \eff{\fencepf} \aeq \idn
	\labeleqndefn{efff}
    \\
    \also
    \eff{\eseq} \aeq \idn
    \\
    \eff{a\#t} \aeq \effa \comp \eff{t}
    \labeleqn{efft-def}
    \\
    \effc  \aeq \bigcup \{\eff{t} | t \in \Meaning{c} \}
    \labeleqn{effc-def}
    \\
    \Also
    \wpcq \asdef \{\sigma | \forall \sigma' @ (\sigma, \sigma') \in \effc \imp \sigma' \in q \}
	\labeleqndefn{wpre}
\end{eqnarray}
\caption{Sequential semantics}
\Description{TODO}
\labelfig{eff-semantics}
\end{figure}
The predicate transformer for weakest precondition semantics is given in \refeqndefn{wpre}.
A predicate is a set of states, so that
given a command $\cmdc$ and predicate $q$, $\wpcq$ returns the set of (pre) states $\sigma$ where every post-state $\sigma'$ related to $\sigma$ by $\effc$ satisfies $q$
(following, \eg, \cite{DijkstraScholten90}).
We define Hoare logic judgements with respect to this definition
(note that we deal only with partial correctness as we consider only finite traces).
From these definitions we can derive the standard rules of weakest preconditions and Hoare logic for commands such as
nondeterministic choice and sequential composition,
but there are no general compositional rules for \pseqc.

Trace refinement is related to these notions as follows.
\begin{theorem}[Refinement preserves sequential semantics]
\labelth{refsto-eff}
Assuming $c \refsto c'$ then
\begin{equation*}
    \eff{c'} \subseteq \effc 
    \qquad and \qquad
    \wpcq \imp \wpre{c'}{q} 
\end{equation*}
\end{theorem}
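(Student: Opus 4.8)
The plan is to unfold the definitions of $\Meaning{-}$ and $\effword$ and observe that the claim for $\effword$ is an immediate consequence of the trace-inclusion $\Meaning{c'}\subseteq\Meaning{c}$, after which the weakest-precondition part follows purely formally. First I would recall that by \refeqn{def-meaning} and \refeqndefn{refsto}, $c\refsto c'$ means exactly $\Meaning{c'}\subseteq\Meaning{c}$. Then by \refeqn{effc-def}, $\effc = \bigcup\{\eff{t}\mid t\in\Meaning{c}\}$ and similarly for $c'$; since we are taking a union over a smaller index set, $\eff{c'}\subseteq\effc$ follows directly. The only subtlety to check is that $\eff{t}$ is well-defined for each individual trace $t$ and does not depend on which command generated it — but this is clear from \refeqn{efft-def}, which defines $\eff{t}$ by recursion on the list $t$ alone, independent of any command.

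For the second conjunct, I would argue from the definition of $\wpcq$ in \refeqndefn{wpre}. Unfolding, $\sigma\in\wpcq$ iff for all $\sigma'$, $(\sigma,\sigma')\in\effc$ implies $\sigma'\in q$. Since $\eff{c'}\subseteq\effc$, any pair $(\sigma,\sigma')\in\eff{c'}$ is also in $\effc$, so if $\sigma$ satisfies the universally quantified condition relative to $\effc$ it satisfies it relative to the smaller relation $\eff{c'}$ as well. Hence $\wpcq\subseteq\wpre{c'}{q}$, i.e.\ $\wpcq\imp\wpre{c'}{q}$ as predicates. This is the standard monotonicity-of-$\mathsf{wp}$-under-relation-inclusion observation and requires no case analysis on the structure of $c$ or $c'$.

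I do not expect any real obstacle here; the theorem is essentially a bookkeeping lemma that licenses transporting sequential (Hoare-style) reasoning across the trace-refinement order, which is the whole point of the reduction calculus in \refsect{reduction}. The one place where care is warranted is making sure the semantics of $\eff{-}$ on traces is genuinely compositional over list concatenation in the way \refeqn{efft-def} suggests (so that nothing about a trace's provenance matters), and that silent actions are handled consistently — but since silent actions are discarded before a trace enters $\Meaning{c}$ (by \refeqn{ptrace-silent}), and $\eff{\tau}=\idn$ is a unit for $\comp$ anyway, there is no discrepancy. I would therefore present the proof in two short lines: first the inclusion of effect relations from the inclusion of trace sets, then monotonicity of $\wpword$.
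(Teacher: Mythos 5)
Your proposal is correct and is exactly the argument the paper intends: the paper's proof is just ``straightforward from definitions,'' and your unfolding of $\Meaning{\cdot}$, \refeqn{effc-def}, and \refeqndefn{wpre} fills in precisely those definitional steps. Nothing further is needed.
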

\begin{proof}
Straightforward from definitions.
\end{proof}

The action-trace semantics can be converted into a typical pairs-of-states semantics straightforwardly, based on
the effect function. 
The relationship with standard Plotkin style operational semantics \cite{Plotkin} is straightforward,
\ie,
\begin{equation*}
\mbox{
if
$
		c \tra{\aca} c' 
		$
and
$
		(\sigma,\sigma') \in \effa
$
then
$
		\langle c, \sigma \rangle \tra{} \langle c',\sigma' \rangle
$
}
\end{equation*}
The advantage of our approach is that the syntax of the action $\aca$ can be used to reason about allowed reorderings using 
\refrule{pseqcA}, whereas in general one cannot reconstruct or deduce an action from a pair of states.

\subsection{Relating to standard imperative languages}

\newcommand{\impsynequiv}{imperative-syntax-equivalent\xspace}

The language \impro extends a typical imperative language with four aspects which allow it to consider weak memory models:
ordering constraints on variables; fence instructions; \pseqc; and \piter.
The first two have no direct effect on the values of variables (the sequential semantics essentially ignores them), while the second two affect
the allowed traces and hence indirectly affect the values of variables.  However, both \pseqc and \piter can be instantiated to correspond to usual notions of execution;
hence we consider a \emph{plain} subset of \impro which maps to a typical imperative program.

\newcommand{\cmdcpl}{\plain{\cmdc}}

\begin{definition}[Plain imperative programs]
\labeldefn{plain}
A command $\cmdc$ of \impro is \emph{plain} if:
\begin{enumerate}[i]
\item
all instances of \pseqc in $\cmdc$ are parameterised by either $\SCmm$ or $\PARmm$, \ie, correspond to standard sequential or parallel composition;
and
\item
all instances of parallelized iteration in $\cmdc$ are parameterised by $\SCmm$, \ie, loops are executed sequentially.
\end{enumerate}
\end{definition}

\begin{definition}[Imperative syntax equivalence]
\labeldefn{imp-syntax-equiv}
Given a plain command $\cmdc$ in \impro, a command $\cmdc'$ in a standard imperative language (which does not contain memory ordering constraints on variables or fences),
is \emph{imperative-syntax-equivalent} to $\cmdc$ if
$c$ and $c'$ are structurally and syntactically equivalent except that:
\begin{enumerate}[i]
\item
all variable references in $\cmdc$ (which are of the form $\modxocs$ for some set of ordering constraints $\ocs$) appear in $\cmdc'$ as simply $x$;
and
\item
no fence instructions are present in $\cmdc'$, that is, they can be thought of as no-ops and removed.
\end{enumerate}
\end{definition}
We write $c \eqmodoc d$ if $c$ is \impsynequiv to $d$, or if there exists a $c'$ where $c \refeq c'$ and $c'$ is \impsynequiv to $d$.

For example, the following two programs are \impsynequiv
(where `$\scomp$' in the program on the right should be interpreted as as standard sequential composition, \ie, `$\ppseqs$' or `$\bef$' in this paper).
\begin{equation}
	\modxR \asgn 1 \ppseqs \scfence \ppseqs \modyX \asgn 1 \ppseq{\PARmm} r \asgn \modzA
	\quad 
	\eqmodoc
	\quad
	x \asgn 1 \scomp y \asgn 1
	\pl
	r \asgn z
\end{equation}
The program on the left is plain because all instances of \pseqc correspond to sequential or parallel execution.

Note that there is no imperative-syntax-equivalent form unless all instances of \pseqc have been eliminated/reduced to sequential or parallel forms.
A structurally typical case is the following, where $\cmdc$ reduces to a straightforward imperative-syntax-equivalent program.
\begin{eqnarray}
\labeleqn{2proc-fence-example}
	\hskip -5mm
	(\aca_1 \ppseqc \scfence \ppseqc \aca_2 )
	\pl
	(\acb_1 \ppseqc \scfence \ppseqc \acb_2 )
	\aeq
	(\aca_1 \bef \scfence \bef \aca_2 )
	\pl
	(\acb_1 \bef \scfence \bef \acb_2 )
	\\
	&\eqmodoc&
	(\aca_1 \bef \aca_2 )
	\pl
	(\acb_1 \bef \acb_2 )
\end{eqnarray}
For reasoning purposes one can use the latter program rather than the former.%
\footnote{Strictly speaking we should not use $\aca_1$, \etc, on both sides of $\eqmodoc$ because the plain version of $\aca_1$ does not contain ordering constraints on variables;
for brevity we ignore the straightforward modifications required in this case.
}
This is because for a plain program in \impro the semantics correspond to the usual semantic interpretation of imperative programs, ignoring ordering constraints and treating fences as no-ops.
\begin{theorem}[Reduction to standard imperative constructs]
\labelth{imp-syntax-equiv}
If $\cmdc \eqmodoc \cmdc'$, then $\effc$ is exactly equal to the usual denotation of $\cmdc'$ as a relation (or sequence of relations) on states;
and hence any state-based property of $\cmdc$ based on $\effc$ can be established for $\cmdc'$ that uses a standard denotation for programs.
\end{theorem}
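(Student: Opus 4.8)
The plan is to prove the theorem in two stages, mirroring the definition of $\eqmodoc$. First I would handle the case where $\cmdc$ is itself already \impsynequiv to $\cmdc'$ (so there is no intermediate $\cmdc'$ in the sense of $\refeq$); then I would discharge the general case by invoking \refth{refsto-eff}. For the base case, I would proceed by structural induction on the plain command $\cmdc$ (using \refdefn{plain}), showing at each step that $\effc$ coincides with the standard relational denotation of the corresponding construct in $\cmdc'$. The cases for $\Nil$, a single action $\vecaca$, and nondeterministic choice $\cmdc_1 \choice \cmdc_2$ are immediate from the derived trace laws $\Meaning{c \choice d} = \Meaning{c} \union \Meaning{d}$ together with \refeqn{effc-def}, and from the effect clauses in \reffig{eff-semantics} (in particular, a fence contributes $\idn$ by \refeqndefn{efff}, and an ordering-constrained assignment $\modxocs \asgn e$ behaves exactly as $x \asgn e$ by \refeqndefn{effa}, since ordering constraints are ignored in $\evalse$). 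The two cases that carry the content are $\ppseqm$ and $\iteratecm$.

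For $\cmdc_1 \ppseqm \cmdc_2$ with $\cmdc$ plain, $\mm$ is either $\SCmm$ or $\PARmm$ by \refdefn{plain}(i). When $\mm = \SCmm$, \refeqndefn{mm-sc} says no reordering is ever permitted, so \refrule{pseqcA} degenerates to ordinary prefixing, giving $\Meaning{\cmdc_1 \bef \cmdc_2} = \Meaning{\cmdc_1} \cat \Meaning{\cmdc_2}$ (the derived law noted after \reffig{eff-semantics}), and hence $\eff{\cmdc_1 \bef \cmdc_2} = \eff{\cmdc_1} \comp \eff{\cmdc_2}$ using \refeqn{efft-def}; this is exactly the relational denotation of sequential composition, and the induction hypothesis on $\cmdc_1,\cmdc_2$ finishes it. When $\mm = \PARmm$, \refeqndefn{mm-par} permits all reorderings, so the traces of $\cmdc_1 \pl \cmdc_2$ are precisely the interleavings of traces of $\cmdc_1$ and $\cmdc_2$; composing the effect along any interleaving and taking the union over all of them yields the standard relational semantics of parallel composition, again closing under the induction hypothesis. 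For $\iteratecm$ with $\mm = \SCmm$ (forced by \refdefn{plain}(ii)), \refrule{iterate} unfolds to $\finiteratecmn$ for some $n$, which by \refeqndefn{finiter} is an $n$-fold $\SCmm$-sequential composition of $\cmdc$; by the $\ppseqm$ case this has effect $\eff{\cmdc}^{\,n}$ (relational power), and the union over all $n$ is the reflexive-transitive-closure-style denotation of the loop, matching the standard meaning of the corresponding construct in $\cmdc'$.

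Finally, for the general case of $\eqmodoc$, suppose $\cmdc \refeq \cmdc''$ with $\cmdc''$ \impsynequiv to $\cmdc'$. By \refeqn{def-refeq}, $\cmdc \refsto \cmdc''$ and $\cmdc'' \refsto \cmdc$, so \refth{refsto-eff} gives $\eff{\cmdc''} \subseteq \eff{\cmdc}$ and $\eff{\cmdc} \subseteq \eff{\cmdc''}$, i.e.\ $\effc = \eff{\cmdc''}$; the base case then identifies $\eff{\cmdc''}$ with the standard denotation of $\cmdc'$, and the weakest-precondition / Hoare-triple consequence follows since $\wpcq$ is defined purely from $\effc$ via \refeqndefn{wpre}. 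The main obstacle I anticipate is making the $\PARmm$ (and hence the inductive interleaving) bookkeeping fully precise: one must argue that \refrule{pseqcA} under $\PARmm$ generates exactly the set of interleavings — neither more (no spurious extra behaviours) nor fewer (every interleaving is reachable) — and that mapping $\effword$ over an interleaving and unioning recovers the usual parallel-composition relation; nested $\ppseqm$ and the associativity law \reflaw{pseqc-assoc} help flatten the structure, but care is needed so that the state-level semantics genuinely does not depend on which interleaving witnesses a given pair of states, which is where the restriction to \emph{state-based} properties (as opposed to trace-based ones) in the statement is doing real work.
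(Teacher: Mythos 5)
Your proposal is correct and follows essentially the same route as the paper's (much terser) proof: the paper simply observes that ordering constraints have no state effect, fences denote the identity, \refrule{pseqcA} degenerates to sequential/parallel composition under $\SCmm$/$\PARmm$, and \refrule{iterate} with $\SCmm$ is the standard loop — exactly the content you organise as a structural induction. Your explicit handling of the $\refeq$ part of $\eqmodoc$ via \refth{refsto-eff} is left implicit in the paper, but it is the natural way to discharge that case and adds no new idea beyond what the paper intends.
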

\begin{proof}
From \refdefn{imp-syntax-equiv} and \reffig{eff-semantics} we can see that 
i) ordering constraints have no effect on states;
ii) fence instructions are equivalent to a no-op (in terms of states) and hence can be ignored;
iii) \refrule{pseqcA} reduces to the usual rule for sequential composition when reordering is never allowed, 
and for parallel composition when reordering is always allowed;
and 
iv) \refrule{iterate} and \refeqndefn{finiter} reduce to the usual sequential execution of a loop when instantiated with $\SCmm$.
\end{proof}

We introduce some syntax to help with describing examples.
\begin{definition}[Plain interpretation]
\labeldefn{plain-version}
Given a command $\cmdc$ in \impro, which need not be plain,
we let $\plainc$ be the \emph{plain interpretation} of $\cmdc$, that is, where i) all instances of \pseqc and \piter in $\cmdc$ are instantiated by $\SCmm$, except for
instances of \pseqc instantiated by \PARmm (corresponding to parallel composition) which remain as-is; and ii) fences in $\cmdc$ do not appear in $\plainc$.
\end{definition}

Establishing properties about programs may proceed as outlined below.
Assume for some program $\cmdc$ in \impro that its plain interpretation $\plainc$ satisfies a property $P$ under the usual imperative program semantics according to some method $M$.
There are three approaches:
\begin{enumerate}
\item
\emph{Reduction to plain equivalence.}
Show that the ordering constraints (due to variables or fences) within $\cmdc$ are such that it reduces equivalently to $\plainc$ (ie, no reordering is possible within $\cmdc$).
Hence $\cmdc$ satisfies $P$ using $M$ in \Clang.
\item
\emph{Reduction to some plain form.}
Reduce $\cmdc$ equivalently to some plain command $\cmdc'$ and apply $M$ (or some other known method) to $\plaincp$ to show that $P$ holds.
\item
\emph{Reduce and deny.}
Refine $\cmdc$ to some plain command $\cmdc'$ and apply $M$ (or some other known method) to $\plaincp$ to show that $P$ does not hold.
This corresponds to finding some new behaviour (due to a new ordering of instructions) that breaks the original property $P$; in this paper $\cmdc'$ tends to be a particular path and we straightforwardly
apply Hoare logic to deny the original property $P$.
\end{enumerate}

We now explain how standard notions of correctness can be applied in our framework.

\subsection{Hoare logic}

We define a Hoare triple using weakest preconditions (because we only consider finite traces we do not deal with potential non-termination).
\begin{eqnarray*}
    \htpcq \asdef p \imp \wpcq
\end{eqnarray*}
As with \refth{refsto-eff} refinement preserves Hoare-triple inferences.
\begin{equation}
	c \refsto c' \imp \htpcq \imp \htrip{p}{c'}{q}
\end{equation}
Given some plain $c'$ that is equivalent to $c$ we can apply Hoare logic.
\begin{eqnarray*}
	c \eqmodoc c' 
	\aimp
	\htpcq
	\iff
	\htpq{\cmdc'}
\end{eqnarray*}

Traditional Hoare logic is used for checking every final state satisfies a postcondition, but it also useful to consider \emph{reachable} states,
which can be defined using the conjugate pattern 
(see, \eg, \cite{Hoare78PredTrans,OfWPandCSP,PredDFA}; it is related to, but different from, O'Hearn's concept of incorrectness logic
\cite{OHearnIncorrectnessLogic}
(which is stronger except in the special case when post-state $q$ is $\False$)).
\begin{eqnarray*}
	\reachtrip{p}{c}{q}
	\asdef
	\neg \htrip{p}{c}{\neg q}
\end{eqnarray*}
Given these definitions we naturally derive the following properties in terms of a relational interpretation of programs.
\begin{eqnarray}
\htpcq
\aiff
\forall \ssp \in \effc @ \sigma \in p \imp \sigma' \in q
\labeleqn{htrip-states}
\\
\reachpcq
\aiff
\exists \ssp \in \effc @ \sigma \in p \land \sigma' \in q
\labeleqn{reachtrip-states}
\end{eqnarray}

Due to its familiarity we use Hoare logic for top-level specifications, although other choices could be made.
However
Hoare logic is of course lacking in the presence of concurrency (due to parallelism or reordering), 
and hence we later consider concurrent verification techniques.
At the top level
we use the following theorems to formally express our desired (or undesired) behaviours.
\begin{theorem}[Verification]
\begin{eqnarray}
\labelth{reasoning-equiv}
    \cmdc \refeq \cmdc'
    \sspace \aimp \sspace
    \htpcq \iff \htrip{p}{\cmdc'}{q}
\\
\labelth{reasoning-deny}
    \cmdc \refsto \cmdc'
    \land
    \htrip{p}{\cmdc'}{\neg q}
    \sspace \aimp \sspace
    \neg \htrip{p}{\cmdc}{q}
\\
\labelth{reasoning-possible}
    \cmdc \refsto \cmdc'
    \land
    \htrip{p}{\cmdc'}{q}
    \sspace \aimp \sspace
    \reachtrip{p}{\cmdc}{q}
\end{eqnarray}
\end{theorem}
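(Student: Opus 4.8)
The plan is to reduce all three parts to \refth{refsto-eff} together with the relational characterisations of the Hoare and reach triples, \refeqns{htrip-states}{reachtrip-states}; no machinery beyond unfolding definitions is needed. For \refth{reasoning-equiv}, I would first split $\cmdc \refeq \cmdc'$ into $\cmdc \refsto \cmdc'$ and $\cmdc' \refsto \cmdc$ and apply \refth{refsto-eff} in each direction, obtaining $\eff{\cmdc'} \subseteq \effc$ and $\effc \subseteq \eff{\cmdc'}$, hence $\effc = \eff{\cmdc'}$. Since the weakest precondition depends only on this relation (\refeqndefn{wpre}), this gives $\wpcq = \wpre{\cmdc'}{q}$, and the definition $\htpcq \asdef p \imp \wpcq$ then yields $\htpcq \iff \htrip{p}{\cmdc'}{q}$ immediately.

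For \refth{reasoning-deny} and \refth{reasoning-possible} I would pass to the conjugate form. Using \refeqn{reachtrip-states}, the goal $\neg \htrip{p}{\cmdc}{q}$ (respectively $\reachtrip{p}{\cmdc}{q}$) unfolds to: there is a pair $\ssp \in \effc$ with $\sigma \in p$ and $\sigma' \notin q$ (respectively $\sigma' \in q$). The hypothesis on $\cmdc'$, read through \refeqn{htrip-states}, constrains the post-state of every terminating behaviour of $\cmdc'$ started in $p$, while $\cmdc \refsto \cmdc'$ with \refth{refsto-eff} supplies the one-directional inclusion $\eff{\cmdc'} \subseteq \effc$; composing these transports such a behaviour of $\cmdc'$ into $\effc$ and discharges the existential.

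The only delicate point, which I expect is the main obstacle, is that this last step needs $\cmdc'$ to actually have a terminating behaviour from some state in $p$; otherwise the Hoare-triple hypothesis on $\cmdc'$ is vacuously true and supplies no witness. This holds automatically in the intended applications, where $\cmdc'$ is a concrete feasible path, so that $\eff{\cmdc'}$ is non-empty on the relevant initial states; it is worth either recording this as a side condition, or equivalently phrasing the hypotheses of \refth{reasoning-deny} and \refth{reasoning-possible} with a reach triple $\reachtrip{p}{\cmdc'}{\cdot}$, which already carries the witness and needs no side condition. Apart from that, every step is a routine unfolding of \refth{refsto-eff} and the definitions.
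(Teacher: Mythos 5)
Your proposal takes the same route as the paper's proof, which simply declares all three parts automatic from the definition of $\refsto$ and the relational readings \refeqns{htrip-states}{reachtrip-states}: part one is the two-directional application of \refth{refsto-eff}, and the other two transport a witness along $\eff{\cmdc'} \subseteq \effc$. The ``delicate point'' you isolate is, however, a genuine gap in the theorem as stated, not merely in your write-up: if $\cmdc'$ has no terminating behaviour from any state in $p$, the hypotheses of \refth{reasoning-deny} and \refth{reasoning-possible} hold vacuously and supply no witness. Concretely, $\Meaning{\Magic} = \ess$, so $\cmdc \refsto \Magic$ and $\htrip{p}{\Magic}{\neg q}$ hold for every $\cmdc$, $p$ and $q$, yet $\neg\htrip{p}{\cmdc}{q}$ certainly need not. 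The paper's one-line proof passes over this; the implicit assumption, satisfied in every application in the paper (where $\cmdc'$ is always a concrete, total, sequential path), is that $\eff{\cmdc'}$ relates each $p$-state to at least one post-state. Either of your proposed repairs --- an explicit feasibility side condition, or phrasing the hypothesis as a reach triple on $\cmdc'$, which already carries the required witness --- makes the statement correct with the rest of your argument unchanged.
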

\begin{proof}
All are automatic from \refeqndefn{refsto} and \refeqns{htrip-states}{reachtrip-states}.
\end{proof}
\refth{reasoning-equiv} allows properties of a reduced program ($\cmdc'$) to carry over to the original program ($\cmdc$). 
Alternatively by \refth{reasoning-deny} if \emph{any} behaviour ($\cmdc'$) is found to violate a property then that property cannot hold for the original ($\cmdc$).
Finally by \refth{reasoning-possible} if \emph{any} behaviour ($\cmdc'$) is found to satisfy a property then it is a \emph{possible} behaviour of the original ($\cmdc$).

\subsection{Owicki-Gries}
\labelsect{owicki-gries}

The Owicki-Gries method \cite{OwickiGries76} can be used
to prove a top-level Hoare-triple specification of parallel program $\cmdc$.
If $\cmdc$ involves reordering, then reducing $\cmdc$ to some plain form allows the application of Owicki-Gries;
several examples of this approach appear in \cite{SEFM21}.

For instance, recalling \refeqn{2proc-fence-example}, given a program of the form
$
	(\aca_1 \ppseqc \scfence \ppseqc \aca_2 )
	\pl
	(\acb_1 \ppseqc \scfence \ppseqc \acb_2 )
$
for which one wants to establish some property specified as a Hoare triple,
one can use Owicki-Gries on the plain program
$
	(\aca_1 \bef \aca_2 )
	\pl
	(\acb_1 \bef \acb_2 )
$
by \refth{imp-syntax-equiv}:
the fences enforce program order, but can be ignored from the perspective of correctness.

Now consider the following slightly more complex case with nested parallelism, 
where one process uses a form of critical section.
Assume $\aca \ro \acb$.
\begin{eqnarray}
	c \asdef 
	(\acc_1 \ppseqc \scfence \ppseqc \aca \ppseqc \acb \ppseqc \scfence \ppseqc \acc_2)
	\pl
	(\acc_3 \ppseqc \scfence \ppseqc \acc_4 )
	\notag
	\\
	\arefeq
	(\acc_1 \bef \scfence \bef (\aca \pl \acb) \bef \fence \bef \acc_2)
	\pl
	(\acc_3 \bef \scfence \bef \acc_4 )
	\labeleqn{rg-example}
	\\
	& \eqmodoc &
	(\acc_1 \bef (\aca \pl \acb) \bef \acc_2)
	\pl
	(\acc_3 \bef \acc_4 )
	\notag
\end{eqnarray}
Unfortunately the Owicki-Gries method is not directly applicable due to the nested parallelism,
but
in this simple case we can use the fact that execution of two (atomic) actions in parallel means that
either order can be chosen.
\begin{equation}
	\aca \pl \acb \ssrefeq (\aca \bef \acb) \choice (\acb \bef \aca)
\end{equation}
Given an Owicki-Gries rule for nondeterministic choice this can be reasoned about directly, or alternatively
we can lift this choice to the top level.  Continuing from \refeqn{rg-example}:
\begin{eqnarray}
	\arefeq
	(\acc_1 \bef \aca \bef \acb \bef \acc_2)
	\pl
	(\acc_3 \bef \acc_4 )
	\choice
	(\acc_1 \bef \acb \bef \aca \bef \acc_2)
	\pl
	(\acc_3 \bef \acc_4 )
\end{eqnarray}
Now Owicki-Gries can be applied to \emph{both} top-level possibilities.
Clearly this is not desirable in general, however, and in the next section we show how the compositional rely/guarantee method
can be used to handle nested parallelism.

\subsection{Rely/guarantee}

\newcommand{\xbar}{\bar{x}}
\newcommand{\idxbar}{\id{\xbar}}
\newcommand{\idbar}[1]{\id{\overline{#1}}}
\newcommand{\qmid}{q_{mid}}
\newcommand{\Stable}[2]{stable(#1,#2)}
\newcommand{\Stabler}[1]{\Stable{#1}{r}}
\newcommand{\Stablepr}{\Stabler{p}}
\newcommand{\Stableqr}{\Stabler{q}}

\newcommand{\precomp}{\dres}

\newcommand{\postState}[1]{postState(#1)}
\newcommand{\postStater}{\postState{r}}
\newcommand{\preState}[1]{preState(#1)}
\newcommand{\preStater}{\preState{r}}
\newcommand{\Trans}[1]{#1^*}
\newcommand{\Transr}{\Trans{r}}

\newcommand{\relasgnxe}{\Meaning{x \asgn e}}

Rely/guarantee \cite{Jones-RG1,Jones-RG2} (see also \cite{SoundnessRG,rgsos12,rgsos14})
is a compositional proof method for parallel programs.
A rely/guarantee quintuple $\quintprgqc$ states that program $c$ establishes postcondition $q$ provided it is executed from a
state satisfying $p$ \emph{and} within an environment that satisfies (rely) relation $r$ on each step; in addition $c$ also satisfies (guarantee) relation $g$
on each of its own steps.
For instance, $\quint{x = 0}{x \leq x'}{c}{y' = y}{x \geq 10}$ states that $c$
establishes $x \geq 10$ when it finishes execution, and guarantees that it will not modify $y$,
provided initially $x = 0$ and the environment only ever increases $x$.  In the relations above we have used the common convention in relations that 
primed variables ($x'$) refer to their value in the post-state and unprimed variables ($x$) refer to their value in the pre-state.
A top-level Hoare-triple specification $\htpcq$ can be related to a rely/guarantee quintuple by noting
$
	\rgq{p}{\idn}{\True}{q}{c}
	\imp
	\htpcq
$,
that is, if $p$ holds initially in some top-level context that does not modify any variables, then
$\cmdc$ establishes $q$ (with the weakest possible guarantee).

\OMIT{
If we let $\id{X}$ be the identity relation on all the variables in set $X$, and let $\compl{X}$ be the complement of set $X$, then
rely/guarantee quintuples are related to Hoare triples by the following equation.
\begin{equation}
	\htpcq
	\iff
	\rgq{p}{\id{\fvc}}{\id{\compl{\wvc}}}{q}{c}
\end{equation}
That is, if $\htpcq$ then $c$ will establish $q$ from $p$ in any environment
where the free variables of $c$ are unmodified, and guarantees to not modify any variables other than the write variables of $c$.
}

Reasoning using the rely/guarantee method is compositional over parallel conjunction; for instance, without going into detail about rely/guarantee inference rules,
consider the plain program from \refeqn{rg-example}.
We can show a rely/guarantee quintuple holds provided we can find rely/guarantee relations that control the communication between the two subprocesses;
we abstract from these nested relations using different names.
\begin{eqnarray*}
\quintprgqc
\aiff
\quint{p}{r}{
	(\acc_1 \bef (\aca \pl \acb) \bef \acc_2)
	\pl
	(\acc_3 \bef \acc_4 )
}{g}{q}
\quad \mbox{(by \refeqn{rg-example} and \refth{imp-syntax-equiv})}
	\\
	\arevimp
\quint{p}{r_1}{
	(\acc_1 \bef (\aca \pl \acb) \bef \acc_2)
}{g_1}{q_1}
\land
\quint{p}{r_2}{
	(\acc_3 \bef \acc_4 )
}{g_2}{q_2}
	\\
	\arevimp
\quint{p}{r_1}{
	\acc_1 
}{g_1}{q_1'}
\land
\quint{q_1'}{r_1}{
\aca \pl \acb
}{g_1}{q_1''}
\land
\quint{q_1''}{r_1}{
\acc_2
}{g_1}{q_1}
\land
\\
&&
\quint{p}{r_2}{
	(\acc_3 \bef \acc_4 )
}{g_2}{q_2}
\end{eqnarray*}
This is straightforward application of standard rely/guarantee inference rules (where we leave the format of the predicates and relations unspecified).
Reasoning may proceed from this point, in particular,
analysing the quintuple containing the nested parallel composition $\aca \pl \acb$. 
The question becomes whether the guarantee and the intermediate relation $q_1''$ is satisfied regardless of the order that $\aca$ and
$\acb$ are executed, and if not, a fence or ordering constraints will need to be introduced to enforce order and eliminate the undesirable proof obligation.
While completing the proof may or may not be straightforward, the point is 
it will be matter of the specifics of the example in question, and not of a tailor-made inference system to manage a complex semantic representation.
The initial work (as shown in \refeqn{rg-example}) is to elucidate the allowed reorderings as either sequential or parallel composition.

\subsection{Linearisability}
\labelsect{linearisability}

\newcommand{\opname}{\T{op}}
Linearisability is a correctness condition for concurrent objects \cite{linearisability}, which is not an ``end-to-end'' property such as Hoare triples
and rely/guarantee quintuples, but rather requires that operations on some data structure appear to take place atomically, with weakened liveness guarantees.
The following abstract program $\opname$ follows a common pattern for operations on lock-free, linearisable data structures (in this case $x$),
where there may be other processes also executing $\opname$.
\begin{equation}
\begin{array}{l}
	\Repeat 
		\\ \t1
		r_1 \asgn x 
		\ppseqc \\ \t1
		r_2 \asgn f(r_1) 
		\ppseqc \\ \t1
		b \asgn \CAS{x}{r_1}{r_2}
	\ppseqc \\ 
	\Until b
\end{array}
\end{equation}
The algorithm repeatedly reads the value of $x$ (into local variable $r_1$), calculates a new value for
$x$ (applying some function $f$, and storing the result in local variable $r_2$), and attempts to atomically update $x$ to $r_2$; if interference is detected ($x$ has changed since being read into $r_1$)
then the loop is retried, otherwise the operation may complete.

The natural dependencies arising from the above structure naturally maintain order, that is,
$
\datadep{
		r_1 \asgn x 
}{\datadep{
		r_2 \asgn f(r_1) 
}{\datadep{
		b \asgn \CAS{x}{r_1}{r_2}
}{
	\guard{b}
}}}
$,
hence $\opname \eqmodoc \plain{\opname}$, and 
any reasoning about $\opname$ executing under sequential consistency -- in a vacuum -- applies to the \Clang version of $\opname$.
(Other algorithms may of course have fewer dependencies which will manifest as more complex parallel structures, which must be elucidated via reduction.)
Alternative approaches to reasoning about linearisability under weak memory models typically modify the original definition in some way to account for reorderings
\cite{ObservationalApproach,VerifyLinTSO}.

However when considering a calling context reordering must also be taken into account.
That is, assume $c$ is some program which contains a call to $\opname$, and that $c$ is just one of many parallel processes which may be calling $\opname$.
Multiple calls to $\opname$ (or other operations on $x$) within $c$ are unlikely to cause a problem because the natural dependencies on
$x$ will prevent problematic reordering; however one must consider the possibility of unrelated instructions in $c$ reordering with (internal) instructions of $\opname$.
For instance, consider a case where $x$ implements a queue and $\opname$ places a new value into the queue, and within $c$ a flag $f$ is set immediately after calling $\opname$
to indicate the enqueue has happened, \ie, 
\begin{equation*}
	c \sdef \opname() \ppseqc flag \asgn \True
\end{equation*}
The assignment $flag \asgn \True$ can be reordered with instructions within $\opname$ and destroy any reasoning that uses $flag$ to determine when $\opname$ has been called.
Placing a release constraint on the flag update resolves the issue (which can be shown straightforwardly in our framework since
$
	\opname() \ppseqc \modR{flag} \asgn \True \refeq \opname() \bef \modR{flag} \asgn \True
$, or in other words, $c \eqmodoc \plainc$), but more generally this leaves a
question about how to capture dependencies within the \emph{specification} of $\opname$.
Hence, 
while one can argue that $\opname$ is ``linearisable'' in the sense that if it is executed in parallel with other
linearisable operations on $x$ it will operate correctly, whether or not a calling context $c$ works correctly
will still depend on reordering
\cite{LibraryAbstractionsForC,SmithGrovesRefine20}; this can also be addressed in our framework, using reduction and applying an established technique (\eg, \cite{SoundCompleteLin,ModelCheckLin,Filipovic10})

\section{Examples}
\labelsect{examples}

\newcommand{\exfont}[1]{\mathsf{#1}}
\newcommand{\oota}{\exfont{oota}}
\newcommand{\ootaplain}{\plain{\oota}}
\newcommand{\ootaC}{\oota}
\newcommand{\ootaA}{\exfont{oota_A}}
\newcommand{\ootaD}{\exfont{oota_D}}

\newcommand{\vinit}[1]{0_{#1}}
\newcommand{\vinitxy}{\vinit{x,y}}
\newcommand{\vinitxyr}{\vinit{x,y,r}}
\newcommand{\vinitxyrr}{\vinit{x,y,r_1,r_2}}
\newcommand{\vinitxyrf}{\vinit{x,y,r,f}}

In this section we explore the behaviour of small illustrative examples from the literature, specifically classic 
``litmus test'' patterns that readily show the weak behaviours of memory models, and also examples taken from the \Clang standard.

To save space we define the following initialisation predicate for $x,y,\ldots$ a list of variables.
\begin{equation}
	\vinit{x,y,\ldots} \sdef x = 0 \land y = 0 \land \ldots
\end{equation}


\subsection{Message passing pattern}


\newcommand{\MPplain}{\plain{\MP}}

Consider the message passing (``MP'') communication pattern.
\begin{equation}
\labeleqndefn{mp}
\MP \sdef
	(x \asgn 1 \ppseqc flag \asgn 1)
	\pl
	(f \asgn flag \ppseqc r \asgn x)
\end{equation}
The question is whether in the final state that $f = 1 \imp r = 1$, \ie, if the ``flag'' is observed to have been set, can one assume that the ``data'' ($x$) has been transferred?
This is of course expected under a plain interpretation (\refdefn{plain-version}).
\begin{theorem}[Message passing under sequential consistency]
	\labelth{rgq-mpsc}
\begin{equation}
	\htrip{\vinitxyrf}{\MPplain}{f = 1 \imp r = 1}
\end{equation}
\end{theorem}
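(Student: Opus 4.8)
\medskip
\noindent\textbf{Proof plan.}
The plan is to reduce $\MPplain$ to an ordinary parallel imperative program and then apply the Owicki--Gries method~\cite{OwickiGries76}, in the style of \refsect{owicki-gries}. The first thing to observe is that $\MPplain$ is plain by construction (every $\ppseqc$ inside $\MP$ is instantiated to the strict order $\bef$, while the top-level $\pl$ is left as parallel composition), so \refth{imp-syntax-equiv} applies and tells us that $\eff{\MPplain}$ coincides with the usual relational denotation of $(x \asgn 1 \bef flag \asgn 1) \pl (f \asgn flag \bef r \asgn x)$. Hence it suffices to establish the Hoare triple for that program using any sound technique for standard (interleaving) concurrency; no reduction of reorderings is needed here, but the two threads do genuinely interleave, so a compositional/annotation-based argument is required rather than a sequentialisation.

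For the Owicki--Gries proof I would use the coupling invariant $J \sdef (flag = 1 \imp x = 1)$, which holds in the initial state $\vinitxyrf$ (all variables, $flag$ included, being $0$). Annotate the writer thread by $\{x = 0\}\; x \asgn 1\; \{x = 1\}\; flag \asgn 1\; \{x = 1 \land flag = 1\}$ and the reader thread by $\{J\}\; f \asgn flag\; \{f = 1 \imp x = 1\}\; r \asgn x\; \{f = 1 \imp r = 1\}$. Sequential correctness of each annotation is immediate, the conjunction of the two preconditions is implied by $\vinitxyrf$, and the conjunction of the two postconditions entails the required $f = 1 \imp r = 1$.

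The remaining work --- and the only place with any content --- is interference freedom. Most obligations are trivial: the reader thread writes only the locals $f$ and $r$, which occur in no assertion of the writer thread, and the writer's step $x \asgn 1$ re-establishes both $J$ and the intermediate assertion $f = 1 \imp x = 1$ by making $x = 1$. The one obligation that needs thought is that the step $flag \asgn 1$ of the writer preserve the reader's assertion $J$; this holds precisely because at that control point the writer's own annotation asserts $x = 1$, so afterwards $flag = 1 \land x = 1$ and hence $J$. I therefore expect no genuine obstacle: it is a textbook Owicki--Gries argument once \refth{imp-syntax-equiv} has been invoked, the only subtlety being the pairing of the ``$x$ has already been set'' fact in the writer with the ``having seen $flag$ implies having seen $x$'' fact in the reader. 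The same conclusion can be reached compositionally via rely/guarantee instead, by giving the reader a rely relation that keeps $flag$ fixed and lets $x$ only move to $1$, together with the stable fact $flag = 1 \imp x = 1$.
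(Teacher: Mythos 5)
Your proof is correct, and it rests on exactly the same key fact as the paper's: the writer maintains $flag = 1 \imp x = 1$ (the paper's mechanized proof states this, together with the essentially equivalent $x = 0 \imp flag = 0$, as the guarantee of the left process). The difference is purely in the vehicle: the paper discharges the obligation by a rely/guarantee derivation checked in Isabelle/HOL using Nieto's encoding, whereas you give an explicit pencil-and-paper Owicki--Gries annotation and check the interference-freedom obligations by hand (noting at the end that a rely/guarantee version with the stable assertion $flag = 1 \imp x = 1$ would work equally well). Your route is more self-contained and makes visible where the single non-trivial obligation lies (the writer's $flag \asgn 1$ preserving the reader's assertion, justified by the writer's local fact $x = 1$); the paper's route buys machine-checked assurance and compositionality at the price of saying almost nothing on the page. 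Your annotations and interference checks are all sound; the only point worth flagging is that the initial condition $\vinitxyrf$ as literally written initialises $y$ rather than $flag$, so your (correct) assumption that $flag = 0$ initially relies on reading $y$ and $flag$ as the same variable, which is how the paper itself uses them in the surrounding theorems.
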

\begin{proof}
The proof is checked in Isabelle/HOL using Nieto's encoding of rely/guarantee inference \cite{RGinIsabelle};
the key part of the proof is the left process guarantees $x = 0 \imp flag = 0$ and $flag = 1 \imp x = 1$ (translated into relational form).
\end{proof}

If we naively code this pattern in \Clang this property no longer holds.
\begin{theorem}[Naive message passing fails under $\Cmm$ ]
	\labelth{rgq-mpc}
\begin{equation*}
	\neg \htrip{\vinitxyrf}{\MPC}{f = 1 \imp r = 1}
\end{equation*}
\end{theorem}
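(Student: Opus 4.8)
The plan is to use the ``reduce and deny'' strategy of \refsect{reasoning}: exhibit one reordered execution of $\MPC$ that reaches a state with $f = 1 \land r = 0$, and then appeal to \refth{reasoning-deny}. The key observation is that, unlike the plain program $\MPplain$ of \refth{rgq-mpsc}, in $\MPC$ the two $\relaxed$ stores of the left thread may be swapped.

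First I would verify $x \asgn 1 \ro flag \asgn 1$ directly from the three conditions defining \Cmm reordering: $x$ and $flag$ are distinct shared variables, so the pair is interference-free and load-independent (giving $\roG$); neither instruction is a fence (giving $\roFNC$); and both accesses are $\relaxed$ with $(\relaxed,\relaxed) \in \roOC$ (giving $\roOCs$). Hence, by \reflaw{2actions-reduce}, $x \asgn 1 \ppseqc flag \asgn 1 \refeq (x \asgn 1) \pl (flag \asgn 1)$; expanding $\aca \pl \acb \refeq (\aca \bef \acb) \choice (\acb \bef \aca)$ and discarding a branch with \reflaw{chooseL} gives $x \asgn 1 \ppseqc flag \asgn 1 \refsto flag \asgn 1 \bef x \asgn 1$. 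Using the congruence of $\ppseqc$ and $\pl$ under $\refsto$, together with \reflaw{keep-order} to put the right thread into strict order, we obtain $\MPC \refsto (flag \asgn 1 \bef x \asgn 1) \pl (f \asgn flag \bef r \asgn x)$.

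Next I would collapse the remaining parallelism to a single interleaving by repeated applications of \reflaw{fix-interleaving} (again with congruence), choosing the order $flag \asgn 1$, then $f \asgn flag$, then $r \asgn x$, then $x \asgn 1$. This produces a plain, purely sequential command
\begin{equation*}
	\cmdc' \ \sdef\ flag \asgn 1 \bef f \asgn flag \bef r \asgn x \bef x \asgn 1
\end{equation*}
with $\MPC \refsto \cmdc'$. By \refth{imp-syntax-equiv} the command $\cmdc'$ may be read as an ordinary sequential imperative program; a straightforward Hoare-logic (symbolic-execution) calculation shows that from any state satisfying $\vinitxyrf$ it terminates in a state with $flag = 1$, $f = 1$, $r = 0$ (the read of $x$ still sees $0$) and $x = 1$. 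Thus $\htrip{\vinitxyrf}{\cmdc'}{f = 1 \land r = 0}$, and a fortiori $\htrip{\vinitxyrf}{\cmdc'}{\neg(f = 1 \imp r = 1)}$. Applying \refth{reasoning-deny} with $p = \vinitxyrf$, $q = (f = 1 \imp r = 1)$ and witness $\cmdc'$ yields $\neg \htrip{\vinitxyrf}{\MPC}{f = 1 \imp r = 1}$, as required; equivalently, \refth{reasoning-possible} gives $\reachtrip{\vinitxyrf}{\MPC}{f = 1 \land r = 0}$.

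The main obstacle is the first step --- justifying that the swap of the two left-thread stores is genuinely a legal reordering under \Cmm, and hence that the chain of refinements above is sound. Once $x \asgn 1 \ro flag \asgn 1$ is established, everything downstream (the interleaving bookkeeping via \reflaw{fix-interleaving}, and the Hoare-logic evaluation of $\cmdc'$) is mechanical; the one point requiring a little care is that $\ppseqc$ and $\pl$ are monotone with respect to $\refsto$, so that the local reordering and the chosen interleaving lift to a refinement of the whole of $\MPC$. One could equally well reorder the two loads of the right thread instead; either single reordering suffices, and without any reordering the target state is unreachable, which is exactly why $\MPplain$ satisfies the property in \refth{rgq-mpsc}.
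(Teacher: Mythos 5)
Your proposal is correct and follows essentially the same route as the paper's proof: exhibit the $\relaxed$-store reordering permitted by \Cmm, reduce $\MPC$ (via \reflaw{2actions-reduce}/\reflaw{chooseL}/\reflaw{fix-interleaving}) to the very same witness interleaving $flag \asgn 1 \bef f \asgn flag \bef r \asgn x \bef x \asgn 1$, establish $f = 1 \land r = 0$ by sequential Hoare reasoning, and conclude with \refth{reasoning-deny}. The only cosmetic difference is that the paper reorders both threads into a fully parallel form before interleaving, whereas you reorder only the left thread and keep the right one in program order via \reflaw{keep-order}, which, as you note, suffices for this interleaving.
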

\begin{proof}
By definition of $\Cmm$ (\refmm{Cmm}) we have
both
$    (x \asgn 1) \ro (flag \asgn 1)
$
and
$
    (f \asgn flag) \ro (r \asgn x)
$.
\begin{align}
	\MPC & \sdef
		(x \asgn 1 \ppseqc flag \asgn 1)
		\pl
		(f \asgn flag \ppseqc r \asgn x)
& \mbox{\refeqndefn{mp}}
		\\
		& \refeq
		x \asgn 1 \pl flag \asgn 1
		\pl
		f \asgn flag \pl r \asgn x
& \mbox{by \reflaw{2actions-reduce}}
		\\
		& \refsto 
		flag \asgn 1 \bef f \asgn flag \bef r \asgn x \bef x \asgn 1
& \mbox{by \reflaw{fix-interleaving}}
\end{align}
All instructions effectively execute in parallel; in the final step we have picked
one particular interleaving that breaks the expected postcondition,
that is,
\begin{equation}
	\htrip{\vinitxyrf}{
		flag \asgn 1 \bef f \asgn flag \bef r \asgn x \bef x \asgn 1
	}{
		f = 1 \land r = 0
	}
\end{equation}
Since $f = 1 \land r = 0 \imp \neg (f = 1 \imp r = 1)$ we complete the proof by \refth{reasoning-deny}.
\end{proof}

\Clang's release/acquire atomics are the recommended way to instrument message passing; we make this explicit below
using release-acquire constraints on $flag$ (leaving $x$ relaxed).
\begin{eqnarray}
	\MPra &\sdef&
	(x \asgn 1 \ppseqc \modR{flag} \asgn 1)
	\pl
	(f \asgn \modA{flag} \ppseqc r \asgn x)
\end{eqnarray}
The new constraints prevent reordering in each branch and therefore the expected outcome is reached.
\begin{theorem}[Release/acquire message passing maintains sequential consistency]
	\labelth{rgq-mp}
\begin{equation}
	\htrip{\vinitxyrr}{\MPra}{f = 1 \imp r = 1}
\end{equation}
\end{theorem}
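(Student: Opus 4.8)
The plan is to show that the $\release$ annotation on the $flag$-store and the $\acquire$ annotation on the $flag$-load together forbid any reordering within either parallel branch, so that $\MPra$ is trace-equivalent to a \emph{plain} program whose state semantics coincide with those of $\MPplain$; the claim then follows from \refth{rgq-mpsc}, the sequentially consistent version, in much the same way that \refth{rgq-mpc} was obtained by reducing in the opposite direction. First I would invoke \refmm{Cmm}: reordering the left branch would require in particular $x \asgn 1 \roOCs \modR{flag} \asgn 1$, i.e.\ $\getocs{x \asgn 1} \cross \getocs{\modR{flag} \asgn 1} = \{(\relaxed,\release)\} \subseteq \roOC$, which fails since $(\relaxed,\release) \notin \roOC$ by \refdefn{roOC}; hence $x \asgn 1 \nro \modR{flag} \asgn 1$. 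Symmetrically, reordering the right branch would require $(\acquire,\relaxed) \in \roOC$, which also fails (an $\acquire$ never reorders with a later access), so $f \asgn \modA{flag} \nro r \asgn x$.

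Applying \reflaw{2actions-keep-order} to each branch, and using that $\pl$ is a congruence for $\refeq$ (the trace semantics being compositional, as for $\choice$), I would obtain
\[
\MPra \;\refeq\; (x \asgn 1 \bef \modR{flag} \asgn 1) \pl (f \asgn \modA{flag} \bef r \asgn x).
\]
The right-hand side is plain -- every $\ppseqc$ has become $\bef$, and the outer $\pl$ is ordinary parallel composition -- and erasing the ordering constraints on $x$ and $flag$ makes it imperative-syntax-equivalent to $\MPplain$, so it is $\eqmodoc \MPplain$. By \refth{imp-syntax-equiv} its effect relation, and hence every weakest-precondition and Hoare-triple property, agrees with that of $\MPplain$, and \refth{rgq-mpsc} already establishes $f = 1 \imp r = 1$ for $\MPplain$ from the stated zero-initialisation. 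Combining these through \refth{reasoning-equiv} gives the result.

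The main obstacle is slight: the substantive content -- that in-order message passing is correct -- has already been discharged in \refth{rgq-mpsc}, so the remaining work is essentially bookkeeping. The two points needing a little care are confirming that the two $\roOCs$ checks fail in the correct directions (a $\release$ on a \emph{store} blocks an earlier instruction from moving past it, an $\acquire$ on a \emph{load} blocks a later instruction from moving before it, which is exactly the shape of the two branches here), and checking that the reduced program meets the syntactic side-conditions of \refdefn{plain} so that \refth{imp-syntax-equiv} applies.
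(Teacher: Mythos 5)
Your proposal is correct and follows essentially the same route as the paper's proof: establish $x \asgn 1 \nro \modR{flag} \asgn 1$ and $f \asgn \modA{flag} \nro r \asgn x$ from the ordering-constraint component of \refmm{Cmm}, reduce each branch to strict sequencing so that $\MPra$ is $\eqmodoc$ to $\MPplain$, and then appeal to \refth{rgq-mpsc} together with \refth{imp-syntax-equiv}. The only difference is that you spell out the $\roOCs$ calculations and the invocation of \reflaw{2actions-keep-order} explicitly, which the paper leaves implicit.
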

\begin{proof}
By definition of $\Cmm$ we have
both
$    (x \asgn 1) \nro (\modR{y} \asgn 1)
$
and
$
    (r_1 \asgn \modA{y}) \nro (r \asgn x)
$.
Hence
\begin{equation}
	\MPra 
	\sspace = \sspace
		(x \asgn 1 \bef \modR{y} \asgn 1)
		\pl
		(r_1 \asgn \modA{y} \bef r \asgn x)
	\sspace \eqmodoc \sspace
	\MPplain
\end{equation}
The proof follows immediately from \refths{rgq-mpsc}{imp-syntax-equiv}.
\end{proof}

An alternative approach to restoring order is to insert fences, \eg,
\begin{gather}
	(x \asgn 1 \ppseqc \scfence \ppseqc flag \asgn 1)
	\pl
	(f \asgn flag \ppseqc \scfence \ppseqc r \asgn x)
	\\
	(x \asgn 1 \ppseqc \relfence \ppseqc flag \asgn 1)
	\pl
	(f \asgn flag \ppseqc \acqfence \ppseqc r \asgn x)
\end{gather}
No reordering is possible within each thread and hence these cases also reduce to $\MPplain$.
We emphasise that these proofs are as straightforward as one would expect: the programmer has enforced order and so properties of the
sequential version of the program carry over.  There is no need to appeal to complex global data types that maintain information about fences, orderings, \etc.

\subsection{Test-and-set lock}

\newcommand{\thelock}{\lbl}
\newcommand{\taken}{taken}
\newcommand{\lockproc}{\T{lock()}}
\newcommand{\plainlockproc}{\plain{\lockproc}}
\newcommand{\unlockproc}{\T{unlock()}}
\newcommand{\RPT}[3]{{\Repeat}^{#1}~ #2 \Until #3}
\newcommand{\RPTm}[2]{\RPT{\mm}{#1}{#2}}
\newcommand{\RPTmcb}{\RPT{\mm}{\cmdc}{b}}

We now consider the application of the framework to more realistic code, in this case a lock implementation 
taken from \cite{HerlihyShavit2011}(Sect. 7.3).
Conceptually the shared lock $\thelock$ is represented as a boolean which is $\True$ when some process holds the lock, and $\False$ otherwise.
\begin{eqnarray}
	\RPT{\mm}{c}{b}
	\asdef
	c \ppseqm \iterate{(\guardnb \ppseqm c)}{\mm} \ppseqm \guardb
	\labeleqndefn{repeat}
	\\
	r \asgn x.getAndSet(v) 
	\asdef
	\seqT{r \asgn \modA{x} \asep \modR{x} \asgn v}
	\labeleqndefn{getAndSet}
	\\
	\lockproc \asdef
	\RPT{\Cmm}{
		\taken \asgn \thelock.getAndSet(\True)
	}{
		\neg \taken
	}
	\labeleqndefn{lockproc}
	\\
	\unlockproc \asdef
		\modR{\thelock} \asgn \False
	\labeleqndefn{unlockproc}
\end{eqnarray}
A ``repeat-until'' command $\RPTmcb$ repeatedly executes $\cmdc$ (at least once) until $b$ is true, under memory model $\mm$, which we encode using
parallelized iteration \prefeqndefn{repeat}.
A ``get-and-set'' command $r \asgn x.getAndSet(e)$ returns the initial value of $x$ into $r$ and updates $x$ to the value of $e$, as a single step \prefeqndefn{getAndSet}.
The load of $x$ is defined as an $\acquire$ access and the update is a $\release$ write.
Using these the concurrent $\lockproc$ procedure is defined to repeatedly set $\thelock$ to $\True$, and finish when a $\False$ value for $\thelock$ is read.
If the lock is already held by another process then the get-and-set has no effect and the loop continues, until the case where $\thelock$ is $\False$ (as recorded in the local variable $\taken$),
when $\thelock$ is updated to $\True$ to indicate the current process now has the lock.
Unlocking is implemented by simply setting $\thelock$ to $\False$.

We have the following general rule for spin loops of the form in \lockproc.
\begin{equation}
	\datadep{\aca}{\guardb} 
	\quad \imp \quad
	\RPT{\Cmm}{\aca}{b} 
	\refeq
	\RPT{\SCmm}{\aca}{b} 
	\labellaw{repeat-sc}
\end{equation}
Intuitively, since each iteration of the loop updates a variable that is then checked in the guard, no reordering is possible, and it is as if the loop is executed in sequential order.
This holds by the following reasoning.
\begin{derivation}
	\step{
    	\RPT{\Cmm}{\aca}{b} 
	}

	\trans{\sdef}{\refeqndefn{repeat}}
	\step{
		\aca \ppseqc \iterate{(\guardnb \ppseqc \aca)}{\Cmm} \ppseqc \guardb
	}

	\trans{\refeq}{Using \reflaw{2actions-keep-order} by assumption $\datadep{\aca}{\guardb}$ (and hence also $\datadep{\aca}{\guardnb}$)}
	\step{
		\aca \ppseqc \iterate{(\guardnb \bef \aca)}{\Cmm} \ppseqc \guardb
	}

	\trans{\refeq}{Using \reflaw{blockall-iterate}, since $\datadep{\aca}{\guardb}$ implies $\blockall{(\guardnb \bef \aca)}{(\guardnb \bef \aca)}$.}
	\step{
		\aca \ppseqc \iterate{(\guardnb \bef \aca)}{\SCmm} \ppseqc \guardb
	}

	\trans{\refeq}{Similarly using \reflaw{blockall-cd} and $\blockall{\aca}{\blockall{(\guardnb \bef \aca)}{\guardb}}$.}
	\step{
		\aca \bef \iterate{(\guardnb \bef \aca)}{\SCmm} \bef \guardb
	}

	\trans{\sdef}{\refeqndefn{repeat}}
	\step{
    	\RPT{\SCmm}{\aca}{b} 
	}

\end{derivation}

Recalling that $\plainlockproc$ is the plain version of $\lockproc$, \ie, ignoring the ordering constraints and using sequential composition only,
we have the following.
\begin{theorem}
\labelth{lockproc-plain}
$\lockproc \eqmodoc \plainlockproc$
\end{theorem}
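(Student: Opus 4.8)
The plan is to obtain the claim as an almost immediate consequence of \reflaw{repeat-sc}. Write $\aca \sdef \seqT{\taken \asgn \modA{\thelock} \asep \modR{\thelock} \asgn \True}$ for the composite get-and-set action, so that $\taken \asgn \thelock.getAndSet(\True) = \aca$ by \refeqndefn{getAndSet}, and hence $\lockproc = \RPT{\Cmm}{\aca}{\neg\taken}$ by \refeqndefn{lockproc}. All that is then needed to invoke \reflaw{repeat-sc} and conclude $\lockproc \refeq \RPT{\SCmm}{\aca}{\neg\taken}$ is to discharge its hypothesis $\datadep{\aca}{\guard{\neg\taken}}$.

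That hypothesis is the only genuine calculation. Lifting the write-variable function $\wv{\cdot}$ through the list constructor by the conventions of \refappendix{syntax-lifting}, and using the extended definition $\wv{\modxocs \asgn e} = \{x\}$ (ordering-constraint tags contributing nothing), we get $\wv{\aca} = \wv{\taken \asgn \modA{\thelock}} \union \wv{\modR{\thelock} \asgn \True} = \{\taken\} \union \{\thelock\} = \{\taken, \thelock\}$. Since $\rv{\guard{\neg\taken}} = \fv{\neg\taken} = \{\taken\}$ by \refeqn{defn-rv}, we have $\wv{\aca} \int \rv{\guard{\neg\taken}} = \{\taken\} \neq \ess$, which is exactly $\datadep{\aca}{\guard{\neg\taken}}$ by \refeqn{defn-datadep}. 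Intuitively: every pass of the loop writes $\taken$ and the guard re-reads it immediately, so no iteration — nor the terminating guard — can be hoisted past its predecessor, and the loop is pinned to program order. Applying \reflaw{repeat-sc} gives $\lockproc \refeq \RPT{\SCmm}{\aca}{\neg\taken}$.

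It remains to recognise the right-hand side as $\plainlockproc$. By \refdefn{plain-version}, $\plainlockproc$ is $\lockproc$ with every occurrence of \pseqc and \piter re-parameterised by $\SCmm$ and with all fences deleted; $\lockproc$ contains no fences, and its only \pseqc/\piter occurrences are those introduced by unfolding the repeat-until abbreviation \refeqndefn{repeat}, so $\plainlockproc = \RPT{\SCmm}{\aca}{\neg\taken}$ verbatim. Hence $\lockproc \refeq \plainlockproc$, and this command is plain in the sense of \refdefn{plain}; erasing the now inert $\acquire$ and $\release$ tags on $\thelock$, permitted by clause~(i) of \refdefn{imp-syntax-equiv}, produces a standard imperative command syntactically equivalent to it, so $\lockproc \eqmodoc \plainlockproc$. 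By \refth{imp-syntax-equiv} any state property proved of $\plainlockproc$ under ordinary sequential-consistency semantics then transfers to the \Clang program $\lockproc$.

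I expect no real obstacle: \reflaw{repeat-sc} — whose own proof is displayed immediately before the statement, via \reflaw{2actions-keep-order} together with \reflaws{blockall-iterate}{blockall-cd} — already does all the reordering work, and the residue is purely bookkeeping, namely lifting $\wv{\cdot}$ through the $\seqT{\cdots}$ constructor so that $\taken$ is seen to be written, and observing that the acquire/release annotations on $\thelock$ are inert for the sequential semantics and so drop out in the $\eqmodoc$ step. If one preferred a self-contained argument one could simply inline that displayed derivation with $\aca$ instantiated to the get-and-set action and $b$ to $\neg\taken$ (noting $\fv{\neg\taken} = \fv{\taken}$, so the same data dependence covers the $\guard{\taken}$ guard as well).
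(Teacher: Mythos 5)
Your proof is correct and follows essentially the same route as the paper's: establish $\datadep{\taken \asgn \thelock.getAndSet(\True)}{\guard{\cdot}}$ on $\taken$, invoke \reflaw{repeat-sc} to reduce the repeat-until to its $\SCmm$ form, and conclude via \refdefn{imp-syntax-equiv}. The only difference is that you spell out the bookkeeping the paper leaves implicit (lifting $\wv{\cdot}$ through the composite get-and-set action and erasing the inert $\acquire$/$\release$ tags), which is fine.
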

\begin{proof}
$\lockproc$ reduces to sequential form by $\datadep{taken \asgn l.getAndSet(\True)}{\guard{\taken}}$ and \reflaw{repeat-sc},
which is \impsynequiv to $\plainlockproc$ by \refdefn{imp-syntax-equiv}.
\end{proof}

Therefore this $\Cmm$ implementation is `correct' if the original is.  
However, as discussed in \refsect{linearisability}, one must consider the calling context.
\refth{lockproc-plain} does not depend on the ordering constraints in the definition of \lockproc, 
as its natural data dependencies maintain order.
However this does not also guarantee that for instance,
$
	\lockproc \ppseqc c \ppseqc \unlockproc
	\refeq
	\lockproc \bef c \bef \unlockproc
$,
which may be a desirable property if the behaviour of $c$ assumes the lock is held.
By \refeqndefn{lockproc} we have
$
	\getocs{\lockproc} = \{\release, \acquire\}
$
and
$
	\getocs{\unlockproc} = \{\release\}
$
which in many cases will be enough to allow sequential reasoning in the calling context;
alternatively fences could be inserted around $c$.

\subsection{Out-of-thin-air behaviours}
\labelsect{oota}

We now turn our attention to the ``out of thin air'' problem, where some memory model specifications allow
values to be assigned where those values do not appear in the program.
Firstly consider the following program, which appears in the $\Clang$ standard.
\begin{eqnarray}
	\oota
	\asdef
	r_1 \asgn x \cbef (\IFsC{r_1 = 42}{y \asgn 42})
	~\pl~
	r_2 \asgn y \cbef (\IFsC{r_2 = 42}{x \asgn 42})
\end{eqnarray}
Under a plain interpretation 
neither store ever happens: one of the loads must occur, and the subsequent test fail, first, preventing the condition in the other process from succeeding.
\begin{theorem}
\labelth{oota-plain}
$\htrip{\vinitxyrr}{\ootaplain}{x = 0 \land y = 0}$.
\end{theorem}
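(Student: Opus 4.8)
The plan is to reduce the statement to ordinary concurrent-program reasoning and then discharge it with a standard method. Since $\oota$ contains no fences and, in $\ootaplain$, every occurrence of $\ppseqc$ (including those produced when the single-branch conditionals are unfolded into guards) is read as strict order $\SCmm$, with parallel composition kept as-is, $\ootaplain$ is plain in the sense of \refdefn{plain}. Hence by \refth{imp-syntax-equiv} its effect $\eff{\ootaplain}$ is exactly the denotation of a two-thread shared-variable program with sequential threads, and by \refeqn{htrip-states} it suffices to prove the Hoare triple over that program by any sound technique; I would use Owicki-Gries (\refsect{owicki-gries}), or equivalently the rely/guarantee encoding used for \refth{rgq-mpsc}, which is what I would check in Isabelle/HOL.

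The shared invariant is $I \sdef x = 0 \land y = 0$, and the entire content of the proof is that neither store is ever enabled, so $I$ persists. In the Owicki-Gries outline for the left thread, with $\vinitxyrr$ supplying $I$ as precondition,
\[
	\{I\}~ r_1 \asgn x ~\{I \land r_1 = 0\}~ \bigl(~\guard{r_1 = 42} \bef y \asgn 42 ~~\choice~~ \guard{r_1 \neq 42}~\bigr) ~\{I\},
\]
the assertion $I \land r_1 = 0$ after the load is justified because $I$ implies $x = 0$; consequently the store $y \asgn 42$ inherits precondition $I \land r_1 = 0 \land r_1 = 42$, which is $\False$, so that arm is vacuously locally correct, while the $\guard{r_1 \neq 42}$ arm carries $I$ through. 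The right thread is annotated symmetrically, with $r_2$ in place of $r_1$ and $x \asgn 42$ as the (unreachable) store.

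Local correctness is then immediate, and the one check with any substance is interference freedom --- which is exactly where the apparent circularity of the informal reading (``$r_1 = 42$ would require $x = 42$, which would require $r_2 = 42$, which would require $y = 42$, which would require $r_1 = 42$'') is resolved. Each assertion of one thread must be shown stable under every atomic step of the other, but the only step of a thread that writes a shared variable is its guarded store, whose precondition is $\False$ and so cannot be taken, and its load writes only the thread-local variable $r_1$ (resp.\ $r_2$), which occurs in no assertion of the other thread and not in $I$. So every annotation survives, the parallel rule gives $\{I\}~\ootaplain~\{I\}$, and weakening the precondition to $\vinitxyrr$ yields $\htrip{\vinitxyrr}{\ootaplain}{x = 0 \land y = 0}$. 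I expect the main obstacle to be expository rather than technical: one must make visible that the guarded stores are dead code and that this is precisely why the thin-air write cannot occur --- which in turn sets up the contrast with the ``disallowed'' variant treated later in \refsect{oota}. A minor, entirely routine, point is that $\vinitxyrr$ also fixes $r_1$ and $r_2$ initially, although the proof uses only the values subsequently read from $x$ and $y$.
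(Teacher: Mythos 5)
Your proposal is correct and follows essentially the same route as the paper, which simply notes the triple is "trivial using Owicki-Gries reasoning, checked in Isabelle/HOL." You have merely spelled out the details the paper leaves implicit: the invariant $x = 0 \land y = 0$, the vacuous local correctness of the guarded stores (their preconditions are $\False$), and the trivial interference-freedom checks, all of which are sound.
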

\begin{proof}
Trivial using Owicki-Gries reasoning, checked in Isabelle/HOL \cite{OGinIsabelle}.
\end{proof}

However, this behaviour is allowed under the \Clang memory model according to the specification
(although compiler writers are discouraged from implementing it!).
The behaviour is allowed in our framework, that is,
it is possible for both $r_1$ and $r_2$ to read 42.
This is because (unlike hardware memory models \cite{pseqc-arXiv}) we allow stores to come before guards (via the $\roG$ relation, \eg, \refeqn{c11ro-gw}).
\begin{theorem}
$\reachtrip{\vinitxyrr}{\ootaC}{x = 42 \land y = 42}$.
\end{theorem}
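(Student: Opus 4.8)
The plan is to apply \refth{reasoning-possible}: it suffices to construct a plain, purely sequential command $\cmdc'$ with $\ootaC \refsto \cmdc'$ and $\htrip{\vinitxyrr}{\cmdc'}{x = 42 \land y = 42}$. For $\cmdc'$ I would take an execution in which \emph{both} stores happen before either load. Such an execution exists under $\Cmm$ precisely because a store may be reordered before a guard, e.g.\ $\guard{r_1 = 42} \ro y \asgn 42$ by \refeqn{c11ro-gw} --- the fence and ordering-constraint components of $\ro$ being vacuous here, as there are no fences and every access is relaxed. This is exactly the feature that, as the surrounding text observes, separates $\Cmm$ from the hardware memory models.

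The reduction would proceed thread-by-thread. In the left thread, \reflaw{chooseL} commits to the then-branch of the conditional, so (unfolding the empty-else conditional and using \reflaw{pseqc-assoc}) the thread refines to $r_1 \asgn x \ppseqc \guard{r_1 = 42} \ppseqc y \asgn 42$. I would then hoist $y \asgn 42$ to the front: \reflaw{2actions-reduce} with \reflaw{fix-interleaving}, using $\guard{r_1 = 42} \ro y \asgn 42$ (\refeqn{c11ro-gw}), gives $\guard{r_1 = 42} \ppseqc y \asgn 42 \refsto y \asgn 42 \bef \guard{r_1 = 42}$; and \reflaw{reorder-action}, using $r_1 \asgn x \ro y \asgn 42$ (distinct shared variables, \refeqn{rlx-wr-ro}), moves $y \asgn 42$ in front of the load as well, while the residual $r_1 \asgn x \ppseqc \guard{r_1 = 42}$ retains its order by \reflaw{2actions-keep-order} since $\datadep{r_1 \asgn x}{\guard{r_1 = 42}}$. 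Hence the left thread refines to $y \asgn 42 \bef r_1 \asgn x \bef \guard{r_1 = 42}$, and symmetrically the right thread refines to $x \asgn 42 \bef r_2 \asgn y \bef \guard{r_2 = 42}$. Using that $\refsto$ is a precongruence (monotone under $\pl$), and then applying \reflaw{fix-interleaving} repeatedly to fix one interleaving of the two now-sequential threads, we obtain $\ootaC \refsto \cmdc'$ with
\begin{equation*}
\cmdc' \sdef y \asgn 42 \bef x \asgn 42 \bef r_1 \asgn x \bef r_2 \asgn y \bef \guard{r_1 = 42} \bef \guard{r_2 = 42} .
\end{equation*}

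Finally I would discharge $\htrip{\vinitxyrr}{\cmdc'}{x = 42 \land y = 42}$ by elementary Hoare logic on the sequential program $\cmdc'$: from $x = y = 0$ the two stores establish $x = 42$ and $y = 42$, the two loads then set $r_1 = r_2 = 42$, so both guards $\guard{r_1 = 42}$ and $\guard{r_2 = 42}$ are satisfied --- the trace is feasible and non-empty --- and the terminal state still has $x = 42 \land y = 42$. Combining this triple with $\ootaC \refsto \cmdc'$ via \refth{reasoning-possible} yields $\reachtrip{\vinitxyrr}{\ootaC}{x = 42 \land y = 42}$. The step that carries the conceptual weight, and that I expect to need the most care, is the first move in each thread: using \reflaw{chooseL} to take the then-branch \emph{before} the guard has been evaluated, and then letting the store escape in front of both the guard and the load. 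This is sound because a guard that later evaluates to false merely yields an infeasible (hence discarded) trace, and in the chosen execution the cross-thread store makes each guard hold; everything downstream is a mechanical use of the reduction laws of \refsect{reduction}, paralleling the counterexample construction for $\MPC$ in \refth{rgq-mpc}.
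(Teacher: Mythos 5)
Your proposal is correct and follows essentially the same route as the paper's own proof: take the true branch via \reflaw{chooseL}, hoist the store past the guard and the load using $\guard{r_1 = 42} \ro y \asgn 42$ and $r_1 \asgn x \ro y \asgn 42$ (via \reflaw{2actions-reduce}, \reflaw{fix-interleaving} and \reflaw{reorder-action}), interleave the two reduced threads into the same sequential command $\cmdc'$, and conclude with Hoare logic and \refth{reasoning-possible}. The only differences are cosmetic (explicitly citing \reflaw{2actions-keep-order} for the residual load--guard order and precongruence under $\pl$, which the paper uses implicitly).
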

\begin{proof}
We have $\guard{r_1 = 42} \ro y \asgn 42$, and similarly for $x$, hence
\begin{derivation}
	\step{
		r_1 \asgn x \ppseqc (\IFsC{r_1 = 42}{y \asgn 42}) 
	}

	\trans{\refeq}{\Refeqndefn{if}}
	\step{
		r_1 \asgn x \ppseqc 
			(\guard{r_1 = 42} \ppseqc y \asgn 42)
			\choice
			\guard{r_1 \neq 42}
	}

	\trans{\refsto}{\reflaw{chooseL}}
	\step{
		r_1 \asgn x \ppseqc \guard{r_1 = 42} \ppseqc y \asgn 42
	}

	\trans{\refeq}{\reflaw{2actions-reduce}}
	\step{
		r_1 \asgn x \ppseqc (\guard{r_1 = 42} \pl y \asgn 42)
	}

	\trans{\refsto}{\reflaw{fix-interleaving} (taking the right-hand action); \reflaw{reorder-action} from $r_1 \asgn x \ro y \asgn 42$}
	\step{
		y \asgn 42 \bef r_1 \asgn x \bef \guard{r_1 = 42} 
	}

	\end{derivation}
The second process reduces similarly to 
$
		x \asgn 42 \bef r_2 \asgn y \bef \guard{r_2 = 42}$.
Interleaving the two processes (\reflaw{fix-interleaving}) gives the following reduction
to a particular execution.
\begin{eqnarray*}
\ootaC \arefsto 
		(y \asgn 42 \bef r_1 \asgn x \bef \guard{r_1 = 42} )
		\pl
		(x \asgn 42 \bef r_2 \asgn y \bef \guard{r_2 = 42})
\\
\arefsto
		y \asgn 42 \bef 
		x \asgn 42 \bef 
		r_1 \asgn x \bef 
		r_2 \asgn y \bef 
		\guard{r_1 = 42} 
		\bef
		\guard{r_2 = 42}
\end{eqnarray*}
Straightforward sequential reasoning gives the following.
\begin{equation}
	\htrip{\vinitxyrr}{
		y \asgn 42 \bef 
		x \asgn 42 \bef 
		r_1 \asgn x \bef 
		r_2 \asgn y \bef 
		\guard{r_1 = 42} 
		\bef
		\guard{r_2 = 42}
	}
	{x = 42 \land y = 42}.
\end{equation}
The final state is therefore possible
by \refth{reasoning-possible}.
\end{proof}

Under \emph{hardware} weak memory models (the observable effects of) writes can not happen before branch points, and so out-of-thin-air behaviours 
are not possible.

Consider 
the following variant of $\oota$.
\begin{equation}
\labeleqndefn{ootaD}
	\ootaD
	\sssdef
	r_1 \asgn x \ppseqc (\IFsC{r_1 = 42}{y \asgn r_1}) 
	~~\pl~~
	r_2 \asgn y \ppseqc (\IFsC{r_2 = 42}{x \asgn r_2}) 
\end{equation}
The inner assignments have changed from $y \asgn 42$ (resp. $x \asgn 42$) to $y \asgn r_1$ (resp. $x \asgn r_2$).
Arguably the compiler knows that within the true branch of the conditional it must be the case that 
$r_1 = 42$, and thus the assignment $y \asgn r_1$ can be treated as $y \asgn 42$, reducing to the original $\ootaA$.
But this outcome is expressly disallowed, by both the standard and naturally in our framework.
That is, we can show that every possible final state satisfies $x = y = 0$.
\begin{theorem}
\labelth{ootaD-c11}
$\htrip{\vinitxyrr}{\ootaD}{x = 0 \land y = 0}$.
\end{theorem}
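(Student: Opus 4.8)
The plan is to exploit that, unlike $\oota$, the program $\ootaD$ (\refeqndefn{ootaD}) writes $y \asgn r_1$ and $x \asgn r_2$ rather than the constants, so each load \emph{carries a dependency} into both the guard and the write of its own thread; this pins the load to the front of its thread, and $\ootaD$ then reduces, by the laws of \refsect{reduction}, to a plain program to which the Owicki-Gries argument behind \refth{oota-plain} applies essentially verbatim.

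First I would record the reordering facts. In the left thread $r_1 \asgn x \ppseqc \IFsC{r_1 = 42}{y \asgn r_1}$ we have $r_1 \in \wv{r_1 \asgn x}$ and $r_1 \in \rv{\guard{r_1 = 42}} \int \rv{y \asgn r_1}$, so $\datadep{r_1 \asgn x}{\guard{r_1 = 42}}$, $\datadep{r_1 \asgn x}{\guard{r_1 \neq 42}}$ and $\datadep{r_1 \asgn x}{y \asgn r_1}$; by \refmm{Gmm} these give $r_1 \asgn x \nroG \acb$, hence $r_1 \asgn x \nro \acb$ (\refmm{Cmm}), for every action $\acb$ occurring in the conditional, so by \refrule{pseqcA} no visible action can be scheduled before $r_1 \asgn x$ -- it is always the first visible action of the thread (and symmetrically $r_2 \asgn y$ in the right thread). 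The only reordering left inside a thread is $\guard{r_1 = 42} \ro y \asgn r_1$ (resp. $\guard{r_2 = 42} \ro x \asgn r_2$); since these pairs are interference-free they are also order-independent (the Disjointness theorem), so this reorder creates no new final states.

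Then comes the reduction. Applying \reflaw{2actions-reduce} to the taken branch, the stated equivalence $\aca \pl \acb \refeq (\aca \bef \acb) \choice (\acb \bef \aca)$, distribution of $\bef$ over $\choice$ (immediate from $\Meaning{c \bef d} = \Meaning{c} \cat \Meaning{d}$ and $\Meaning{c \choice d} = \Meaning{c} \union \Meaning{d}$), and \reflaw{dist-choice-pl}, $\ootaD$ is trace-equivalent to a top-level $\choice$ of finitely many plain parallel programs, each of the shape $(r_1 \asgn x \bef B_L) \pl (r_2 \asgn y \bef B_R)$ with $B_L \in \{\, \guard{r_1 = 42} \bef y \asgn r_1,\ y \asgn r_1 \bef \guard{r_1 = 42},\ \guard{r_1 \neq 42} \,\}$ and $B_R$ the symmetric set. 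Each disjunct is plain, so by \refth{imp-syntax-equiv} its effect is the standard relational denotation and Owicki-Gries is available. The key observation is that $x = 0 \land y = 0$ is an inductive invariant of every interleaving of every disjunct: the only writes are $y \asgn r_1$ -- but $r_1$ holds the value just read from $x$, which is $0$ under the invariant -- and $x \asgn r_2$, which sits behind the guard $r_2 = 42$ that is never reached, since $r_2$ holds $y$'s value, which is $0$. Hence every terminating trace of every disjunct ends in $x = 0 \land y = 0$ (most disjuncts in fact have no terminating trace at all, a guard such as $\guard{r_1 = 42}$ being infeasible under the invariant, which only strengthens the conclusion). Since $\eff{\ootaD}$ is the union of the effects of the disjuncts, $\htrip{\vinitxyrr}{\ootaD}{x = 0 \land y = 0}$ follows via \refth{reasoning-equiv}; this is precisely the (Isabelle-checked) Owicki-Gries reasoning behind \refth{oota-plain}, applied disjunct-wise.

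The main obstacle is controlling the one residual reorder, of the write $y \asgn r_1$ before its own guard $\guard{r_1 = 42}$ (and symmetrically): one must be sure this cannot ``leak'' a $42$ into $y$. It cannot, because $y \asgn r_1$ writes whatever $r_1$ currently holds, and $r_1 = 42$ is possible only after $x = 42$ has been read, which the invariant forbids; so the reordered trace is either infeasible or writes $0$. The remainder is bookkeeping -- justifying formally that $r_1 \asgn x$ cannot be displaced (a careful appeal to \refrule{pseqcA} together with the lifting of $\ro$ to the conditional command, \refappendix{syntax-lifting}) and discharging the Owicki-Gries interference-freedom obligations across the disjuncts -- both routine, the conceptual content being entirely the circular-dependency argument already used for $\ootaplain$.
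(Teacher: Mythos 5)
Your proposal is correct and follows essentially the same route as the paper: you establish that the initial load $r_1 \asgn x$ blocks the entire conditional via the data dependency on $r_1$ (the paper phrases this as $\blockall{r_1 \asgn x}{(\IFC{r_1 = 42}{y \asgn r_1})}$ and \reflaw{blockall-cd}), observe that the only residual reorder $\guard{r_1 = 42} \ro y \asgn r_1$ is harmless, and then close the circular-dependency argument exactly as for $\ootaplain$. You simply make explicit the reduction to a choice of plain programs and the Owicki-Gries invariant $x = 0 \land y = 0$ that the paper leaves as ``similarly to Theorem~\ref{th:oota-plain}''.
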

\begin{proof}
The initial load into $r_1$ creates a dependency with the rest of the code, \ie,
$
	 \blockall{r_1 \asgn x}{(\IFC{r_1 = 42}{y \asgn r_1})}
$.
Hence we can sequence the initial load of $x$ (into $r_1$) with the remaining code.
\begin{eqnarray}
	r_1 \asgn x \ppseqc (\IFsC{r_1 = 42}{y \asgn r_1}) 
	\ssrefeq
	r_1 \asgn x \bef (\IFsC{r_1 = 42}{y \asgn r_1}) 
\end{eqnarray}
Although the guard and the assignment in the conditional may be reordered with each other (\ie, $\guard{r_1 = 42} \ro y \asgn r_1$), the fact that the initial load must happen first means that,
similarly to \refth{oota-plain},
there is no execution of $\ootaD$ in which a non-zero value is written to any variable.
\end{proof}

Note that
the \Clang standard allows the behaviour of $\oota$ and forbids the behaviour of $\ootaD$, and both results arise
naturally in our semantics framework without the introduction of any ad-hoc mechanisms.  We return to the question of whether guards should be allowed to simplify instructions in 
\refsect{sfp}.

\section{Incremental evaluation of code}
\labelsect{incremental-evaluation}

In this and subsequent sections we consider more complex aspects of the \Clang language and execution where they relate to concurrent behaviour,
namely, in this section, non-atomic evaluation of instructions (until now we have assumed all instructions appearing in the text of the program
are executed in a single indivisible step), in \refsect{expression-optimisations} optimisations of expressions (reducing expressions to improve efficiency), 
and in \refsect{forwarding} forwarding (using earlier program text to simplify later instructions).
We show how each can be incorporated into the framework straightforwardly 
without any need for change to the underlying definition of the $\Cmm$ memory model,
although the consequences for reasoning about particular programs may not be straightforward.

\subsection{Incremental evaluation of expressions}

In \Clang one cannot assume expressions are evaluated in a single state.
That is, programmers are allowed to write ``complex'' assignments and conditions but these may be compiled into multiple (indivisible)
assembler instructions.  For instance, the assignment $z \asgn x + y$ may compile into (at least) three separation instructions, one to load the value of $x$, one to load the value of $y$,
and one to finally store the result to $z$.

\renewcommand{\load}[2]{\guard{#1 = #2}}
\renewcommand{\loadx}[1]{\load{x}{#1}}
\renewcommand{\loady}[1]{\load{y}{#1}}
\newcommand{\loadxA}[1]{\load{\modxA}{#1}}
\newcommand{\loadyX}[1]{\load{\modyX}{#1}}
\newcommand{\loadxv}{\loadx{v}}
\newcommand\ry[1]{r_{y#1}}

\newcommand{\lex}[1]{\abs{#1}}
\newcommand{\apply}[1]{\mathsf{apply}(#1)}
\newcommand{\applyu}[1]{\apply{\unop, #1}}
\newcommand{\applyuv}{\applyu{v}}
\newcommand{\applyb}[1]{\apply{\binop, #1}}
\newcommand{\applybvv}{\applyb{v_1, v_2}}

We give an operational semantics for incremental expression evaluation in \reffig{semantics-expr}.  
(We use the term ``incremental'' rather than the more usual ``non-atomic'' to avoid terminology clash with \Clang's \T{atomics}.)
Recall the syntax of an expression $e$ from \reffig{c11-ro}.
Each evaluation step of $e$ either loads a value of a free variable or reduces the expression in some way, and 
evaluation stops when a single value remains.
\refrule{eval-var} states that a variable access $\modxocs$ is evaluated to a value $v$ via the guard $\guard{\modxocs = v}$.  
Any choice of $v$ that is not the value of $x$ will result in a false guard,
\ie, leads to an infeasible behaviour, which can be ignored.  Only the correct value for $v$ leads to a feasible behaviour.
Note that the set of ordering constraints on $x$ also appears in the label.
\refrule{eval-unop} evaluates a unary expression $\unop e$ by simply inductively evaluating $e$, while 
\refrule{eval-binop} similarly evaluates each operand of a binary expression, in either order (it is of course straightforward to instead insist on left-to-right evaluation).
\refrule{eval-apply} uses a meta-level functor application method $\apply{.}$ to calculate 
the final value of an expression once all variables have been evaluated.

\newcommand{\shortrule}[3]{
	\shorteqn{#1}{\labelrule{#2} #3 }
}

\begin{figure}
	\shortrule{0.27\textwidth}{eval-var}{
		\modxocs \ttra{\guard{\modxocs = v}} v
	}
	\shortrule{0.35\textwidth}{eval-unop}{
		\Rule{
			e \tra{\aca} e'
		}{
			\unop e
			\tra{\aca}
			\unop e'
		}
	}

	\shortrule{0.66\textwidth}{eval-binop}{
		\Rule{
			e_1 \tra{\aca} e_1'
		}{
			e_1 \binop e_2
			\tra{\aca}
			e_1' \binop e_2
		}
		\qquad
		\Rule{
			e_2 \tra{\aca} e_2'
		}{
			e_1 \binop e_2
			\tra{\aca}
			e_1 \binop e_2'
		}
	}

	\shortrule{0.66\textwidth}{eval-apply}{
		\Rule{
			\applyuv = v'
		}{
			\unop v \tra{\tau} v'
		}
		\qquad
		\Rule{
			\applybvv = v'
		}{
			v_1 \binop v_2 \tra{\tau} v'
		}
	}

\caption{Incremental expression evaluation semantics}
\labelfig{semantics-expr}
\Description{TODO}
\end{figure}

As an example, consider the following possible evaluation of the expression $\modxA + \modyX$.
\begin{equation}
	\modxA + \modyX \ttra{\loadxA3} 3 + \modyX \ttra{\loadyX2} 3 + 2 \tra{\tau} 5
	\labeleqn{expr-eval-example}
\end{equation}
The first step applies \refrule{eval-binop} and (inductively) \refrule{eval-var}, and the second step is similar.
The choice of values 3 and 2 are arbitrary, and any values could be chosen, reflecting the behaviour of the code in
any state.
The third and final step follows from (the assumed interpretation) $\apply{+, 3, 2} = 5$.
Note that the (relaxed) load of $y$ is not restricted by the acquire of $x$ appearing within the same expression, since we have given a nondeterministic expression evaluation order.
Of course, one can change the rule for evaluating binary expressions to evaluate left-to-right and consider constraints appearing ``earlier'' in the expression.

In practice \Clang is not restricted to ``laboriously'' evaluating each instruction step by step; in some cases evaluation can be wrapped 
into a single optimisation.  We give such a rule in \refsect{expression-optimisations}, which subsumes \refrule{eval-apply}.

\subsection{Incremental execution of instructions}

Now consider the incremental execution of a single instruction
(for brevity, in this section we assume a single instruction $\aca$ is the base action of the language, rather than 
a list $\vecaca$ as in previous sections \prefeqndefn{cmd}; we give a full semantics for incremental execution of lists of instructions in \refappendix{indivis-vecaca}).
We have the concept of \emph{indivisible} ($\indivisword$) actions, which are the only instructions that may be executed directly in the
operational semantics. 
We define an indivisible instruction as one where there are no shared variables to be read, \ie,
\begin{eqnarray*}
\labeleqn{defn-indivis}
	\indivisa \asdef \rsva = \ess
\end{eqnarray*}
From this we can derive
\begin{equation}
	\indivisact{x \asgn e} 
	\iff
	\sve = \ess \qquad
	\indivisact{\guarde} 
	\iff
	\sve = \ess \qquad
	\indivisact{\fencepf} 
	\iff
	\True   
	\labeleqndefn{indivisact}
\end{equation}
For instance, $x \asgn 1$ is indivisible, while $r \asgn y$ is not -- $y$ must be separately loaded and the result later stored into $r$.

\begin{figure}

\shortrule{0.3\textwidth}{eval-asgn}{
	\Rule{
		e \tra{\aca} e'
	}{
		x \asgn e \tra{\aca} x \asgn e'
	}
}
\shortrule{0.3\textwidth}{eval-guard}{
	\Rule{
		e \tra{\aca} e'
	}{
		\guarde \tra{\aca} \guard{e'}
	}
}
\shortrule{0.3\textwidth}{eval-indivis}{
	\Rule{
		\indivisa
	}{
		\aca \tra{\aca} \Nil
	}
}

\caption{Incremental instruction execution semantics}
\labelfig{semantics-non-atomic-instructions}
\Description{TODO}
\end{figure}

Incremental execution rules for instructions are given in \reffig{semantics-non-atomic-instructions}.
\refrule{eval-asgn} states that an assignment instruction is evaluated by first evaluating the assigned expression,
and similarly \refrule{eval-guard} states that a guard is executed by first incrementally evaluating the expression.
\refrule{eval-indivis} states that
directly executable instructions can be executed as a single action. 
This rule applies for fences, and when evaluation of assignment or guard instructions has reduced them to an indivisible form.
Note that we allow the (final) evaluation steps to include an arbitrary number of local variables.
We insist only on shared variables being evaluated in separate steps, as these involve interactions with the memory system.  

\newcommand{\tsep}{~~~}

\newcommand{\pChoice}[2]{\left(\underset{#1}{\Bigg\sqcap} ~ #2 \right)}
\newcommand{\pChoicev}[1]{\pChoice{v}{#1}}

As an example of instruction evaluation, recalling \refeqn{expr-eval-example}, we place this expression evaluation in a release write.
\begin{equation*}
	\modzR \asgn \modxA + \modyX \xtra{\loadxA3 \tsep \loadyX2} \modzR \asgn 5 \ttra{\modzR \asgn 5} \Nil
	\labeleqn{instr-eval-example}
\end{equation*}
The first two steps are inherited from the expression evaluation semantics (via \refrule{eval-asgn}).  The final step is via \refrule{eval-indivis}, noting
$\indivisact{\modzR \asgn 5}$.
The individual evaluation of local variables is not affected by the shared memory system
and hence the following incremental execution is also allowed (where we leave implicit \relaxed accesses), 
\begin{equation*}
	z \asgn x + r \ttra{\loadx3} z \asgn 3 + r \ttra{z \asgn 3 + r} \Nil
	\labeleqn{instr-eval-example2}
\end{equation*}

Incremental execution 
makes explicit what a compiler might do with C code involving shared variables.  Note that, importantly, the reordering relation did not have to change or be reconsidered at all,
even though the scope of \Clang considered was increased (and made more complex to reason about, as is always the case when considering incremental evaluation).  

\subsection{Reasoning about incremental evaluation}
\labelsect{incremental-reasoning}

Syntax-directed, general inference rules (\eg, in an Owicki-Gries or rely/guarantee framework) 
are rare for concurrent languages with incremental evaluation of expressions and instructions,
irrespective of weak memory models.
For instance, a simple program like $x \asgn x + 1 \pl x \asgn x + 1$ can result in final states where $x \in \{1,2\}$ when
executed incrementally, but this is not immediately derivable from typical compositional proof rules
\cite{HayesNondetExprEval}.
The problem is that, in general, syntax-directed proof rules do not directly apply as there are many places where interference may occur, which do not neatly align with syntactic terms.
The situation is more complicated again with non-deterministic evaluation order, as specified by the \Clang standard.

From a practical perspective the issue is typically resolved by
making the atomicity explicit (possibly requiring the introduction of new temporary variables), \ie, we may rewrite the above program as follows.
\begin{eqnarray}
	(r_1 \asgn x ; x \asgn r_1 + 1) 
	\pl
	(r_2 \asgn x ; x \asgn r_2 + 1) 
\end{eqnarray}
In this format one can apply standard (non-incrementally evaluated) syntax-directed proof rules, such as Owicki-Gries or rely/guarantee,
and show that $x \in \{1,2\}$ in the final state (possibly requiring the further introduction of auxiliary variables or other techniques). 
Of course specialised rules to handle particular forms of incrementally-evaluated instructions can be derived, and these may be applied in some cases, but in general the intent 
is to precisely deal with communication/behaviour as written in the text of program.

\newcounter{remarkcounter}[section]
\counterwithin{remarkcounter}{section}
\newtheorem{remark}[remarkcounter]{Remark}

\newcommand{\atomicsyntaxstructured}{atomic-syntax-structured\xspace}
\newcommand{\atmsynstruct}{atomic-syntax-structured\xspace}
\newcommand{\Atmsynstruct}{Atomic-syntax-structured\xspace}

As the difficulty of reasoning about possibly dynamically changing code structure 
impacts on reasoning about \Clang programs, especially with reordering and incremental evaluation, we
make this clearer by expressing it in terms of a definition and some remarks.
\begin{definition}[\Atmsynstruct code]
\labeldefn{atmsynstruct}
A command in standard imperative programming syntax, where
all basic building blocks (conditions, assignments) of command $c$ are evaluated/executed atomically, and execution proceeds in program-order,
is \emph{\atmsynstruct} code.
\end{definition}

Note that a subset of \impro can be \atmsynstruct, especially if taking the \emph{plain} subset (\refdefn{plain}) and using the semantics of \refsect{semantics}.

\begin{remark}
Most inference systems for concurrent code work on the basis the code is \atmsynstruct; it is non trivial to apply syntax-based approaches if the syntax does not directly map to execution.
Often the atomicity is made explicit by introducing temporary variables, or a non-syntax based approach is used for verification, \eg, translating into automaton systems where, again, the
atomicity is explicit.
\end{remark}

\begin{remark}
Many other approaches are still applicable to non-\atmsynstruct code, for instance, model checking.
\end{remark}
We emphasise that \Clang programs are in general not \atmsynstruct, and thus complicates analysis by some techniques, regardless of whether or not the \Cmm memory model is taken into account.

It is beyond the scope of this paper to develop rules that handle non-\atmsynstruct code
but, as before, one may apply such rules after reduction.
We argue that the level of granularity should be made explicit, or in other words, programmers (who wish to do analysis) should restrict
themselves to instructions that are directly executable.
For instance, normal assignments that reference as most one shared variable (see \eg, \cite{HayesRGLaws-arXiv} for rules coping with such situations).

If the developer insists on reasoning about code that is not \atmsynstruct then 
some of the reduction rules need provisos under incremental execution.
For instance, \reflaw{2actions-keep-order}
holds only when both instructions are indivisible, \ie, it must be updated to ensure the relevant instructions are $\indivisword$.
\begin{equation*}
	\aca \nro \acb \land \indivisact{\aca, \acb} \ssimp \aca \ppseqc \acb \refeq \aca \bef \acb
\end{equation*}
If either is not indivisible then
there may be parts of $\acb$ that can be incrementally evaluated before $\aca$, for instance consider:
\begin{equation*}
	x \asgn 1 
	\ppseqc
	z \asgn x + y
\end{equation*}
Although
$
	x \asgn 1 
	\nro
	z \asgn x + y
$
due to $x$, the load of $y$ can come before $x \asgn 1$, \ie,
it is not the case that
$
	\blockall{
	x \asgn 1 
	}{
	z \asgn x + y
	}
$
under incremental execution.
The reference to $y$ can be incrementally evaluated and reordered before the store to $x$.
\begin{equation}
	x \asgn 1 
	\ppseqc
	z \asgn x + \underline{y}
	\ttra{\loady3}
	x \asgn 1 
	\ppseqc
	z \asgn x + \underline{3}
	\xtra{x \asgn 1 \tsep \loadx1 \tsep z \asgn 4}
	\Nil
\end{equation}
As with proof methods, specific reduction rules to handle incremental evaluation can be derived (possibly using a program-level encoding of evaluation as given in \cite{SemanticsSRA,FACS19}).

\section{Expression optimisations}
\labelsect{expression-optimisations}

\newcommand{\optop}{\overset{\mathsmaller{\sf{opt}}}{\xsucc}}
\newcommand{\expropt}[2]{#1 \optop #2}
\newcommand{\nexpropt}[2]{#1 \not\optop #2}
\newcommand{\cmdopt}[2]{\expropt{#1}{#2}}
\newcommand{\optwrt}[3]{#3 \imp \expropt{#1}{#2}}
\newcommand{\opteep}{\expropt{e}{e'}}

\newcommand{\lexmore}{\overset{\mathsmaller{\sf{lex}}}{\xsucc}}
\newcommand{\ocmore}{\overset{\mathsmaller{\sf{oc}}}{\succeq}}

\newcommand{\eqnstackrcll}[1]{
	\left\{\begin{array}{rcll}
		#1
	\end{array}\right.
}

\newcommand{\eqnstack}[1]{
	\left\{\begin{array}{ll}
		#1
	\end{array}\right.
}

We now consider a further important factor influencing execution of programs under the \Clang memory model:
expression optimisations
(we consider structural optimisations
for instance, changing loop structures, in \refsect{ctrans}).
There are three principles when considering ``optimising'' expression $e$ to $e'$:

\begin{itemize}
\item \emph{Value equality.}
Expression $e'$ must be
equal to $e$ in all states.
However, as we see below, extra contextual information can be used.
\item
\emph{Lexicographic simplification.}
We say $e \lexmore e'$ if $e'$ is a ``more optimised'' expression than $e$, in the sense that it is less computationally intensive to evaluate.
A precise definition of $\lexmore$ for \Clang is beyond the scope of this work, but we assume
it is irreflexive and transitive, 
and that intuitive properties such as
$3 + 2 \lexmore 5$, and $0 * r \lexmore 0$ hold.
An important aspect is that $\lexmore$ may allow the removal of variables (as in $0 * r \lexmore 0$), and this could have an effect on allowed reorderings according to \refmm{Gmm}, \Gmm
(\ie, one cannot rely on ``false dependencies'' \cite{HerdingCats}).
\item 
\emph{Memory ordering constraint simplification.}
We say $e \ocmore e'$ if $e'$ does not lose any \emph{significant} memory ordering constraints.
For instance, it may be the case that compilers should not ``optimise away'' an explicit $\seqcst$ constraint, even if valid according to the other optimisations.
Again, the precise definition of this is a matter for the \Clangctee, but we explore some of the options and their consequences below in \refsect{ocmore}.
\end{itemize}

These three constraints must be satisfied before an expression $e$ is `optimised' to expression $e'$, written $\opteep$.
\begin{eqnarray}
\labeleqndefn{opteep}
\opteep
\asdef
		e = e'
		\land
		e \lexmore e'
		\land
		e \ocmore e'
\end{eqnarray}

The following operational rule 
allows optimisations as an expression evaluation step,
superseding \refrule{eval-apply} and in some cases removes the need for \refrule{eval-binop}, etc.

\ruledefNamed{0.35\textwidth}{Optimise expression}
	{optimise-expr}{
	\Rule{
		b \entails \opteep
	}{
		e \ttra{\guard{b}} e'
	}
}

\refrule{optimise-expr} 
states that an expression $e$ can be optimised to some expression $e'$, in the process emitting a guard $\guardb$,
where $b$ provides the context which makes the optimisation valid (the expression $b$ is used only to show $e = e'$ in \refeqndefn{opteep}).
The guard acts as a check that the optimisation is valid in the current state; for many optimisations $b$ will simply be $\True$.

As an example, assuming a definition of $\ocmore$ where $e \ocmore e'$ holds provided $e$ contains only relaxed or no ordering constraints,
and $e'$ has a subset of those, then the following steps are allowed by \refrule{optimise-expr}.
\begin{eqnarray*}
	&
	x \asgn r - r \tra{\tau} x \asgn 0
	&
	\mbox{Since $\True \entails r - r = 0 \land r - r \lexmore 0 \land r - r \ocmore 0$}
	\\
	&
	x \asgn r_1 - r_2 \ttra{\guard{r_1 = r_2}} x \asgn 0
	&
	\mbox{Since $r_1 = r_2 \entails r_1 - r_2 = 0 \land r_1 - r_2 \lexmore 0 \land r_1 - r_2 \ocmore 0$}
	\\
	&
	r \asgn \modX{x} * 0 \tra{\tau} r \asgn 0
	&
	\mbox{Since $\True \entails \modxX * 0 = 0 \land \modxX * 0 \lexmore 0 \land \modxX * 0 \ocmore 0$}
\end{eqnarray*}
On the other hand, given the above assumption about $\ocmore$,
	$r \asgn \modxSC * 0$ cannot be optimised to $r \asgn 0$ as this would lose a significant reordering constraint ($\modxSC * 0 \not\ocmore 0$ and hence $\nexpropt{\modxSC * 0}{0}$).
We consider other examples in the subsequent section.

\subsection{Defining allowed changes to ordering constraints}
\labelsect{ocmore}

\newcommand{\asubseteq}{&\subseteq&}

\newcommand{\acs}{\textsc{acs}}

One of the tensions in the development of the \Clang memory model is how accepted compiler optimisations interact with memory ordering constraints.
Rather than take a particular position, or try to be exhaustive, we show how different options can be expressed and their consequences 
enumerated formally and (relatively) straightforwardly.

Consider the following five possible definitions for $\ocmore$.
Recall that $\getocse$ extracts the memory ordering constraints from expression $e$ (\refeqndefn{getocs}).
We abbreviate
$\acs  \sdef \{\acquire, \consume, \seqcst\}$, as these are the significant ordering constraints that may appear in expressions
($\release$ constraints appear only on the left-hand side of assignments and thus are not subject to expression optimisation).
\begin{eqnarray}
	e \ocmore e' 
	\asdef
	\eqnstackrcll{
		\getocs{e'} \aeq \getocse
		&
		\mbox{(a) Do not modify constraints}
		\\
		\getocs{e'} \asubseteq \getocse \sssubseteq \{\relaxed\}
		&
		\mbox{(b) Simplify/eliminate relaxed}
		\\
		\getocse \asubseteq \{\relaxed\}
		&
		\mbox{(c) Strengthening allowed}
		\\
		\getocse \int \acs \aeq \getocsep \int \acs
		& 
		\mbox{(d) Never optimise $\acs$ away} 
		\\
		&
		\True
		& 
		&
		\mbox{(e) Do not constrain the compiler}
	}
	\labeleqn{defn-ocmore}
\end{eqnarray}

\newcommand{\aocmore}{&\ocmore&}
\newcommand{\dunno}{?}

\begin{enumerate}[{Option} (a)]
\item
is the most conservative option and simply says the compiler must not change the constraints in $e$ at all. This would be simple to implement and reason about,
but possibly prevents some sensible/expected optimisations.
\item
says that only relaxed or non-atomic expressions $e$ may be removed, and the optimised expression $e'$ can either remain
relaxed or become non-atomic itself.  Stronger constraints ($\acquire$, $\consume$, and $\seqcst$) 
will not be ``optimised away''.
\item
says that only relaxed or non-atomic expressions can be optimised, however, any such constraints can be strengthened (\eg, a $\relaxed$ access can be strengthen to $\seqcst$).
While more subtle than the other options, it imposes fewer constraints on the compiler writer, and would only have the effect of reducing the
number of behaviours of code.
\item
requires $\acquire$ and $\seqcst$ constraints to be maintained, if they occur, but other parts of a complex expression can be optimised.
\item
allows full freedom to the compiler, leaving the programmer unable to rely entirely on memory ordering constraints to enforce order.
\end{enumerate}

\newcommand{\LEXCOL}[1]{}

We give some examples in tabular form to understand the consequences of these choices.
\begin{equation*}
\begin{array}{r@{~}c@{~}lcccccc}
&& & (a) & (b) & (c) & (d) & (e) & \LEXCOL\lexmore \\
5*4 \aocmore 20 &
	\tick & \tick & \tick & \tick  & \tick & \LEXCOL\tick
\\
\modxX * 0 \aocmore 0 &
	\tick & \cross & \tick & \tick  & \tick & \LEXCOL\tick
\\
\modxSC * 0 \aocmore 0 &
	\cross & \cross & \cross & \cross  & \tick & \LEXCOL\tick
\\
\modxSC \aocmore \modxX &
	\cross & \cross & \cross & \cross  & \tick & \LEXCOL\dunno
\\
\modxX \aocmore \modxSC &
	\cross & \cross & \tick & \cross  & \tick & \LEXCOL\dunno
\end{array}
\end{equation*}
As with incremental evaluation of expressions, while it is straightforward to incorporate optimisations into the semantics, 
programs may not be \atmsynstruct (\refdefn{atmsynstruct}),
and hence
a programmer interested in serious analysis 
is well advised to avoid potential simplification of expressions that a compiler may or may not choose to make.

\subsection{The consume ordering constraint}
\labelsect{consume}
\newcommand{\condep}{\mokeyword{condep}}

Because \Clang's consume ($\consume$) constraint has no reordering constraint beyond that of data dependencies we have for brevity not included it in earlier sections.
The intent of a $\consume$ load is to indicate to the compiler not to lose data dependencies during optimisations.
Options (b)-(e) for $\ocmore$ allow $\relaxed$ variable accesses to be removed, which may also remove a data dependency to some earlier load.
This is the situation that a $\consume$ load is intended to avoid.
For instance, the following optimisation should \emph{not} be allowed.  
\begin{equation*}
	r \asgn \modxC \ppseqc y \asgn r * 0
	\tra{\tau}
	r \asgn \modxC \ppseqc y \asgn 0
\end{equation*}
The flow-on effect is that now $y \asgn 0$ is independent of $r \asgn \modxC$ and may be reordered before it
(as $\consume$ is equivalent to $\relaxed$ in calculating $\roOC$).  
To faithfully implement this a compiler must track data dependencies, and apparently this has never been implemented; as such all known compilers
translate $\consume$ constraints directly to the stronger $\acquire$ constraint (which on many architectures results in a more computationally expensive mechanism than necessary).
Such syntactic tracking is straightforward, if tedious, to implement in a formal setting. 
For instance, by tracing data-dependencies (via write and read variables) from all $\consume$ loads,
marking them with some special ordering constraint `$\condep$' (consume-dependent), and requiring the definition of $\ocmore$ to never allow $\condep$ constraints to be removed.
As always, a programmer is well-advised to minimise the use of $\consume$ constraints with later code that may allow expression optimisations to break data dependencies.

\subsection{Examples}
We show how the compiler may ``optimise-away'' an ordering constraint and hence open up more behaviours than the programmer may expect.
In the following assume Option (b) above for the definition of $\ocmore$.  
A programmer may choose to enforce order in the $\MP$ program (\refeqndefn{mp}) using a data dependency.
\begin{equation*}
	x \asgn 1 \scomp flag \asgn 1 + (x * 0)
\end{equation*}
Although $\datadep{x \asgn 1}{flag \asgn 1 + (x * 0)}$ and thus it seems the assignment to $y$ must occur after the assignment to $x$,
after optimising the second instruction the updates can occur in the reverse order.  That is, by \refrule{optimise-expr} (within \refrule{eval-asgn}),
$flag \asgn 1 + (x * 0) \tra{\tau} flag \asgn 1$, and thus we have the following behaviour.
\begin{equation}
	x \asgn 1 \scomp flag \asgn 1 + (x * 0)
	\tra{\tau}
	x \asgn 1 \scomp flag \asgn 1
	\xtra{flag \asgn 1}
	x \asgn 1 
	\ttra{x \asgn 1}
	\Nil
\end{equation}
As a consequence, for reasoning about the original code, one must accept the following refinement.
\begin{equation*}
	x \asgn 1 \scomp flag \asgn 1 + (x * 0)
	\ssrefsto
	flag \asgn 1 \bef x \asgn 1 
\end{equation*}
Alternatively
a programmer may choose to avoid the dependence on data and instead enforce order using an \seqcst constraint in the flag expression on an unrelated variable.
Under option (e) this would be erroneous, since $\expropt{1 + (\modySC * 0)}{1}$, and thus as above,
\begin{equation*}
	x \asgn 1 \scomp flag \asgn 1 + (\modySC * 0)
	\ssrefsto
	flag \asgn 1 \bef x \asgn 1 
\end{equation*}

\section{Forwarding}
\labelsect{forwarding}

A key aspect of hardware pipelines is that instructions later in the pipeline can read values from earlier instructions (under certain circumstances).  
At the microarchitectural level,
rather than waiting for an earlier instruction to commit and write a value to a register before reading that value from the register,
it may be quicker to 
read an ``in-flight'' value directly from an earlier instruction before it commits.
This behaviour may be implemented by so-called ``reservation stations'' or related mechanisms \cite{Tomasulo67}.
Fortunately there is a
straightforward way to capture this in a structured program, by allowing later instructions that are reordered to pick up and use values in the text of earlier instructions.
For instance, the rules of the earlier section forbid $r \asgn x$ reordering before $x \asgn 1$, \ie, $x \asgn 1 \nro r \asgn x$, which is conceptually prevented because
reading an earlier value of $x$ (than 1) is prohibited by
coherence-per-location (formally, $x \asgn 1 \nro r \asgn x$ because $\wv{x \asgn 1} = \{x\} \subseteq \fv{r \asgn x}$).  
However processors will commonly just use the value 1 and assign it to $r$ as well, before any other process has seen the write to $x$.  

In hardware the mechanism of using earlier values in later instructions is called \emph{forwarding}.
Notationally we write
$\fwdab$ to mean the effect of forwarding (assignment) action $\aca$ to $\acb$,
which is just simple substitution (ignoring memory ordering constraints).
For instance, in the above situation the value 1 assigned to $x$ can be ``forwarded'' to $r$, written
$(\fwd{x \asgn 1}{r \asgn x}) = r \asgn 1$, avoiding the need to access main memory.
\begin{definition}[Forwarding]
\labeldefn{forwarding}
Given an assignment instruction $\aca$ of the form $\modxocs \asgn e$, 
forwarding of $\aca$ to an expression $f$ (written $\fwd{\aca}{f}$) is standard replacement of
instances of $x$ by $e$ within $f$, ignoring ordering constraints.
This is lifted to forwarding to instructions as below.
\begin{eqnarray*}
    \fwda{(\modocs{y} \asgn f)} \sseq \modocs{y} \asgn (\fwda{f})
	\qquad
    \fwda{\guardb} \aeq \guard{\fwda{b}}
	\qquad
    \fwda{\fencepf} \sseq \fencepf
\end{eqnarray*}
\end{definition}
Forwarding to/from commands and traces is similarly straightforward; see \refappendix{lifting-forwarding}.
Note that $\nddepab \imp \fwdab = \acb$, \ie, forwarding is relevant only if there is a data dependence.
The following examples show the relatively straightforward application of forwarding.
\begin{gather}
	\fwd{r_1 \asgn 1}{\guard{r_1 = 1}} = \guard{1 = 1} = \guard{\True} = \tau
	\notag
	\\
	\fwd{r_1 \asgn r_2}{\guard{r_1 = r_2}} = \guard{r_2 = r_2} = \guard{\True} = \tau
	\labeleqn{fwd-r1r2-g}
\end{gather}

To capture the potential for an earlier instruction to affect a later one
we generalise reordering to a triple,
$\robpab$.
Whereas earlier we considered whether $\aca \ro \acb$ directly, we now allow $\aca$ to affect $\acb$ via forwarding, 
and for the resulting instruction ($\acb'$) to be considered for the purposes of calculating reordering.
More precisely, for instructions $\aca$ and $\acb$, 
\begin{equation}
	\robpab
	\sspace
	\sdef
	\sspace
	\aca \ro \acb' 
	\land
	\acb' = \fwdab
	\land 
	\acb \ocmore \acb'
	\labeleqndefn{roa}
\end{equation}
The reordering triple notationally gives the idea of executing $\acb$ earlier with respect to $\aca$, with possible modifications due to forwarding.
The new instruction $\acb'$ is the result of forwarding $\aca$ to $\acb$,
and it must be reorderable with $\aca$ (note, therefore, that reordering is calculated \emph{after} applying forwarding). 
For instance, 
$\rocmdg{r \asgn 1}{x \asgn 1}{r \asgn x}$ expresses that $r \asgn x$ can reorder with $x \asgn 1$, after forwarding is taken into account.
Additionally the effect of forwarding must not have significantly altered the 
ordering constraints, \ie, $\acb \ocmore \fwdab$ (recall the options in \refsect{ocmore}).  
For instance, depending on the definition of $\ocmore$, it may or may not be the case that
$
	\rocmdg{r \asgn 1}{x \asgn 1}{r \asgn \modxSC}
$, 
as this depends on whether 
$
	\modxSC \ocmore 1
$.
The reordering triple lifts to traces and commands straightforwardly; see \refappendix{lifting-triples}.

We use this more general triple in place of binary reordering 
in \refrule{pseqcA}.
\\
\ruledefNamed{0.55\textwidth}{Reorder with forwarding}{ro-fwd}{
	\Rule{
        c_2 \tra{\acb} c_2'
        \qquad
        \rocmdm{\acb'}{c_1}{\acb}
    }{
        c_1 \ppseqm c_2 \tra{\acb'} c_1 \ppseqm c_2'
    }
}
\\
Now, given command $c_1 \ppseqm c_2$, and that $c_2$ can execute step $\acb$, then the composition can execute the \emph{modified} $\acb'$, 
where $\acb'$ takes in to account any forwarding that may occur from instructions in $c_1$ to $\acb$.

\labelsect{fwd-example}

Consider the simple statement $x \asgn r_1 - r_2$, which can be optimised to $x \asgn 0$ provided $r_1 = r_2$.  That is, by \refrule{optimise-expr},
$
	x \asgn r_1 - r_2
	\ttra{\guard{r_1 = r_2}}
	x \asgn 0
$.
Let us consider this statement immediately following the assignment $r_1 \asgn r_2$.
By \refdefn{forwarding}
and by \refeqn{fwd-r1r2-g}
we have
$
	\roc{\tau}{r_1 \asgn r_2}{\guard{r_1 = r_2}}
$.
Hence by applying \refrule{ro-fwd}
this program can take an initial silent step, representing an optimisation of the compiler, to simplify the assignment to $x$.
\begin{equation}
	r_1 \asgn r_2 \ppseqc x \asgn r_1 - r_2
	\tra{\tau}
	r_1 \asgn r_2 \ppseqc x \asgn 0
\end{equation}
From here the assignment to $x$ can proceed first, despite the data dependence $\datadep{r_1 \asgn r_2}{x \asgn r_1 - r_2}$,
that is,
\begin{equation*}
	r_1 \asgn r_2 \ppseqc x \asgn r_1 - r_2
	\refsto
	x \asgn 0 \bef r_1 \asgn r_2 
\end{equation*}

\OMIT{
We show some typical examples below, covering the pattern of a store to $x$ followed by a load of $x$: $\modx{\oc_1} \asgn v \ppseqc r \asgn \modx{\oc_2}$,
where we consider different combinations of $\oc_1$ and $\oc_2$.
Note that there are several cases we do not need to consider:
by well-formedness $\acquire \neq \oc_1$ and $\release \neq \oc_2$.
\begin{minipage}{0.31\textwidth}
\begin{gather*}
	\rocmdg{r \asgn 1}{\modxX \asgn 1}{r \asgn \modxX}
	\\
	\rocmdg{r \asgn 1}{\modxX \asgn 1}{r \asgn \modxA}
	\\
	\nrocmdg{r \asgn 1}{\modxX \asgn 1}{r \asgn \modxSC}
\end{gather*}
\end{minipage}
\begin{minipage}{0.31\textwidth}
\begin{gather*}
	\rocmdg{r \asgn 1}{\modxR \asgn 1}{r \asgn \modxX}
	\\
	\rocmdg{r \asgn 1}{\modxR \asgn 1}{r \asgn \modxA}
	\\
	\nrocmdg{r \asgn 1}{\modxR \asgn 1}{r \asgn \modxSC}
\end{gather*}
\end{minipage}
\begin{minipage}{0.31\textwidth}
\begin{gather*}
	\rocmdg{r \asgn 1}{\modxSC \asgn 1}{r \asgn \modxX}
	\\
	\rocmdg{r \asgn 1}{\modxSC \asgn 1}{r \asgn \modxA}
	\\
	\rocmdg{r \asgn 1}{\modxSC \asgn 1}{r \asgn \modxSC}
\end{gather*}
\end{minipage}
In the final case we have allowed an sc write to be forwarded to an sc load.  This controversial but justified, depending on whether you think
sc should block register accesses...
}

\subsection{Reduction with forwarding}
\labelsect{reduction-fwd}

We update some of the reduction rules from \refsect{reduction} to include forwarding via \refrule{ro-fwd},
essentially replacing $\aca \ro \acb$ with the more general $\roacbab$.
We assume $\indivisa$ and $\indivisb$.
Note that if $\nddep{\aca}{\acb}$ then $\roacbab \iff (\aca \ro \acb \land \acb' = \acb)$,
and hence in many cases the original rules apply.
\begin{eqnarray}
    \roacbab 
    \sspace \entails \sspace
    \aca \ppseqc \acb 
    &\refsto&
    \acb' \bef \aca
\labellaw{2actions-swap-order-fwd}
    \\
    \roacbab 
    \sspace \entails \sspace
    \aca \ppseqc \acb 
    & \refeq &
    (\aca \scomp \acb) \choice (\acb' \scomp \aca)
\labellaw{2actions-reduce-fwd}
\\
    \roaCbcb 
    \sspace \entails \sspace
    \cmdc \ppseqc \acb 
    &\refsto&
    \acb' \bef \cmdc
\labellaw{bcb}
    \\
	\aca \nro \fwdab
    \sspace \entails \sspace
    \aca \ppseqc \acb 
	\arefeq
	\aca \bef \acb
\labellaw{2actions-reduce-nofwd}
\end{eqnarray}
\reflaw{2actions-swap-order-fwd} 
expresses the reordering of $\acb$ earlier than $\aca$ but with any forwarding
from $\aca$ to $\acb$ taken into account in the promoted instruction,
while
\reflaw{2actions-reduce-fwd} is the corollary of \reflaw{2actions-reduce}.
\reflaw{bcb} is the generalisation of \reflaw{2actions-swap-order-fwd} to a command on the left.
\reflaw{2actions-reduce-nofwd} applies when reordering is not possible even taking into account the effects of forwarding,
replacing \reflaw{2actions-keep-order}.
The presence of forwarding therefore complicates reduction, however the derived properties for reasoning (\refsect{reasoning}) still apply to
any reduced program.

A subtlety of a chain of dependent instructions with forwarding is that associativity can be lost.
For instance, consider the program $x \asgn 1 \ppseqc r_1 \asgn x \ppseqc r_2 \asgn x$.
The behaviours are different depending on how this is bracketed:
$
	p_1 \sdef	(x \asgn 1 \ppseqc r_1 \asgn x) \ppseqc r_2 \asgn x
$
or
$
	p_2 \sdef	x \asgn 1 \ppseqc (r_1 \asgn x \ppseqc r_2 \asgn x)
$.
We have 
$p_1 \refsto r_2 \asgn 1 \bef r_1 \asgn 1 \bef x \asgn 1$ 
by \reflaw{bcb}, however 
$p_2 \nrefsto r_2 \asgn 1 \bef r_1 \asgn 1 \bef x \asgn 1$ 
because $r_1 \asgn x \nro r_2 \asgn x$.
Hence $p_1 \nrefeq p_2$, and so associativity has been broken.
(A more complete discussion of associativity and monotonicity is given in \cite{pseqc-arXiv}.)

The addition of forwarding also explains why a compiler transformation such as ``sequentialisation'' \cite{PromisingSemantics} is not valid.
While it is straightforward that $c \pl d \refsto c \bef d$, \ie, enforcing a strict order between two parallel processes, 
it is not the case in general that $c \pl d \refsto c \ppseqc d$, due to forwarding.
The simplest case is $x \asgn 1 \pl r \asgn x$, which has exactly two possible traces (interleavings), however 
$x \asgn 1 \ppseqc r \asgn x \refsto r \asgn 1 \bef x \asgn 1$
by \reflaw{2actions-swap-order-fwd}, where $r$ receives the value 1 before it is assigned to $x$, which is not possible when executing in parallel.

\subsection{Example}


Forwarding admits some perhaps unexpected behaviours, for instance, in the following code, although $r \asgn x$ and $x \asgn 1$ are
strictly ordered locally, it is possible for $r \asgn x$ to receive that \emph{later} value, ostensibly breaking local coherence.
\begin{theorem}
$
	\reachtrip{\vinit{x,y,r}}{
	(r \asgn x \ppseqc x \asgn 1 \ppseqc y \asgn x) 
	\pl 
	x \asgn y 
	}{
		r = 1
	}
$
\end{theorem}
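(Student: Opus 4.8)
The plan is to exhibit a single execution of the program ending in a state with $r = 1$ and then appeal to \refth{reasoning-possible}. Intuitively, even though $r \asgn x$ precedes $x \asgn 1$ and cannot be reordered past it ($r \asgn x \nro x \asgn 1$, since they are not interference-free), the value $1$ written by $x \asgn 1$ can still reach $r$: forwarding lets the third instruction $y \asgn x$ of the left process be promoted to the front of that process as $y \asgn 1$; the right process $x \asgn y$ then propagates $1$ into $x$ in memory; and only afterwards does $r \asgn x$ run, reading $1$. Concretely I will establish
\[
	(r \asgn x \ppseqc x \asgn 1 \ppseqc y \asgn x) \pl x \asgn y
	\ssrefsto
	y \asgn 1 \bef x \asgn y \bef r \asgn x \bef x \asgn 1 ,
\]
and observe that the right-hand sequential program establishes $r = 1$ from $\vinit{x,y,r}$ by elementary weakest-precondition reasoning ($y{=}1$, then $x{=}1$, then $r{=}1$, then $x{=}1$); the claim then follows from \refth{reasoning-possible}.

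The key step is the forwarding reduction of the left process. Write $c_1 \sdef r \asgn x \ppseqc x \asgn 1$. Forwarding $c_1$ into $y \asgn x$ substitutes $1$ for $x$ (the assignment to $r$ is irrelevant, as $r$ does not occur in $y \asgn x$), so $\fwd{c_1}{y \asgn x} = y \asgn 1$. The side conditions of the reordering triple hold: $c_1 \ro y \asgn 1$ because $\wv{c_1} = \{r,x\}$ is disjoint from $\fv{y \asgn 1} = \{y\}$, $\wv{y \asgn 1} = \{y\}$ is disjoint from $\fv{c_1} = \{r,x\}$, the shared reads $\rsv{c_1} = \{x\}$ and $\rsv{y \asgn 1} = \ess$ are disjoint, there are no fences, and every access is relaxed so the ordering-constraint check is vacuous; and $y \asgn x \ocmore y \asgn 1$ under any reading of $\ocmore$ that permits a relaxed access to be dropped (options (b)--(e) of \refeqn{defn-ocmore}). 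Hence $\roc{y \asgn 1}{c_1}{y \asgn x}$, and \reflaw{bcb} gives $c_1 \ppseqc y \asgn x \ssrefsto y \asgn 1 \bef c_1$. Since $r \asgn x \nro x \asgn 1$ with $\fwd{r \asgn x}{x \asgn 1} = x \asgn 1$, \reflaw{2actions-reduce-nofwd} gives $c_1 \refeq r \asgn x \bef x \asgn 1$, so the left process refines $y \asgn 1 \bef r \asgn x \bef x \asgn 1$.

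It then remains to compose with the right process and fix the interleaving. By monotonicity of $\pl$ (as already used in the proofs of the $\oota$ theorems) and two applications of \reflaw{fix-interleaving} --- promoting $y \asgn 1$ to the head of the composition, then $x \asgn y$, the trivial law $c \pl \Nil \refeq c$ cleaning up the residue --- we obtain the displayed refinement above, and we are done. The main obstacle is the first step: computing the forwarding of the \emph{command} $c_1$ into $y \asgn x$ correctly and discharging the side conditions of the triple used by \reflaw{bcb}; one must keep in mind that reordering is evaluated \emph{after} forwarding, and that admissibility depends on the chosen definition of $\ocmore$. Everything else --- the interleaving and the sequential Hoare-logic step --- is routine. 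Finally, note the associativity caveat of \refsect{reduction-fwd}: the bracketing of $r \asgn x \ppseqc x \asgn 1 \ppseqc y \asgn x$ is immaterial here, since under the alternative bracketing one first reduces $x \asgn 1 \ppseqc y \asgn x \refsto y \asgn 1 \bef x \asgn 1$ and then promotes $y \asgn 1$ past $r \asgn x$ with \reflaw{reorder-action}, reaching the same sequential form.
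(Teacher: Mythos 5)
Your proposal is correct and follows essentially the same route as the paper's own proof: establish the forwarding triple $\rocmdg{y \asgn 1}{r \asgn x \ppseqc x \asgn 1}{y \asgn x}$, apply \reflaw{bcb} to reduce the left process to $y \asgn 1 \bef r \asgn x \bef x \asgn 1$, interleave so that $x \asgn y$ runs second, and conclude with sequential Hoare reasoning and \refth{reasoning-possible}. Your extra detail (explicit discharge of the triple's side conditions, the $\ocmore$ remark, and the associativity caveat) is consistent with, and merely more explicit than, the paper's argument.
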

\begin{proof}
Call the left and right programs $p_1$ and $p_2$, respectively.
Note that in $p_1$ although $r \asgn x$ precedes the assignment $x \asgn 1$ (the only occurrence of the value 1 in the program),
and $x \asgn 1$ cannot be reordered before $r \asgn x$, this theorem states that $r \asgn x$ can read that value.
Firstly note that  $\rocmdg{y \asgn 1}{r \asgn x \ppseqc x \asgn 1}{y \asgn x}$,
that is, $y \asgn x$ can read $x \asgn 1$ and be reordered before the preceding instructions,
and hence by \reflaw{bcb} we have
\begin{equation*}
	p_1 
		\refsto
	y \asgn 1 \bef r \asgn x \bef x \asgn 1 
\end{equation*}
Hence interleaving $p_1$ and $p_2$ so that $x \asgn y$ in $p_2$ becomes the second instruction executed gives the following.
\begin{equation*}
	p_1 \pl p_2 \refsto
	y \asgn 1 \bef x \asgn y \bef r \asgn x \bef x \asgn 1 
\end{equation*}
This reordering and interleaving satisfies 
$
	\htrip{\vinitxyr}{
		y \asgn 1 \bef x \asgn y \bef r \asgn x \bef x \asgn 1 
	}{
		r = 1
	}
	$,
and thus the outcome $r = 1$ is a possible final state of $p_1 \pl p_2$ by \refth{reasoning-possible}.
\end{proof}
It appears that coherence has been broken (indirectly through a concurrent process).
However, the intuition is that the compiler (and/or the processor) decides that the programmer intended $y \asgn 1$ 
when they wrote $y \asgn x$, and it is \emph{this} value that is read by $r \asgn x$ via the second process.  To make this clearer, consider changing 
the trailing assignment $y \asgn x$ in the first process to $y \asgn x + 1$.  In this case a possible final state is $r = 2$ (but not $r = 1$), meaning that
the initial load of $x$ has read the (arguably independent) write to $y$, not the write to $x$.

Forwarding of assignments is considered standard, and behaviours such as that shown by this example are accepted (in real code,
typically a programmer will try to avoid loading a variable that has already been locally calculated, so this pattern, while certainly valid and
has its uses, is not necessarily widespread).  However, the situation is more complicated if one wishes to forward \emph{guards} as well as assignments to later instructions,
which we explore in \refsect{sfp}.

\subsection{Combining forwarding with optimisation/simplification}

To keep the discussion of forwarding relatively simple we separated it from optimisation, but we can generalise
\refeqndefn{roa} to 
incorporate optimisations as well.
\begin{eqnarray}
\labeleqndefn{roa-opt}
	\roabab
	\asdef
	\aca \ro \acb' \land \cmdopt{\acb'}{\fwdab} 
\end{eqnarray}
This definition allows any ``optimised'' version of $\acb$ to be reordered before $\aca$, provided forwarding is taken into account.
It can become the basis for
the application of \refrule{ro-fwd}, and thus the derived reduction rules in \refsect{reduction-fwd}.
For instance, in comparison with the example in \refsect{fwd-example}, we can reorder, forward and optimise in a single step since,
by \refeqndefn{roa-opt},
$
	\rocmdg{x \asgn 0}{r_1 \asgn r_2}{x \asgn r_1 - r_2}
$.
The corresponding deduction rules allow us to show, for instance, the following allowed reordering and simplification of code by the compiler.
\begin{equation}
	x \asgn y \scomp z \asgn x - y 
	\refsto
	z \asgn 0 \bef x \asgn y 
\end{equation}
because
$
	\rocmdg{\tau}{x \asgn y}{\guard{x = y}}
$.
This answers the question ``can $z \asgn x - y$ be reordered before $x \asgn y$?'': provided all references are relaxed 
then $z \asgn 0$ can be executed earlier than the update to $x$.
Of course, structured reasoning about code which is susceptible to such compiler transformations may be nontrivial as it is not \atmsynstruct.

\section{Read-from-untaken-branch (RFUB)/Self-fulfilling prophecies (SFPs)}
\labelsect{sfp}

In this section we separately consider a problematic situation which is debated by the \Clang committee,
where what is considered  reasonable compiler optimisation leads to complex behaviour which is difficult to reason about.
We show how the problematic behaviour is disallowed by the semantics we have given so far, and a small modification 
-- which we call allowing \emph{self-fulfilling prophecies} (SFP) -- can be made
which then allows the problematic behaviour; and we also show that allowing SFPs contradicts other, simpler cases which 
are expressly forbidden.  We believe this provides a firm, accessible basis on which to assess which compiler optimisations should be allowed
in the presence of shared-variable concurrency (\ie, \Clang's \T{atomics}).
Importantly the framework gives a step-based, relatively intuitive, explanation for the different possible behaviours,
and flexibility to accommodate different decisions.

\newcommand{\RFUB}{{\sf{rfub}}\xspace}
\newcommand{\ctrans}{\mathrel{\overlay{$\fun$}{$\hookrightarrow$}}} 
\newcommand{\actrans}{&\ctrans&} 

\newcommand{\sfprefsto}{\overset{\mathsmaller{sfp}}{\refsto}} 
\newcommand{\asfprefsto}{&\sfprefsto&} 

\subsection{Read-from-untaken-branch behaviour}
\labelsect{rfub}

Consider the following program, which, for particular compiler optimisations, exposes a ``read from untaken branch'' (rfub) behaviour.
\begin{eqnarray}
\labeleqndefn{rfub}
	\RFUB
	\asdef
	(r \asgn y ;
	(\IFsC{r \neq 42}{b \asgn \True ; r \asgn 42}) ;
	x \asgn r)
	\sspace
	\pl
	\sspace
	y \asgn x
\end{eqnarray}

Taking a plain, sequential execution of $\RFUB$, starting with $\neg b$ and all other variables 0,
then both $r$ and $x$ are 42 in the final state, and $y$ 
is either 0 or 42 depending on when the right-hand process interleaves.  The $\True$ branch of the
conditional is always executed.
\begin{theorem}
\labelth{rfub-plain}
\begin{align*}
	\htrip{\vinit{x,y,r} \land \neg b}
	{
	&
		\plain{\RFUB}
	}
	{x = 42 \land r = 42 \land b \land (y = 0 \lor y = 42)}
\\
\mbox{
and hence
}
\quad
	\neg \reachtrip{\vinit{x,y,r} \land \neg b}{&\plain{\RFUB}}{x = y = r = 42 \land \neg b}
\end{align*}
\end{theorem}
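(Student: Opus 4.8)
The plan is to establish the first Hoare triple by an Owicki-Gries argument on the plain program, and then read off the ``and hence'' statement by a trivial postcondition weakening, since by definition $\neg\reachtrip{p}{c}{q}$ is exactly $\htrip{p}{c}{\neg q}$.

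In $\plain{\RFUB}$ all sequential composition is $\bef$ and the conditional is the standard ($\SCmm$) one, so the left process runs in program order as $r \asgn y \bef (\IFsSC{r \neq 42}{b \asgn \True \bef r \asgn 42}) \bef x \asgn r$, concurrently with the right process $y \asgn x$; this is ordinary \atmsynstruct code, so Owicki-Gries applies directly. The key observation is that $x$ is written only by the left process's \emph{final} instruction $x \asgn r$; hence $x = 0 \land y = 0$ is an invariant of the region of the left process preceding $x \asgn r$: it holds initially, is preserved by $r \asgn y$, $b \asgn \True$ and $r \asgn 42$ (none of which touch $x$ or $y$), and is preserved by the interfering step $y \asgn x$ because that step copies the value $0$ of $x$ into $y$. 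I would therefore annotate the left process so that every assertion up to and including the precondition of $x \asgn r$ carries $x = 0 \land y = 0$, with the assertion immediately after $r \asgn y$ strengthened to $x = 0 \land y = 0 \land r = 0 \land \neg b$. Consequently the $r = 42$ branch of the conditional has precondition $\False$ and contributes no terminating behaviour, the $r \neq 42$ branch runs $b \asgn \True \bef r \asgn 42$ yielding $x = 0 \land y = 0 \land r = 42 \land b$, and the final $x \asgn r$ yields $x = 42 \land r = 42 \land b$. The right process is annotated with the stable assertion $x = 0 \lor x = 42$ before its step and $y = 0 \lor y = 42$ after; interference freedom is routine --- the left process preserves $x \in \{0,42\}$ because the only value it ever writes to $x$ is $r = 42$, and the right process preserves each left-process assertion as noted above. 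The parallel postcondition is the conjunction of the two local postconditions, which is precisely $x = 42 \land r = 42 \land b \land (y = 0 \lor y = 42)$.

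For the ``and hence'' part, instantiate $q$ as $x = y = r = 42 \land \neg b$, so that $\neg q$ is $x \neq 42 \lor y \neq 42 \lor r \neq 42 \lor b$. The postcondition just established implies $\neg q$ --- in fact its conjunct $b$ alone does --- so the rule of consequence gives $\htrip{\vinit{x,y,r} \land \neg b}{\plain{\RFUB}}{\neg q}$, which by the definition of $\reachtrip{\cdot}{\cdot}{\cdot}$ is exactly the required $\neg\reachtrip{\vinit{x,y,r} \land \neg b}{\plain{\RFUB}}{x = y = r = 42 \land \neg b}$.

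The one step needing care is spotting and discharging the invariant $x = 0 \land y = 0$ over the right portion of the left process: once one sees that $x$ cannot become nonzero before the left process's last step --- and hence that the concurrent write $y \asgn x$ can only ever propagate $0$ into $y$, forcing $r = 0$ at the branch and so forcing the $r \neq 42$ path (which sets $b$) --- the remaining obligations are bookkeeping. As with the analogous results in this paper, the whole argument can be checked mechanically, e.g.\ in Isabelle/HOL using an Owicki-Gries encoding.
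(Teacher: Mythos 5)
Your proposal is correct and takes essentially the same route as the paper's proof: the crux in both is that $x$ (and hence $y$) remains $0$ until the left process's final store, so the read of $y$ returns $0$, forcing the $r \neq 42$ branch and thus $b$ in every final state, with the second claim following by weakening the postcondition under the definition of $\reachtrip{p}{c}{q}$ as $\neg\htrip{p}{c}{\neg q}$. The only difference is presentational: the paper argues this informally, while you discharge the same invariant via an explicit Owicki--Gries annotation with routine interference-freedom checks.
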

\begin{proof}
Straightforward: the only assignment of 42 is within the conditional, \emph{after} the read of $y$, and hence at that point $y$ cannot be 42.
The condition $r \neq 42$ always holds and the $\True$ path is taken, setting $r$ to 42 and $b$ to $\True$, and finally setting $x$ to 42.
The final value of $y$ depends on whether the assignment $y \asgn x$ occurs before or after the assignment to $x$.
By corollary it is therefore not possible to reach a final state where all variables are 42 and $\neg b$.
\end{proof}

\newcommand{\initr}{r_{init}}

Given this, in a sequential setting
a compiler may choose to optimise the conditional as follows, letting `$\ctrans$' stand for a compiler transformation.
\begin{equation}
\labeleqn{rfub-ctrans}
	(\IFsC{r \neq 42}{b \asgn \True ; r \asgn 42})
	\ctrans
	b \asgn (r \neq 42) ; r \asgn 42
\end{equation}
The conditional is eliminated, and now $b$ will be $\True$ if $r \neq 42$ when the branch would have been entered, and $r = 42$ in the final
state.
More precisely, if we let $cond$ be the original conditional and $cond'$ be the optimised version,
they both satisfy the following assuming $\neg b$ initially, and $\initr$ is the initival value of $r$.
\begin{equation*}
	\htrip{\neg b}{cond}{r = 42 \land (b \iff \initr \neq 42))}
	\quad and \quad
	\htrip{\neg b}{cond'}{r = 42 \land (b \iff \initr \neq 42))}
\end{equation*}

The transformed code preserves the expected behaviour of $\RFUB$.
\begin{theorem}
\labelth{rfub'-plain}
Let $\RFUB'$ be $\RFUB$ using the transformed conditional from \refeqn{rfub-ctrans}.  Then
\begin{equation*}
    \neg \reachtrip{\vinit{x,y,r} \land \neg b}{\plain{\RFUB'}}{x = y = r = 42 \land \neg b}
\end{equation*}
\end{theorem}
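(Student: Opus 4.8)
The plan is to reduce the statement to an ordinary partial-correctness Hoare triple and discharge it by a short invariant argument, exploiting that $\plain{\RFUB'}$ contains only one genuinely concurrent action. First I would make the program explicit: substituting the transformed conditional \refeqn{rfub-ctrans} into the definition \refeqndefn{rfub} of $\RFUB$ and then taking the plain interpretation of \refdefn{plain-version} (every $\ppseqc$ becomes $\bef$, the top-level $\pl$ is retained, and there are no fences to remove) yields
\begin{equation*}
	\plain{\RFUB'} \refeq (r \asgn y \bef b \asgn (r \neq 42) \bef r \asgn 42 \bef x \asgn r) \pl (y \asgn x).
\end{equation*}
Next, unfolding $\reachtrip{p}{c}{q} \sdef \neg \htrip{p}{c}{\neg q}$, the goal $\neg \reachtrip{\vinit{x,y,r} \land \neg b}{\plain{\RFUB'}}{x = y = r = 42 \land \neg b}$ is literally the Hoare triple $\htrip{\vinit{x,y,r} \land \neg b}{\plain{\RFUB'}}{x \neq 42 \lor y \neq 42 \lor r \neq 42 \lor b}$; and since $b$ implies that postcondition, it suffices to prove $\htrip{\vinit{x,y,r} \land \neg b}{\plain{\RFUB'}}{b}$.

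I would then establish this triple either by enumerating the five interleavings of the single atomic right-hand action $y \asgn x$ with the four left-hand actions, or by Owicki-Gries (the interference checks being immediate, as $y \asgn x$ writes only $y$). The crux in both cases is the invariant $x = 0$ over the first three left-hand actions: the sole writer of $x$ is $x \asgn r$, which has not yet run, so $y \asgn x$ can only copy the value $0$ into $y$; hence $r \asgn y$ leaves $r = 0$, and since $y \asgn x$ does not touch the local $r$, still $r = 0$ when $b \asgn (r \neq 42)$ fires. Therefore $b$ is set to $\True$, and none of $r \asgn 42$, $x \asgn r$, or the environment writes $b$ again, so $b$ holds in every final state. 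This discharges the triple and hence the theorem.

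I do not expect a genuine obstacle — the relevant state space is finite and the reasoning mirrors that of \refth{rfub-plain}. The only point requiring a little care is the justification that $x$ remains $0$ until $x \asgn r$ executes, which is what forces the eliminated branch guard $r \neq 42$ to be recorded as true: the single occurrence of $42$ as a value assigned to $r$ sits \emph{after} the load of $y$ in program order and so cannot feed back into that load. The interesting phenomenon, pursued next, is that under $\Cmm$ (with the additional reordering and forwarding of guards introduced below) $\RFUB'$ \emph{can} reach the forbidden state; the present statement merely confirms that its plain execution cannot.
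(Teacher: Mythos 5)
Your proposal is correct and follows essentially the same route as the paper, which simply invokes the reasoning of \refth{rfub-plain}: the only occurrence of $42$ assigned to $r$ lies after the load of $y$, so $r=0$ when $b \asgn (r \neq 42)$ executes and $b$ holds in every final state, contradicting $\neg b$. Your version is merely more explicit (unfolding $\reachtrip{p}{c}{q}$ to a Hoare triple, weakening the postcondition to $b$, and discharging it by interleaving enumeration or Owicki-Gries with the invariant $x=0$), whereas the paper's template establishes the full postcondition as in \refth{rfub-plain}; the substance is the same.
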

\begin{proof}
Straightforward reasoning as in \refth{rfub-plain}.
\end{proof}

If we consider reordering according to the \Cmm model and semantics we have given so far, the above state is still not reachable.
\begin{theorem}
\labelth{rfub-c11}
\begin{equation*}
    \neg \reachtrip{\vinit{x,y,r} \land \neg b}{\RFUB}{x = y = r = 42 \land \neg b}
\end{equation*}
\end{theorem}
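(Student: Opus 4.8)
The plan is to unfold $\neg\reachtrip{\vinit{x,y,r}\land\neg b}{\RFUB}{x=y=r=42\land\neg b}$: by \refeqn{reachtrip-states} it suffices to prove $\htrip{\vinit{x,y,r}\land\neg b}{\RFUB}{\neg(x=y=r=42\land\neg b)}$, i.e.\ that no terminating trace $t$ of $\RFUB$ started in a state $\sigma$ satisfying $\vinit{x,y,r}\land\neg b$ can give rise, via $\eff{t}$, to a post-state $\sigma'$ satisfying $x=y=r=42$ and $\neg b$. Write $\RFUB$ (\refeqndefn{rfub}) as $p_1\pl p_2$ with $p_2=y\asgn x$ and $p_1$ the left thread, and recall the conditional unfolds (\refeqndefn{if}) to $\guard{r\neq42}\ppseqc b\asgn\True\ppseqc r\asgn42\ \choice\ \guard{\neg(r\neq42)}$. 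The governing observation is that the literal $42$ occurs textually only inside the $\True$ branch (in its guard and in $r\asgn42$), and neither forwarding (\refeqndefn{roa}) nor optimisation (\refrule{optimise-expr}) can introduce it elsewhere: forwarding substitutes text already present, and optimisation only simplifies. Fix such $t$, $\sigma$, $\sigma'$ and assume for contradiction that $\sigma'$ satisfies $x=y=r=42\land\neg b$.

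First I would split on which branch \refrule{choice} resolves the conditional to along $t$. If it is the $\True$ branch, then the action $b\asgn\True$ occurs in $t$: reordering under \refrule{pseqcA}/\refrule{ro-fwd} permutes the order in which a fixed command's actions run but never drops one, and $\Nil$ is reached only after all of them have run. Since $b$ is written by no other instruction, every pair of $\eff{t}$ has a post-state with $b=\True$, contradicting $\neg b$. So the resolved branch is $\guard{\neg(r\neq42)}$, and since $(\sigma,\sigma')\in\eff{t}$ this guard is satisfied when it executes, i.e.\ $r=42$ at that moment. Now $r$ is written only by $r\asgn y$ in $p_1$ (the other writer, $r\asgn42$, lies in the untaken branch and is absent from $t$), and $r\asgn y$ is the first instruction of $p_1$ with nothing earlier in its own thread to forward from, while forwarding does not act across $\pl$ (cf.\ \refsect{reduction-fwd}); so the corresponding action in $t$ is exactly $r\asgn y$, reading the state value of $y$. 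Moreover $\datadep{r\asgn y}{\guard{\neg(r\neq42)}}$, which as a write/read dependency precludes reordering, so $r\asgn y\nro\guard{\neg(r\neq42)}$; hence the $r\asgn y$ step precedes the guard step and $r$ is unchanged between them. Thus $r=42$ at the guard forces $y=42$ in the state immediately before $r\asgn y$ ran.

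Then I would close the cycle. The variable $y$ (initially $0$) is written only by $y\asgn x$, which — being the sole instruction of $p_2$, with nothing to forward into it — carries that exact label and reads the state value of $x$; so $y=42$ before the $r\asgn y$ step requires the $y\asgn x$ step to occur before the $r\asgn y$ step and to read $x=42$. But $x$ (initially $0$) is written only by the instruction $x\asgn r$ of $p_1$, or by its forwarded form $\fwd{r\asgn y}{x\asgn r}=x\asgn y$ (the form $x\asgn42$ would require executing the untaken $\True$ branch); and in $p_1=(r\asgn y\ppseqc\mathit{cond})\ppseqc x\asgn r$ neither can be promoted past $r\asgn y$: we have $\datadep{r\asgn y}{x\asgn r}$, hence $r\asgn y\nro x\asgn r$, and $x\asgn y$ is likewise blocked because it and $r\asgn y$ both load the shared variable $y$, so load-independence fails (\refmm{Gmm}). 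Therefore the $x$-writing step of $t$ occurs after the $r\asgn y$ step, contradicting the requirement that it precede the $y\asgn x$ step, which precedes the $r\asgn y$ step. This contradiction establishes the triple, and hence the theorem.

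The step I expect to be the main obstacle is making the ``$42$ can only arise inside the $\True$ branch'' bookkeeping fully rigorous once \refrule{optimise-expr} and the combined forward-and-optimise relation (\refeqndefn{roa-opt}) are in play: one must check that no reachable rewriting of an instruction of $\RFUB$ mentions the literal $42$ other than along the branch already accounted for, and that the two dependencies pinning $x\asgn r$ behind $r\asgn y$ — a write/read dependency on $r$, and a shared-load clash on $y$ after forwarding — cannot be dissolved by any optimisation, i.e.\ that there is no ``false dependency'' being exploited. A lesser care point is the precise argument that resolving the $\True$ branch forces $b\asgn\True$ into every terminating trace; this follows from the shape of the operational rules for $\ppseqc$ together with the fact that an unsatisfiable guard only makes $\eff{t}$ empty rather than removing the trace.
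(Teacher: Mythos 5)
Your proposal is correct and takes essentially the same route as the paper's own proof: unfold the conditional, dispose of the $\True$ branch because $b \asgn \True$ forces $b$ in the final state, and rule out the $\False$ branch by the dependency cycle — $x$ depends on $r$, which depends on $y$, which could only become $42$ by reading $x$, while the only textual source of $42$ lies in the already-excluded $\True$ branch. You merely make explicit (forwarded form $x \asgn y$, failure of load-independence on $y$, the guard acting as a later validity check) what the paper states informally as ``$x \asgn r$ is blocked by $r \asgn y$'' and ``the only instance of 42 is in the $\True$ branch''.
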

\begin{proof}
More behaviours are possible, but it is still
straightforward in our step-based semantics: if the $\True$ branch is never taken there is no way for the value 42 to be forwarded to
$x$ and so $b = \True$.
Making the paths explicit (\refeqndefn{if}) the left process of \RFUB is equal to:
\begin{equation*}
\labeleqn{rfub-left-expanded}
	(r \asgn y \ppseqc
	\guard{r \neq 42} \ppseqc b \asgn \True \ppseqc r \asgn 42 \ppseqc
	x \asgn r)
	\sspace
	\choice
	\sspace
	(r \asgn y \ppseqc
	\guard{r = 42} \ppseqc
	x \asgn r)
\end{equation*}
The first case allows $x = y = r = 42$, following reasoning from \refth{rfub-plain}, but clearly $b$ holds in any such final state.
In the second case $x \asgn r$ is blocked by $r \asgn y$ (but not by $\guard{r = 42}$), and since there are no (out-of-thin-air) assignments
of 42 to $y$ in the concurrent process the guard will never be $\True$, and so \emph{this branch can never be taken}.
Although reordering $x \asgn 42$ before the guard is possible in the first branch (due to forwarding/optimisations),
the guard $\guard{r \neq 42}$ stays in the program as an ``oracle'', preventing inconsistent behaviour.
The compiler/hardware can allow the store to go ahead,
but it \emph{must} be checked for validity later.
The assignment to $x$ depends on $r$ which, in the $\False$ branch, depends on $y$, and the only way that $x$ can receive 42 in that case is if $y$ receives 42, but this is not
possible as the only instance of 42 is in the $\True$ branch, which is already ruled out.
\end{proof}

However, the compiler transformation makes the state reachable.
\begin{theorem}
\labelth{rfub'-c11}
Recalling $\RFUB'$ is $\RFUB$ using the transformed conditional in \refeqn{rfub-ctrans},
\begin{equation*}
    \reachtrip{\vinit{x,y,r} \land \neg b}{\RFUB'}{x = y = r = 42 \land \neg b}
\end{equation*}
\end{theorem}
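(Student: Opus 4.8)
The plan is to exploit the fact that, unlike in $\RFUB$, the transformed conditional in \refeqn{rfub-ctrans} leaves behind \emph{no guard}: the left process of $\RFUB'$ is the straight-line command $r \asgn y \ppseqc b \asgn (r \neq 42) \ppseqc r \asgn 42 \ppseqc x \asgn r$. First I would forward the constant $42$ written by $r \asgn 42$ into the trailing store $x \asgn r$, so that it becomes the \emph{constant} store $x \asgn 42$, which then has no dependence on anything earlier in the left process and may therefore be reordered to the very front. Interleaving so that the right process's $y \asgn x$ observes this $42$ produces $y = 42$, hence $r = 42$, hence $b$ becomes $(42 \neq 42) = \False$; the final state is then $x = y = r = 42 \land \neg b$, as required. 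This is precisely the ``self-fulfilling prophecy'': the speculative write to $x$ makes the branch condition it depended on vacuously consistent.

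In detail, write $c_1 \sdef r \asgn y \ppseqc b \asgn (r \neq 42) \ppseqc r \asgn 42$, so the left process is $c_1 \ppseqc (x \asgn r)$. Lifting forwarding to commands (\refappendix{lifting-forwarding}) gives $\fwd{c_1}{x \asgn r} = x \asgn 42$: the substitution $r \mapsto 42$ coming from $r \asgn 42$ is the only effective one, and the remaining assignments $b \asgn (r \neq 42)$ and $r \asgn y$ do not write a variable occurring in the constant store $x \asgn 42$. Each instruction of $c_1$ reorders with $x \asgn 42$ — their write variables are $r$, $b$, $r$, none of which is $x$; $x \asgn 42$ reads no variable; and all accesses are relaxed with no fences present — so by \refmm{Cmm} and the command lifting of $\ro$ we obtain the reordering-with-forwarding triple $\roc{x \asgn 42}{c_1}{x \asgn r}$. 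Then \reflaw{bcb} gives $c_1 \ppseqc (x \asgn r) \refsto (x \asgn 42) \bef c_1$, so the left process refines $x \asgn 42 \bef r \asgn y \bef b \asgn (r \neq 42) \bef r \asgn 42$.

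Next I would compose with the right process $y \asgn x$ and, using monotonicity of $\pl$ under $\refsto$ together with \reflaw{fix-interleaving} applied twice (first hoisting $x \asgn 42$, then $y \asgn x$), obtain $\RFUB' \refsto x \asgn 42 \bef y \asgn x \bef r \asgn y \bef b \asgn (r \neq 42) \bef r \asgn 42$. Plain Hoare-logic reasoning on this sequential program from $\vinit{x,y,r} \land \neg b$ gives $x = 42$, then $y = 42$, then $r = 42$, then $b \asgn (r \neq 42)$ sets $b$ to $\False$, and the final $r \asgn 42$ changes nothing, establishing $\htrip{\vinit{x,y,r} \land \neg b}{x \asgn 42 \bef y \asgn x \bef r \asgn y \bef b \asgn (r \neq 42) \bef r \asgn 42}{x = y = r = 42 \land \neg b}$. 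Since $\RFUB'$ refines this program, \refth{reasoning-possible} (with transitivity of $\refsto$) yields the claim.

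The main obstacle is the forwarding/hoisting step: one must justify that $x \asgn r$ may be moved past the \emph{entire} flattened conditional, including past $r \asgn 42$ itself — precisely the move that the residual guard $\guard{r \neq 42}$ blocks in $\RFUB$ (\refth{rfub-c11}) but which nothing blocks in $\RFUB'$. The care needed is (i) to apply forwarding \emph{before} testing reorderability, as in \refeqndefn{roa}, so that the instruction being hoisted is the genuinely independent constant store $x \asgn 42$ rather than the dependent $x \asgn r$; and (ii) to carry the command-lifted reordering triple correctly across all three instructions of $c_1$. Everything afterwards — the two interleaving steps and the arithmetic — is routine.
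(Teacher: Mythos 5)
Your proposal is correct and follows essentially the same route as the paper's proof: forward $r \asgn 42$ into the trailing store so it becomes the independent constant store $x \asgn 42$, hoist it to the front via \reflaw{bcb}, interleave $y \asgn x$ immediately after it, and conclude with sequential Hoare reasoning and \refth{reasoning-possible} on exactly the same reduced trace $x \asgn 42 \bef y \asgn x \bef r \asgn y \bef b \asgn (r \neq 42) \bef r \asgn 42$. Your extra care in checking the command-lifted forwarding triple instruction-by-instruction is a sound elaboration of what the paper leaves implicit.
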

\begin{proof}
In the left-hand process of $\RFUB'$ the assignment $x \asgn r$ can be reordered to be the first instruction executed, which by forwarding becomes $x \asgn 42$.
More formally, by \reflaw{bcb} we get the following.
\begin{eqnarray}
	r \asgn y \ppseqc
	b \asgn (r \neq 42) \ppseqc
	r \asgn 42 \ppseqc
	x \asgn r
	\ssrefsto
	x \asgn 42 \bef
	(r \asgn y \ppseqc
	b \asgn (r \neq 42) \ppseqc
	r \asgn 42)
\end{eqnarray}
Now consider the effect of interleaving $y \asgn x$ from the second process immediately after this assignment.
\begin{equation*}
	\htrip{\vinitxyr \land \neg b}{
		x \asgn 42 \bef
		y \asgn x \bef
		r \asgn y \bef
		b \asgn (r \neq 42) \bef
		r \asgn 42 
	}{
		x = y = r = 42 \land \neg b
	}
\end{equation*}
Because $r \asgn y$ reads 42 the value assigned to $b$ is $\False$, and thus $\neg b$ holds in the final state.
The proof is completed by \refth{reasoning-possible}.
\end{proof}
The difficulty with this result is that, since $\neg b$ in the final state, the $\True$ branch of the original conditional from $\RFUB$
could not have been taken (since it contains $b \asgn \True$), 
therefore the $\False$ branch was taken, \ie, the condition $r \neq 42$ fails, and hence $r = 42$.  
But the only way for $r = 42$ to hold is if the $\True$ branch is/was executed, containing $r \asgn 42$,
but this has already been ruled out.

Since $\RFUB'$ has a behaviour which $\RFUB$ does not, it cannot be the case that $\RFUB \refsto \RFUB'$, which suggests the compiler transformation
is not valid.
Essentially,
the transformation eliminates a guard and hence breaks a dependency-cycle check.

The question remains, however, whether the compiler transformation \refeqn{rfub-ctrans} is reasonable in the presence of the $\Cmm$ memory model,
and if so what the implications for the memory model are.
We show how we can tweak the framework (specifically, increasing the circumstances under which the concept of forwarding can apply) to allow 
the possible state using the original version of $\RFUB$ (thus justifying the compiler transformation).  However, there are other consequences.
We can straightforwardly accommodate either outcome in this framework, and can do so with a clear choice that has enumerable consequences to other code: can guards be used to simplify expressions?

\subsection{Self-fulfilling prophecies (forwarding guards)}
\newcommand{\gfwd}[2]{{}_{#1\mbox{\raisebox{-0.5pt}{\guillemetright\!\guillemetright}}}#2}

\newcommand{\gfwdab}{\gfwd{\aca}{\acb}}

In \refdefn{forwarding} we defined $\fwdab$, for $\aca$ an assignment, to update expressions in $\acb$ according to the assignment.
If $\aca$ is a guard (or fence) then $\fwdab = \acb$.
However, it is reasonable, in the sequential world at least, to allow guards to assist in transformations.

We introduce an extended version of forwarding, where $\gfwdab$ returns a \emph{set} of possible outcomes. 
\begin{eqnarray}
	\gfwd{x \asgn e}{\aca} \aeq \{\fwd{x \asgn e}{\aca}\}
	\labeleqn{gfwd-au}
	\\
	\gfwd{\guardb}{\guard{e}} \aeq \{\guard{e'} | b \imp (e' \iff e)\} 
	\labeleqn{gfwd-gg}
	\\
	\gfwd{\guardb}{x \asgn e} \aeq \{x \asgn e' | b \imp (e' = e) \} 
	\labeleqn{gfwd-ga}
\end{eqnarray}
If $\aca$ is an assignment it returns a singleton set containing just
$\fwdab$ \refeqn{gfwd-au}, \eg, $\gfwd{r \asgn 42}{x \asgn r} = \{x \asgn 42\}$. 
However, if $\aca$ is a guard $\guardb$ then $b$ can be used as context to modify (optimise) $\acb$,
\eg, 
$\gfwd{\guard{r = 42}}{x \asgn r} = \{x \asgn 42, \ldots\}$ by \refeqns{gfwd-gg}{gfwd-ga}.

We modify the reordering triple \refeqndefn{roa} to accommodate $\gfwdab$ in place of $\fwdab$.
\begin{eqnarray}
	\roabab 
	\asdef
	\aca \ro \acb' \land \acb' \in \gfwd{\aca}{\acb} \land 
	\acb \ocmore \acb'
	\labeleqndefn{gfwd-roa}
\end{eqnarray}
Thus we incorporate using the guards in conditionals to justify reordering via \refrule{ro-fwd}.
For instance, following from above,
$
	\rocmdg{x \asgn 42}{\guard{r = 42}}{x \asgn r}
$.

The use of \refeqndefn{gfwd-roa} for the reordering triple used in \refrule{ro-fwd}, and the derived laws such as \reflaw{bcb}, allows behaviours that weren't possible before,
for instance,
\begin{eqnarray*}
	\guard{r = 42} \ppseqc x \asgn r
	\arefsto
	x \asgn 42 \bef \guard{r = 42} 
\end{eqnarray*}
Since we also have
\begin{eqnarray*}
		(\guard{r \neq 42} \ppseqc b \asgn \True \ppseqc r \asgn 42 ) \ppseqc x \asgn r
		\arefsto
		x \asgn 42 \bef (\guard{r \neq 42} \ppseqc b \asgn \True \ppseqc r \asgn 42 )
\end{eqnarray*}
(which holds under \refeqndefn{roa} as well as \refeqndefn{gfwd-roa})
we can show that a trailing assignment can use information from a preceding conditional for simplification.
\begin{equation}
\labeleqn{rfub-if-reorder}
	(\IFsC{r \neq 42}{b \asgn \True \ppseqc r \asgn 42}) \ppseqc x \asgn r
	\ssrefsto
	x \asgn 42 \bef (\IFsC{r \neq 42}{b \asgn \True ; r \asgn 42}) 
\end{equation}
The trailing assignment $x \asgn r$ can be executed first, using the value 42 for $r$, since both branches imply
that will always be the value assigned to $x$.
We call this use of a guard to simplify a later assignment as a self-fulfilling prophecy, for reasons which will become clear below.

\subsection{Behaviour of read-from-untaken-branch with self-fulfilling prophecies}

We now return to
the behaviour of $\RFUB$, this time allowing self-fulfilling prophecies via forwarding (\refeqndefn{gfwd-roa}).
To highlight the impact this has we note refinement steps that are allowed under SFPs (using \refeqndefn{gfwd-roa} for the antecedent of \refrule{ro-fwd})
using $\sfprefsto$, and steps that are allowed normally (using \refeqndefn{roa} for the antecedent of \refrule{ro-fwd}) as $\refsto$.
The following theorem stands in contradiction to \refth{rfub-c11}.
\begin{theorem}
Allowing self-fulfilling prophecies,
\begin{equation}
	\reachtrip{\vinit{x,y,r} \land \neg b}{
	\RFUB
	}{
		x = y = r = 42 \land \neg b
	}
\end{equation}
\end{theorem}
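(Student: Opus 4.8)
The plan is to exhibit one terminating execution of $\RFUB$ ending in the state $x = y = r = 42 \land \neg b$ and then appeal to the self-fulfilling-prophecy reading of \refth{reasoning-possible}. The whole argument turns on the forwarding-through-guards licensed by \refeqndefn{gfwd-roa}, which makes the promoted instruction $\rocmdg{x \asgn 42}{\guard{r = 42}}{x \asgn r}$ available --- precisely the ingredient that is \emph{absent} in the proof of \refth{rfub-c11}, where the $\False$-branch guard was dead.

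First I would resolve the conditional in the left process to its $\False$ branch: expanding \refeqndefn{if} and applying \reflaw{chooseL}, the left process refines to $(r \asgn y \ppseqc \guard{r = 42}) \ppseqc x \asgn r$. Next, using \reflaw{bcb} with the SFP reordering triple in the hypothesis of \refrule{ro-fwd}, I would hoist $x \asgn r$ to the front: forwarding $\guard{r = 42}$ into $x \asgn r$ rewrites it to $x \asgn 42$, which mentions neither $r$ nor $y$ and hence reorders past both $r \asgn y$ and $\guard{r = 42}$ (a store may move before a guard in $\Cmm$ by \refeqn{c11ro-gw}); meanwhile $r \asgn y \ppseqc \guard{r = 42} \refeq r \asgn y \bef \guard{r = 42}$ by \reflaw{2actions-keep-order}, since $\datadep{r \asgn y}{\guard{r = 42}}$. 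Composing these steps and using monotonicity of $\pl$,
\[
	\RFUB \;\sfprefsto\; \bigl( x \asgn 42 \bef r \asgn y \bef \guard{r = 42} \bigr) \pl y \asgn x .
\]

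Then I would interleave the right-hand process so that $y \asgn x$ executes immediately after $x \asgn 42$, using \reflaw{fix-interleaving} twice, which yields the single linear trace $x \asgn 42 \bef y \asgn x \bef r \asgn y \bef \guard{r = 42}$. A one-line weakest-precondition calculation along this trace finishes the argument:
\[
	\htrip{\vinit{x,y,r} \land \neg b}{\, x \asgn 42 \bef y \asgn x \bef r \asgn y \bef \guard{r = 42} \,}{\, x = y = r = 42 \land \neg b \,},
\]
since $x$ becomes $42$, $y$ copies it, $r$ copies $y$, so the guard $r = 42$ holds and the trace terminates, and $b$ is never written. By \refth{reasoning-possible} (read for the $\sfprefsto$ semantics) the target state is reachable. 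This stands against \refth{rfub-c11} precisely because the reordering that hoists $x \asgn 42$ lets $y \asgn x$ ferry $42$ through $y$ back into $r$, so the value ``$r = 42$'' that was used to license the hoist is the value that actually obtains --- the prophecy fulfils itself.

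The step I expect to be the main obstacle is the hoist itself: justifying, via the command-level lifting of \refrule{ro-fwd}, that $x \asgn 42$ --- the result of forwarding $\guard{r = 42}$ into $x \asgn r$ --- reorders with the whole prefix $r \asgn y \ppseqc \guard{r = 42}$. This needs the genuinely SFP-specific clause \refeqn{gfwd-ga} (a guard $\guard{r = 42}$ is a legal forwarding source, producing $x \asgn 42$), the $\ocmore$ side-condition of \refeqndefn{gfwd-roa} (trivial here, all accesses being relaxed), and the check that $x \asgn 42$ really does reorder with every instruction of the prefix, in particular past the guard by \refeqn{c11ro-gw}. Once the hoist is in place, everything downstream is routine reduction and sequential Hoare logic.
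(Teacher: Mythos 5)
Your proposal is correct and follows essentially the same route as the paper's proof: the crux in both is the SFP reordering triple $\rocmdg{x \asgn 42}{\guard{r = 42}}{x \asgn r}$, leading to the same refined linear trace $x \asgn 42 \bef y \asgn x \bef r \asgn y \bef \guard{r = 42}$, the same Hoare computation, and the appeal to \refth{reasoning-possible}. The only (cosmetic) difference is step order: the paper first hoists the trailing assignment over the whole conditional via \refeqn{rfub-if-reorder} and then resolves the choice with \reflaw{chooseL}, whereas you resolve the choice to the $\False$ branch first and then hoist over the residual guard.
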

\begin{proof}
Let $\RFUB_1$ (resp. $\RFUB_2$) refer to the first (resp. second) process of $\RFUB$ (\refeqndefn{rfub}).
\begin{eqnarray}
	\RFUB_1
	\asdef
	r \asgn y \ppseqc
	(\IFsC{r \neq 42}{b \asgn \True \ppseqc r \asgn 42})  \ppseqc
	x \asgn 42 
	\\
	\asfprefsto
	x \asgn 42 \bef
	r \asgn y \ppseqc
	(\IFsC{r \neq 42}{b \asgn \True \ppseqc r \asgn 42}) 
	\quad \mbox{by \refeqn{rfub-if-reorder}}
	\\
	&\refsto&
	x \asgn 42 \bef
	r \asgn y \bef
	\guard{r = 42} 
	\quad \mbox{by \refeqndefn{if}, \reflaw{chooseL}}
\end{eqnarray}
Now we fix the interleaving.
\begin{eqnarray}
	\RFUB
	\sspace
	\sfprefsto
	\sspace
	(x \asgn 42 \bef
	r \asgn y \bef
	\guard{r = 42} )
	\pl
	y \asgn x 
	\sspace
	\refsto
	\sspace
	x \asgn 42 \bef
	y \asgn x \bef
	r \asgn y \bef
	\guard{r = 42} 
\end{eqnarray}
This interleaving reaches the specified state.
\begin{equation}
	\htrip{\vinit{x,y,r} \land \neg b}{
	x \asgn 42 \bef
	y \asgn x \bef
	r \asgn y \bef
	\guard{r = 42} 
	}{
		x = y = r = 42 \land \neg b
	}
\end{equation}
The proof is completed by \refth{reasoning-possible}.
\end{proof}

This theorem shows that the debated outcome of \RFUB, which is possible under a seemingly reasonable compiler transformation,
arises naturally if conditionals are allowed to modify future instructions.  Indeed, this is essentially what is used to justify the
transformation itself: if the $\False$ branch is taken then already $r = 42$, so any later use of $r$ can be assumed to use the value 42.
However, in a concurrent, reordering setting, there may be unexpected consequences.%
\footnote{
Forbidding branches from simplifying later calculations 
does not prevent all reasonable compiler optimisations, for instance,
\begin{eqnarray*}
	(\IFC{b}{\ldots \ppseqc r \asgn 42}{\ldots \ppseqc r \asgn 42}) \ppseqc
	x \asgn r
	\arefsto
	x \asgn 42 \bef
	(\IFC{r \neq 42}{\ldots}{\ldots}) \ppseqc r \asgn 42
\end{eqnarray*}
This is because there is a definite assignment to $r$ in both branches.
}

We do not intend to take a stand about what is ultimately the best choice for \Clang; rather we show that whichever 
choice is taken can be linked to clear principles about whether or not guards should be allowed to simplify assignments in a concurrent
setting.
Note that we can explain it \emph{in a step-based semantics}.

\subsection{Behaviour of out-of-thin-air with self-fulfilling prophecies}

Recall the out-of-thin-air example $\ootaD$ (\refeqndefn{ootaD}).  
\begin{equation}
	\ootaD
	\sssdef
	r_1 \asgn x \ppseqc (\IFsC{ r_1 = 42 }{ y \asgn r_1}) 
	~~\pl~~
	r_2 \asgn y \ppseqc (\IFsC{ r_2 = 42 }{ x \asgn r_2}) 
\end{equation}
Its behaviour, which is intended to be forbidden, is allowed under SFPs, in contrast to \refth{ootaD-c11}.
\begin{theorem}
Under SFPs,
$\reachtrip{\vinitxyrr}{\ootaD}{x = 42 \land y = 42}$.
\end{theorem}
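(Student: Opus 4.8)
The plan is to mirror the proof of the preceding $\RFUB$ theorem under self-fulfilling prophecies, since $\ootaD$ has essentially the same shape. The new ingredient is that the guard $\guard{r_1 = 42}$ inside the first conditional can now be forwarded into the trailing assignment $y \asgn r_1$, turning it into $y \asgn 42$ --- an instruction that no longer mentions $r_1$ and therefore escapes the block that the initial load $r_1 \asgn x$ imposed in \refth{ootaD-c11}; symmetrically in the second process, where $\guard{r_2 = 42}$ forwards into $x \asgn r_2$ to give $x \asgn 42$.

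First I would expand both conditionals with \refeqndefn{if} and apply \reflaw{chooseL} to commit each process to its $\True$ branch, so the first process refines to $r_1 \asgn x \ppseqc \guard{r_1 = 42} \ppseqc y \asgn r_1$ and the second to $r_2 \asgn y \ppseqc \guard{r_2 = 42} \ppseqc x \asgn r_2$. Next, for the first process I would observe that $y \asgn 42 \in \gfwd{\guard{r_1 = 42}}{y \asgn r_1}$ by \refeqn{gfwd-ga}, that forwarding $r_1 \asgn x$ then leaves $y \asgn 42$ unchanged, and that $y \asgn 42$ reorders (in the sense of \refmm{Cmm}) with both $\guard{r_1 = 42}$ and $r_1 \asgn x$ --- disjoint variables, load-independent, no fences, all relaxed --- while $\getocs{y \asgn r_1} = \getocs{y \asgn 42} = \ess$ makes the $\ocmore$ side condition of \refeqndefn{gfwd-roa} hold under any of the options (a)--(e) for $\ocmore$. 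Lifting this reordering triple to the command $r_1 \asgn x \ppseqc \guard{r_1 = 42}$ (\refappendix{lifting-triples}) and applying the SFP form of \reflaw{bcb} gives $r_1 \asgn x \ppseqc \guard{r_1 = 42} \ppseqc y \asgn r_1 \sfprefsto y \asgn 42 \bef (r_1 \asgn x \ppseqc \guard{r_1 = 42})$; since $\datadep{r_1 \asgn x}{\guard{r_1 = 42}}$, \reflaw{2actions-keep-order} sequentialises the residual $\ppseqc$, so the first process refines to $y \asgn 42 \bef r_1 \asgn x \bef \guard{r_1 = 42}$, and the second, by the same argument, to $x \asgn 42 \bef r_2 \asgn y \bef \guard{r_2 = 42}$.

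With both processes reduced, I would combine the two reductions componentwise (parallel composition being monotone with respect to refinement) and fix an interleaving via \reflaw{fix-interleaving}, namely $\ootaD \sfprefsto y \asgn 42 \bef x \asgn 42 \bef r_1 \asgn x \bef r_2 \asgn y \bef \guard{r_1 = 42} \bef \guard{r_2 = 42}$, and then discharge the routine sequential Hoare triple $\htrip{\vinitxyrr}{y \asgn 42 \bef x \asgn 42 \bef r_1 \asgn x \bef r_2 \asgn y \bef \guard{r_1 = 42} \bef \guard{r_2 = 42}}{x = 42 \land y = 42}$: the two stores of $42$ run first, the two loads then read $42$, and both guards pass. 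The statement then follows by \refth{reasoning-possible}.

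The step I expect to be the main obstacle is making the command-level reordering precise: one must track exactly what $y \asgn r_1$ becomes as it is pushed left past $\guard{r_1 = 42}$ and then past $r_1 \asgn x$ in turn (choosing $y \asgn 42$ at the guard via \refeqn{gfwd-ga}, then unchanged at the load), and verify the \refmm{Cmm} reordering conditions hold at each stage --- in particular that the forwarded $y \asgn 42$ is genuinely data-independent of the initial load. This is precisely the point of contrast with \refth{ootaD-c11}: there no guard-forwarding is available, so $y \asgn r_1$ remains dependent on $r_1 \asgn x$ and the whole conditional stays blocked, whereas here the SFP extension \refeqndefn{gfwd-roa} is exactly what unblocks it.
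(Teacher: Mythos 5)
Your proposal is correct and follows essentially the same route as the paper: commit each process to its $\True$ branch, use the SFP/guard-forwarding triple $\rocmdg{y \asgn 42}{\guard{r_1 = 42}}{y \asgn r_1}$ (and symmetrically for $x$) to promote the stores of 42 ahead of the guard and the initial load, fix an interleaving with the two stores first, and conclude with sequential Hoare reasoning and \refth{reasoning-possible}. The only difference is bookkeeping --- you perform the promotion in one lifted command-level step via \reflaw{bcb} where the paper does it in two successive refinements --- which does not change the argument.
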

\begin{proof}
Focus on the left-hand process, and in particular the behaviour when the $\True$ branch is taken.
The key step depends on $\rocmdg{y \asgn 42}{\guard{r_1 = 42}}{y \asgn r_1}$, that is, the guard can be used to simplify the inner assignment.
\begin{eqnarray}
	r_1 \asgn x \ppseqc (\IFC{ r_1 = 42 }{ y \asgn r_1}) 
	\arefsto
	r_1 \asgn x \ppseqc \guard{r_1 = 42} \ppseqc y \asgn r_1
	\\
	\asfprefsto
	r_1 \asgn x \ppseqc y \asgn 42 \bef \guard{r_1 = 42} 
	\\
	\arefsto
	y \asgn 42 \bef r_1 \asgn x \bef \guard{r_1 = 42} 
\end{eqnarray}
Using symmetric reasoning in the second process and interleaving we have the following behaviour.
\begin{equation}
	\ootaD
	\sfprefsto
	x \asgn 42 
	\bef
	y \asgn 42 
	\bef 
	r_1 \asgn x 
	\bef 
	r_2 \asgn y 
	\bef 
	\guard{r_1 = 42} 
	\bef 
	\guard{r_2 = 42} 
\end{equation}
Here both stores have been promoted to the first instruction executed, which are then read into local registers and subsequently satisfy the
guards (a ``satisfaction cycle'' \cite{LibraryAbstractionsForC}).  The postcondition is straightforwardly reached.
\begin{equation}
	\htrip{\vinitxyrr}{
	x \asgn 42 
	\bef
	y \asgn 42 
	\bef 
	r_1 \asgn x 
	\bef 
	r_2 \asgn y 
	\bef 
	\guard{r_1 = 42} 
	\bef 
	\guard{r_2 = 42} 
	}{x = 42 \land y = 42}
\end{equation}
The proof is completed by \refth{reasoning-possible}.
\end{proof}

This demonstrates the problematic nature of allowing SFPs, and unifies the known underlying problem with these two related patterns ($\oota$ and \RFUB).

\section{Further extensions}

In this section we discuss some other extensions to the model that may be of interest in some domains; however we emphasise that the definition of the memory model
does not need to change, we simply make the language and its execution model richer.

\subsection{Power and non-multicopy atomicity}
\labelsect{power}
IBM's multicore Power architecture 
\cite{TutorialARMandPOWER,UnderstandingPOWER,AxiomaticPower}
(one of the first commercially available) 
has processor pipeline reorderings similar to Arm \cite{ModellingARMv8},
but in addition
has a cache coherence system that provides weaker guarantees than that of Arm (and x86): 
(cached) memory operations can appear in different orders to different processes.  
For instance, although on ARM the instructions $x \asgn 1$ and $y \asgn 1$ may be reordered in the pipeline of a particular
processor, whichever order they occur in will be the same for every other process in the system.  
On Power, however, one process may see them in one order and another may see the modification in the reverse order (assuming no fences are used).
This extra layer of complexity is introduced through cache interactions which allow one processor that shares a cache with another to read writes early, before the memory operation is fully committed to
central storage \cite{UnderstandingPOWER}.  

Programmers targetting implementations on Power architectures must take these extra behaviours into account (\Clang accommodates the lack of multicopy atomicity to be compatible with this
hardware).
A formalisation of the Power cache system, separate from the processor-level reordering, and compatible with the framework in this paper,
is given in \cite{FM18}, which is based on an earlier 
formalisation of the Power memory subsystem given in \cite{UnderstandingPOWER}.
The cache subsystem, and its formalisation in \cite{FM18} sits conceptually above the processes within the system, 
and stores and loads interact with it rather than with the traditional variable-to-value mapping.
As such, local reorderings, and the influence of ordering constraints and fences are still relevant.
However under the influence of such a system, no code can be assumed to be \atmsynstruct (\refdefn{atmsynstruct}), and thus traditional syntax-based reasoning techniques will not directly apply;
however by instituting the system of \cite{FM18} the consequence on behaviours can be determined.
We do not devote more time to this feature of the \Clang model in this paper for several reasons.
\begin{itemize}
\item
Power is still subject to processor reorderings which are captured by the concepts herein; the memory subsystem is governed by separate mechanisms;
\item
Power is the only commercially available hardware that exhibits such behaviours.
Arm once allowed them, but no manufacturer ever implemented them on a real chip, and Arm has since removed these behaviours from its model \cite{ARMv8.4}.
A large reason for omitting the possibility of such weak behaviours 
was due to the complexity of the induced axiomatic models \cite{HerdingCats}, which were difficult to specify and reason about.
Arm is generally considered one of the faster architectures, 
and does not appear to suffer a significant penalty due to not having a similarly weak cache coherence system.  
(Intel's Itanium architecture, and mooted memory model (\eg \cite{Itanium2003}), was also intended to allow these weak behaviours; however it ceased production in 2020.)
\item
In practice, Power's weak behaviours are so difficult to reason about that developers make heavy use of
fences to eliminate the effects; and as a result whatever performance gains
there may have been are eliminated.
\item
The behaviours that arise are entirely to do with that micro-architectural feature and not due to compiler transformations or processor reorderings.
Reasoning about \Clang memory model features that affect local code (for, say, x86) architectures carries over identically to reasoning about local code in Power.
\item
Although, theoretically, under the \Clang memory model a compiler could instrument code transformations that mimic the behaviour of Power's cache system on architectures that do not exhibit 
them natively, this seems highly unlikely;
therefore, only developers targetting Power will ever experience the resulting complexity
(which arguably can and must be removed by the insertion of fences).
\end{itemize}

Outside of such considerations, the \Clang model as we have presented it 
is relatively straightforward; the behaviour introduced by Power's cache system is completely separate to pipeline reordering and compiler transformations.  It seems
unfortunate to complicate the entirety of the \Clang memory model to accommodate one architecture, 
when its effects can be captured in a separate memory subsystem specification (which can be ignored by developers 
targetting different architectures).  Hence
we recommend keeping the extra behaviours induced by this separate mechanism, peculiar to a single currently-in-production architecture, separately specified within the model;
we point to \cite{UnderstandingPOWER,SewellRelaxedAtomics,FM18} as examples of how this can be done, which fulfil similar roles to time-stamped event
structures in \cite{PromisingSemantics} and the
shared action graph of \cite{OGforC11}.

\subsection{Incorporating compiler transformations}
\labelsect{ctrans}

Any possible compiler transformation can be straightforwardly incorporated into our framework as an operational rule.
For instance, if $pattern$ is some generic code pattern that can be transformed into $pattern'$ then this can become a rule
$pattern \tra{\tau} pattern'$, \ie, a silent transformation that can occur at any time (an example of such a transformation might be to refactor a loop).
The downside of including such transformations within the semantics is, of course, depending on the structural similarity of $pattern$ and $pattern'$,
one cannot expect to reason about $pattern$ using syntax-directed proof rules (one has broken the principle of \atmsynstruct code, \refdefn{atmsynstruct}).
Of course, trace-based approached can still be applied.


As an example, consider a transformation discussed in \refsect{rfub},
which simplifies and removes a conditional.  This can be expressed as an operational rule that applies in some circumstances.

\ruledefNamed{0.8\textwidth}{Eliminate conditional}{elim-cond}{
	(\IFsC{r \neq n}{b \asgn \True \ppseqc r \asgn n})
	\tra{\tau}
	(b \asgn (r = n) \ppseqc r \asgn n )
}

As discussed in \refsect{rfub}, allowing this transformation has significant effects on behaviours in a wider context.
Consider also transformations to eliminate redundant loads or stores,
the simplest instrumentation of which are given by the following rules.

\ruledefNamed{0.5\textwidth}{Load coalescing}{load-coalesce}{
	r_1 \asgn x \ppseqc r_2 \asgn x
	\tra{\tau}
	r_1 \asgn x \ppseqc r_2 \asgn r_1
}
\ruledefNamed{0.4\textwidth}{Write coalescing}{write-coalesce}{
	x \asgn e_1 \ppseqc x \asgn e_2
	\tra{\tau}
	x \asgn e_2
}

These transformations eliminate an interaction with main memory.  \refrule{load-coalesce}
reduces the number of behaviours of the program since $r_2$ will always take the same final value as $r_1$, whereas
in the original code it is possible for them to receive different final values.
The transformation encoded in \refrule{write-coalesce} reduces the overall behaviours of the system as a parallel process can never receive the value of $e_1$
(of course, these transformations could be made dependent on memory ordering constraints using, for instance, the $\ocmore$ relation).
We believe it should be outside of the scope of the definition of \Clang memory model, and certainly outside the scope of this work,
to consider every possible transformation of every possible compiler;
however we have provided a
framework in which the consequences of a particular transformation can be relatively straightforwardly assessed,
separately to the specification and principles of the memory model itself.
In particular this may feed in to the development of verified compilers, and the development of a bespoke set of transformations that are valid within concurrent systems.

\section{Related work}

The development of the \Clang (and \Cpp) memory model is ongoing and governed by an ISO committee (but heavily influenced by compiler writers \cite{DepthsOfC}), 
covering the vast range of features of the language itself and including 
a semi-formal description of the memory model.  
Boehm and Adve \cite{BoehmAdveC++Concurrency} were highly influential initially, building on earlier work on memory model specifications (\eg, 
\cite{LamportSC,AccessBufferingInMultiprocs86,ShashaSnir88,ReleaseConsistency90,Dill98,MM=RO+At,FoxHarman2000}).
Since then many formal approaches have been taken to further understand the implications of the model and tighten its specification, especially the works of 
Batty et. al 
\cite{MathematizingC++, LibraryAbstractionsForC,LeakySemicolon}
and Vafeiadis et. al
\cite{VafeiadisC11, PromisingSemanticsKang,TamingRC,RepairingSCinC11}.
The model abstracts away from the various hardware memory models that \Clang is designed to run on (\eg, Arm, x86, Power, RISC-V, SPARC),
leaving it to the compiler writer to translate the higher-level program code into assembler-level primitives that will provide the behaviour specified by the \Clang model.
This behaviour is described with respect to a cross-thread ``happens-before'' order, which is influenced by so called ``release sequences'' of events within the system as a whole.
As a formalism this is difficult to reason about, and in particular, it removes the ability to think thread-locally -- whether or not enough fences or constraints have been
placed in a thread requires an analysis of all other events in the system, even though, with the exception of the Power architecture (see \refsect{power}), any weak behaviours
are due to local compiler transformations or instruction-level parallelisation at the assembler level.
Our observations of a myriad of discussions on programming discussion boards online is that programmers tend to think in terms of local reorderings, and appealing to
cross-thread concepts is not a convenient abstraction.
The framework we present here is based on a close abstraction of instruction-level parallelisation that occurs in real pipelined processors or instruction shuffling that a compiler
may perform.  For simple programs, where enough synchronisations have been inserted to maintain order on the key instructions, 
reasoning can be shown to reduce to a well-known sequential form; or specific reorderings that are problematic can be elucidated.
The underlying semantics model is the standard interleaving of relations on states (mappings from variables to values).  We argue this is simpler and intuitive -- as far as 
instruction reorderings can be considered so -- than the current description in the \Clang standard.
Part of the complication of the standard arises from handling the complexity of the Power cache coherence system, which cannot be encoded in a thread-local manner;
however those complications can be treated separately \cite{UnderstandingPOWER,SewellRelaxedAtomics,FM18}, and in any case, the Power model also involves instruction-level parallelism governed by the same principles
as the other major architectures (Arm, x86, RISC-V, SPARC).

In comparison with other formalisms in the literature 
\cite{VafGroundingThinAir,DemskyOOTA,SewellOperationalSemanticsC11}, 
many use an axiomatic approach (as exemplified by \cite{HerdingCats}), which are necessarily cross-thread specifications
based on the happens-before relationship, many use a complex operational semantics, and many combine the two.  
The Promising semantics \cite{PromisingSemantics,Promising2} is operational in flavour but has a complex semantics
\cite{VafGroundingThinAir}
involving time-stamped operations and several abstract global data structures for managing events.
In these frameworks reasoning about even simple cases is problematic, whereas in our framework the effects of reordering can be analysed by reduction rules at the program-level (\refsect{reduction}),
and our underlying model is that of the typical relations on states and so admits standard techniques (\refsect{reasoning}).
A recent advance in reasoning for the \Clang memory model is that of Wright et. al \cite{OGforC11}, but that framework appeals to a shared event graph structure,
and does not consider memory fences or \seqcst constraints.
Very few of the formalisms surveyed have a simple way of address consume (\consume) constraints (\refsect{consume});
we also argue that our formal approach to understanding pathological behaviours such as out-of-thin-air (\refsect{oota}) and read-from-untaken-branch (\refsect{rfub})
provides a clearer framework for understanding and addressing their consequences.

An overarching philosophy for our approach has been that of a separation-of-concerns, meaning that a programmer can consider which mechanisms for enforcing order 
should suffice for their concurrent code; of course, if their ordering mechanisms are embedded in complex expressions that may be optimised (\refsect{expression-optimisations})
or incrementally evaluated (\refsect{incremental-evaluation}) then the picture becomes more complex, but this can be analysed separately, and without regard to how such 
features interact with cross-thread, complex, abstract data structures controlling events.
The reordering framework we present here is based on earlier work \cite{SEFM21,pseqc-arXiv,FM18}, which provides a model checker (written in Maude \cite{Maude}) 
and machine-checked refinement rules (in Isabelle/HOL \cite{Paulson:94,IsabelleHOL}) for the language in \refsect{semantics} (\ie, ignoring the possibility of incremental evaluation and optimisations)
with forwarding (\refsect{forwarding}).
We straightforwardly encoded the definition of the \Clang memory model (\refmm{Cmm}) in the model checker and used it on the examples in this paper as well as those 
provided by the Cerberus project \cite{CerberusSewell}.
The framework has provided the basis for other analyses involving security \cite{PseqcSpectre-short,CoughlinInfoFlow,WinterBackwardsInfoFlow} and 
automated techniques \cite{CoughlinRGWMMs,SharmaDynamicC11}

\OMIT{
Consider the following parallel program with two processes, where we leave the second ($c_2$) unspecified.
\begin{equation*}
	(x \asgn 1 \ppseqc \relfence \ppseqc r_1 \asgn \modyA \ppseqc z \asgn 1 \ppseqc r_2 \asgn w)
	\pl 
	c_2
\end{equation*}
This short program contains fences, acquires and release operations, stores and loads.
Our rules allow us to understand the actual behaviour quite straightforwardly: the first process reduces to 
\begin{equation*}
	(x \asgn 1 \ppseqs z \asgn 1) \pl (r_1 \asgn y \ppseqs r_2 \asgn w)
	\pl 
	c_2
\end{equation*}
The release fence orders the stores, and the acquire tag orders the loads, and there is no other (variable-based) dependencies between instructions.
The release fence has no effect on the state and is redundant since it has no further reordering effect (all instances of $\ppseqm$ are sequentially consistent)
and so it can be ignored for the purpose of analysis.
Showing that this program satisfies some property may or may not be trivial, but that will be because of a combination of the property, and not because the semantics itself is complex
or uses complex data structures.

To understand the behaviour of this using an axiomatic approach requires looking at all possible traces, not just of $c_1$ but also requires knowing $c_2$.
This confounds local reasoning and prohibits the use of existing techniques for sequentially consistent (and atomically-structured) programs.
To understand the behaviour of this process using the Promising semantics is also quite complex, requiring the application of many tailored semantic rules and 
several complex data structures to elucidate.
The semantics we give allows reasoning about reorderings to occur at the syntactic level, which, as argued earlier, is anecdotally how programmers think.
}

\section{Conclusions}

We have given a definition of the \Clang memory model which keeps the fundamental concepts involved (fences, ordering constraints, and data dependencies) separated from other aspects
of the language
such as expression evaluation and optimisations,
which are present regardless of whether or not the memory model is considered. 
Provided programmers keep to a reasonable discipline of programming that aids analysis of concurrent programs, \ie, program statements are generally indivisible 
(at most one shared variable per-expression), 
our framework facilitates structured reasoning about the code. 
This involves elucidating the effect of orderings on the code as-written, and then applying existing techniques for establishing the desired properties. 
We argue that our framework more closely expresses how programmers of concurrent code think about memory model effects than the abstraction given in the current standard 
(cross-thread happens-before relationships and release sequences).
This is largely because the effects that a \Clang programmer needs to consider are compiler reorderings of instructions for efficiency reasons, or 
architecture-level reorderings in instruction pipelines.
The \Clang language is rich in features and any claim to a full semantics of arbitrary \Clang code with concurrency requires a full semantics of \Clang in general, 
and as far as we are aware this has not been fully realised as yet; but we intend that our approach can be relatively straightforwardly incorporated into such by virtue of its 
\emph{separation of concerns} - the fundamental properties of the memory model are universal and consistent even in the presence of complicating factors.

We note that 
the difficulties that arise in the attempt to formalise the \Clang memory model stem from the tension between well-established compiler transformations as well as the need to 
support a multitude of 
hardware-level memory models seamlessly versus the well-known intricacies of programming correct shared-variable algorithms
\cite{Cerberus-BMC}.
This will be an ongoing balancing act that involves many competing factors, especially and including efficiency, and, increasingly, safety and security \cite{SafeByDefaultConcurrency};
if we were to take a position, it would be that sections of code -- hopefully, relatively small and localised  -- can be protected from arbitrary transformations from compiler theory and practice.
For instance, a \Clang scoping construct \T{concurrent \{} $\ldots$ \T{\}} which is compiled with minimal optimisations or reorderings, regardless of command-line flags such as \T{-O}.
The \Clang standard may then be able to provide guarantees about executions for such delimited code, while also allowing the programmer flexibility to make use of optimisations where they
have determined they are applicable.

\bibliography{biblio,colvinpubs,colvinTR,references}

\appendix

\section{Lifting from actions to traces and commands}
\labelappendix{syntax-lifting}

Throughout the paper we have used several functions and definitions based on actions, which for the most part are straightforwardly lifted to commands and traces.
For completeness we give these below.

\subsection{Extracting variables}

\newcommand{\genfnword}{{\sf fn}}
\newcommand{\genfn}[1]{\genfnword(#1)}

For any function $\genfn{.}$ that simply collects syntax elements into a set (\ie, $\fv{.}, \wv{.}, \rv{.}, \getocs{.}, \getfences{.}$, \etc)
and is defined over instructions can be straightforwardly lifted to commands, actions and traces in the following generic pattern.
\begin{eqnarray}
	\genfn{\Nil} \aeq \ess
	\\
	\genfn{\vecaca} \aeq \bigcup_{\aca \in \vecaca} \genfn{\aca}
	\\
	\genfn{\cmdc_1 \choice \cmdc_2} \aeq \genfn{\cmdc_1} \union \genfn{\cmdc_2}
	\\
	\genfn{\cmdc_1 \ppseqm \cmdc_2} \aeq \genfn{\cmdc_1} \union \genfn{\cmdc_2}
	\\
	\genfn{\iteratecm} \aeq \genfn{\cmdc}
	\\
	\Also
	\genfn{\eseq} \aeq \ess
	\\
	\genfn{\aca \cat t} \aeq \genfn{\aca} \union \genfn{t}
\end{eqnarray}

\OMIT{
from which we can straightforwardly derive:
\begin{eqnarray}
	\eseq \ro \acb \aiff \True
	\\
	\aca \cat t \ro \acb \aiff \aca \ro \acb \land t \ro \acb
	\\
	\cmdc \ro \eseq \aiff \True
	\\
	\cmdc \ro \aca \cat t \aiff \cmdc \ro \aca \land \cmdc \ro t
	\\
	\Nil \ro \acb \aiff \True
	\\
	c_1 \choice c_2 \ro \acb \aiff 
		c_1 \ro \acb \land c_2 \ro \acb 
	\\
	c_1 \ppseqc c_2 \ro \acb \aiff 
		c_1 \ro \acb \land c_2 \ro \acb 
	\\
	\iteratecm \ro \acb \aiff 
		c \rom \acb 
\end{eqnarray}
}

\subsection{Lifting forwarding/reordering triples}
\labelappendix{lifting-forwarding}

Forwarding (see \refsect{forwarding}) an action $\aca$ to action $\acb$ is defined below.
Assume $x \neq y$.
\begin{eqnarray*}
    \fwda{(\modocs{y} \asgn f)} \sseq \modocs{y} \asgn (\fwda{f})
	\qquad
    \fwda{\guardb} \aeq \guard{\fwda{b}}
	\qquad
    \fwda{\fencepf} \sseq \fencepf
	\\
	\Also
    \fwda{v} \sseq v
	\qquad
    \fwda{(\unop f)} \aeq \unop (\fwda{f})
	\qquad
    \fwda{(e_1 \binop e_2)} \sseq (\fwda{e_1}) \binop (\fwda{e_2})
	\\
	\Also
    \fwd{\modx{\ocs_1} \asgnlbl e}{\mody{\ocs_2}} \sseq \mody{\ocs_2} 
	\qquad && \qquad
    \fwd{\modx{\ocs_1} \asgnlbl e}{\modx{\ocs_2}} \sseq 
		e 
\end{eqnarray*}
Given $\aca$ is an assignment $\xasgne$ then $\fwdab$ essentially replaces references to $x$ (with any constraints) by $e$.

\labelappendix{lifting-triples}

The reordering relation can be lifted from actions to commands straightforwardly as below.
\begin{eqnarray}
	\robpb{\Nil}
	\asdef
	\acb' = \acb
	\\
	\robpab
	\asdef
	\aca \ro \acb' \land \acb' = \fwdab
	\\
	\rocmdm{\acb''}{c_1 \ppseqc c_2}{\acb}
	\asdef
	\exists \acb' @ 
		\rocmdm{\acb''}{c_1}{\acb'} 
		\land
		\robpb{c_2}
	\\
	\robpb{c_1 \choice c_2}
	\asdef
		\robpb{c_1}
		\land
		\robpb{c_2}
	\\
	\robpb{\iteratecm}
	\asdef
		\forall i \in \nat @ \robpb{\finiteratecm{i}}
\end{eqnarray}

\OMIT{
\subsection{Removing memory model-specific notation}

Given a program $\cmdc$ in \impro we can eliminate the parts of it relating specifically to memory models to give its
fundamental form, that is, removing ordering constraints from variables and turning fences into no-ops,
under the assumption that all uses of \pseqc are either sequential or parallel composition.  If this latter constraint doesn't hold
then the effects of stripping memory model artifacts is irrelevant.
We straightforwardly define $\stripoc{(.)}$ for expressions, actions and commands below.

\begin{eqnarray}
\stripoc{\modxocs} \sseq x
&&
\stripoc{v} \sseq v
\\
\stripoc{(\unop e)} \sseq \unop (\stripoc{e})
&&
\stripoc{(e_1 \binop e_2)} \sseq (\stripoc{e_1}) \binop (\stripoc{e_2})
\\
\also
\stripoc{(\modxocs \asgn e)}
\aeq
x \asgn (\stripoc{e})
\\
\stripoc{\guarde}
\aeq
\guard{\stripoc{e}}
\\
\stripoc{\fencepf}
\aeq
\tau
\\
\Also
\stripoc{\Nil} \aeq \Nil
\\
\stripoc{\vecaca} \aeq map(\stripoc{(.)}, \vecaca)
\\
\stripoc{(c_1 \choice c_2)} \aeq \stripoc{c_1} \choice \stripoc{c_2}
\\
\stripoc{(c_1 \bef c_2)} \aeq \stripoc{c_1} \bef \stripoc{c_2}
\\
\stripoc{(c_1 \pl c_2)} \aeq \stripoc{c_1} \pl \stripoc{c_2}
\\
\stripoc{(\iteratec{\SCmm})} \aeq \iterate{(\stripoc{{c}})}{\SCmm}
\end{eqnarray}
We leave the effect of stripping memory model notation undefined for programs not in sequential form,
\ie, where all instances of \pseqc are parameterised by either \SCmm (`$\bef$') or \PARmm (`$\pl$'),
and iteration is strictly sequential.

These give, for instance,
\begin{eqnarray*}
	\stripoc{((\modxR \asgn \modyA) \bef \ffence \bef r \asgn \modyX ~\pl~ \guard{\modyA = 1})}
	\aeq
	(x \asgn y) \bef \tau \bef r \asgn y
	~\pl~
	\guard{y = 1}
	\\
	\stripoc{\ffence}
	\aeq
	\tau
	\\
	\stripoc{(\modxR \asgn \modyA)}
	\aeq
	x \asgn y
\end{eqnarray*}

\robnote{Do I really need this?  Rather than strip better to have a notion of equivalence that is state-based (like $\eqmodoc$), 
and show that fences can be ignored.  The strip might be useful for defining $\effa$, however}

The command $\stripocc$ is a plain version of $\cmdc$, that is, traces of the memory model have been removed (fences and ordering constraints), leaving just a plain expressions/assignments language 
(defined inductively over the command syntax).
For example:
\begin{eqnarray}
	c
	\eqmodoc
	d
	\asdef
	\stripoc{\cmdc}
	= 
	d
	\qquad
	or
	\quad
	(\exists d' @ \stripoc{\cmdc} \mbox{~'is'~} d' \land d' \refeq d)
\end{eqnarray}
}

\OMIT{

\section{Standard rely/guarantee inference rules}

See figure; based on Presna-Nieto's encoding in Isabelle/HOL.

\begin{figure}

\begin{equation}
	\Rule{
		\Stablepr
	}{
		\rgqpr{\idn}{p}{\Skip}
	}
\end{equation}

\begin{equation}
\labellaw{rgq-conditional}
	\Rule{
		\rgq{p \land b}{r}{g}{q}{\cmdc_1}
		\qquad
		\rgq{p \land \neg b}{r}{g}{q}{\cmdc_2}
	}{
		\rgqprgq{\IFbc}
	}
\end{equation}

\begin{equation}
\labellaw{rgq-2act-bef}
	\Rule{
		\rgq{p}{r}{g}{\qmid}{\cmdc}
		\qquad
		\rgq{\qmid}{r}{g}{q}{\cmdd}
	}{
		\rgq{p}{r}{g}{q}{\cmdc \bef \cmdd}
	}
\end{equation}

\begin{equation}
	\Rule{
		\rgqp{r \lor g_2}{g_1}{q_1}{c_1}
		\\
		\rgqp{r \lor g_1}{g_2}{q_2}{c_2}
	}{
		\rgqp{r}{g_1 \lor g_2}{q_1 \land q_2}{c_1 \pl c_2}
	}
\label{rule:rgq-pl}
\end{equation}

\begin{equation}
	\Rule{
		\Stablepr 
		\quad
		\Stableqr 
		\\
		\postState{p \precomp \relasgnxe} \imp q
		\qquad
		(p \precomp \relasgnxe) \imp g
	}{
		\rgqprgq{x \asgn e}
	}
\end{equation}

\begin{equation}
\label{rule:rqg-refeq}
	\Rule{
		c \refeq d
		\qquad
		\rgqprgq{d}
	}{
		\rgqprgqc
	}
\end{equation}

Definitions:
\begin{eqnarray}
	\relasgnxe
	\asdef
	x' = e \land \idxbar
	\\
	\postStater
	\asdef
	\{\sigma' | (\exists \sigma @ \ssp \in r) \}
	\\
	p \precomp r
	\asdef
	\{\ssp \in r | \sigma \in p\}
	\quad
	\mbox{for $p$ a predicate and $r$ a relation}
	\\
	\Stable{p}{r}
	\asdef
	\postState{p \precomp \Transr} \imp p
	\\
	\idn
	\asdef
	\{\ss | \sigma \in \Sigma \}
	\\
	\id{X}
	\asdef
	\{\ssp | (\forall x \in X @ \sigma(x) = \sigma'(x))\}
	\\
	\xbar
	\asdef
	\{y \in \Var | y \neq x\}
\end{eqnarray}

\caption{Standard rules for rely/guarantee reasoning}
\labelfig{rg-std}
\Description{TODO}
\end{figure}

\OMIT{

Just a few extra rules are required for the special action types, and these are trivial.  We also repeat from Presna-Nieto, Coleman et al in the appendix.
A proof of the inference rules is out of scope of this work, indeed the point is that we can reuse.  See \cite{SoundnessRG,rgsos12,rgsos14}.
For elucidation we give a couple of specialisations for the case of exactly two instructions in 
\reffig{rg-wmms}.

\begin{figure}

\begin{equation}
\labellaw{rgq-2act-cbef-nro}
	\Rule{
		\aca \nro \acb
		\qquad
		\rgq{p}{r}{g}{q}{\aca \bef \acb}
	}{
		\rgq{p}{r}{g}{q}{\aca \ppseqc \acb}
	}
\end{equation}

\begin{equation}
\labellaw{rgq-2act-cbef-ro}
	\Rule{
		\aca \ro \acb
		\qquad
		\rgq{p}{r}{g}{q}{\aca \bef \acb}
		\qquad
		\rgq{p}{r}{g}{q}{\acb \bef \aca}
	}{
		\rgq{p}{r}{g}{q}{\aca \ppseqc \acb}
	}
\end{equation}

\begin{equation}
	\labellaw{rgq-fence}
	\rgqpr{\idn}{p \land r^*}{\scfence}
	\qquad
	\qquad
	\rgqpr{\idn}{p \land r^*}{\tau}
\end{equation}

\begin{equation}
\labellaw{rgq-fence-divides}
	\Rule{
		\rgqprgq{\cmdc \ppseqs \cmdd}
	}{
		\rgq{p}{r}{g}{q}{\cmdc \ppseqc \scfence \ppseqc \cmdd}
	}
\end{equation}

\robnote{Generalise to commands now}

\caption{Weak-memory-model-specific rules for rely/guarantee reasoning}
\labelfig{rg-wmms}
\Description{TODO}
\end{figure}
}
}

\OMIT{
\section{Alternative treatments of atomicity}

The separation of expression evaluation is modelling a compiler that reads a \Cxi statement $z \asgn x + y$ and compiles it to
\begin{equation}
	tmp_1 \asgn x \scomp tmp_2 \asgn y \scomp z \asgn tmp1 + tmp2
\end{equation}
for fresh temporary variables.  In our framework we avoid that extra complication but retain the interactions with the shared state.
(\cf, Doherty et al's semantics \cite{VerifyingC11Doherty}.)

Algebraically we have the following, where $v, v_i$ range over type $\Val$:
\begin{equation}
	r \asgn x \refeq 
		\pChoicev{\guard{x = v} \ppseqc r \asgn v}
	\\
	z \asgn x + y \refeq 
	\pChoice{v_1,v_2}{\loadx{v_1} \ppseqc \loady{v_2} \ppseqc z \asgn v_1 + v_2}
\end{equation}
This can be generalised for arbitrary expressions; see \cite{SRA16}.

\robnote{Should mention that we aren't modelling the "steps" of the compiler, just the steps of what it can eventually produce}
}

\OMIT{

\section{Splitting out optimisation steps}

\robnote{Reference from main text}

In \refrule{optimise-expr} we did not constrain the form of $b$, allowing in particular, $b$ to contain references to shared variables.
This may be a bit unrealistic, as it allows behaviours one would not expect to be indivisible.

Consider:
\[
	x \asgn y ; z \asgn x - y
\]
With optimisation let $\rocmdg{z \asgn y - y}{x \asgn y}{z \asgn x - y}$.
Obviously not allowed but then we let this become $z \asgn 0$ since 
\[
	\getocs{0} \subseteq \getocs{y - y} \subseteq \{\relaxed\}
	\\ and \\
	\getocs{y - y} \subseteq \getocs{x - y} \subseteq \{\relaxed\}
\]
\[
	\rocmdg{\guard{b'}}{x \asgn e}{\guardb}
	\sdef
	b' \equiv \fwd{x \asgn e}{b} \land b \lexmore b' \land b \ocmore b'
	\land
	x \asgn e \ro \guardbp 
\]

To both allow for local optimisations using shared variables but prevent them from propagating one can institute a new label type ``$opt(b)$''
which is similar to a guard but has a constraint that at the top level we only consider $b$'s such that $\svb = \ess$, however,
\refrule{optimise-expr} can still emit more complex $b$s; the point is for forwarding to eliminate them before they hit the storage system, as in the above example.
For simplicity we left this consideration out of the main text.  We note that reasoning becomes a bit harder in this case...
}

\OMIT{

\section{C equivalents}

For reference, here are the following translations of library functions in 
\url{
https://en.cppreference.com/w/c/atomic
}

{
\footnotesize
\begin{eqnarray*}
	r \asgn \atmexchexpl{x, e, \oc}
	\asdef
	\seqT{r \asgn \modxoc \asep \modxoc \asgn e} 
\\
	\atmcmpexchstrexpl{x, e, e', \oc_{succ}, \oc_{fail}}
	\asdef
    \seqT{\guard{\modx{\ocfail} = e} \asep \modx{\ocsucc} \asgn e'} \choice \guard{\modx{\ocfail} \neq e}
\\
	r \asgn \atmfetchaddexpl{x, e, \oc}
	\asdef
	\seqT{r \asgn \modxoc \asep x \asgn \modxoc + e} 
\end{eqnarray*}
}

Atomic \T{\_sub}, \T{\_or}, \T{\_xor}, and \T{\_and} versions of $\atmfetchaddexpl{x, v, \oc}$ are the obvious analogies.

\robnote{A "fence" comes from \T{atomic\_thread\_fence}, so maybe don't use "fence"

\robnote{Could consider a ``return'' statement; point to OO paper for how to do it}

\T{atomic\_signal\_fence}
(\url{
https://en.cppreference.com/w/c/atomic/atomic_signal_fence
})
description suggests compiler reordering can be turned off.
}

{
\footnotesize
\begin{eqnarray*}
	r \asgn \atmfetchadd{x, v}
	\asdef
	r \asgn \atmfetchaddexpl{x, v, \seqcst}
\\
	\atmcmpexchstr{x, e, e'}
	\asdef
	\atmcmpexchstrexpl{x, e, e', \seqcst, \seqcst}
\\
	r \asgn \atmexch{x, e}
	\asdef
	r \asgn \atmexchexpl{x, e, \seqcst}
\end{eqnarray*}
}
}

\OMIT{
\section{Swap}

\robnote{Not actually needed as this is assembler not C.  Could be an impl. of \T{atomic\_compare\_exchange}, sect 7.17.7.4.  Maybe turn this into a section that addresses other stuff in 7.17..}

\robnote{Merge to previous appendix}

One of the main ways of enforcing atomicity on x86-TSO is the \T{xchg} operation, which swaps (exchanges) the values of a register and address in memory.
As a demonstration of the flexibility of our approach we show how to extend the language to handle this common instruction type.
Note that handling it directly in \impro is non-trivial as value swapping requires a temporary value.  However we extend the language straightforwardly to allow multiple assignment.
\begin{eqnarray}
	\aca \attdef \ldots \csep \xvec \asgn \evec
	\\
	\wv{\xvec \asgn \evec} \aeq \xvec \\
	\rv{\xvec \asgn \evec} \aeq \fv{\evec} 
	\\
	\mathbf{xchg}(x,r) \asdef \seqT{x,r} \asgn \seqT{r,x}
\end{eqnarray}
We assume $\xvec$ is a list of distinct variables and $\evec$ is a list of expressions of the same length as $\xvec$.
}

\OMIT{
\section{Non-multi-copy atomicity}
\labelsect{lmca}

Here we justify assuming multicopy atomicity:

\begin{itemize}
\protect \item
TSO and ARMv8 require multicopy atomicity
\item
Older versions of ARM do not require mca, but no real hardware ever made use of it, ie, it was never observed
\item
ARM engineers decided it was too complex to allow reasoning, and confusion resulted in too many fences in final code to avoid the subtleties
\item
This leaves only the POWER architecture which lacks mca, and this is not as widely used as the others
\item
Although the \Cxi standard (apparently?) does not require mca, we assume that behaviours that arise when lacking \emph{do not} appear when run on architectures which do not exhibit it, that is, compiler writers do not explicitly code in
lack of mca behaviour into their translations, and that no programmers depend on such behaviours being possible (since the behaviours that arise due to mca could also occur even if mca is not assumed)
\item
In summary: in general the lack of mca was considered so complex to reason about that it was not worth allowing from ARM's perspective, one of the biggest vendors of hardware; it does not occur in another widely used wmm, TSO; and probably is not visible for
algorithms with 2 or fewer processes.  Therefore we use the simpler RG rules.  Of course, as demonstrated in \cite{FM18}, we can express the lack of mca in our framework, and therefore it is possible to develop more complex rules to handle it.

\end{itemize}

Non-mca architectures are rare: Power is the only known, and this due to its cache system.  Clearly this cannot be explained by local
reorderings.  It is not derived from pipeline reorderings on processors.  There is no guarantee some other company won't come up with its own,
so dealing with this specifically is not that useful in general.  ARM got rid of it, after no one used, the spec was complicated, and ARM was
fast enough anyway.  Pretty hard to build in non-mca into mca architecture; we assume this can't happen, and if we are wrong, that it won't
happen anyway.

Verification gets harder.
Options
\begin{enumerate}

\item
Treat the write queue $W$ in FM18 as a global variable, and treat updates and fences as a modification as outlined in FM18.  Reads become \emph{per-thread} queries which
return a set of possible values.  Predicates $P,R$ over states can stay similar, but have to be interpreted per-thread somehow at the top level; they have to hold for all
possible values that their thread can see

\item
Current preferred: distinguish between local and global knowledge.  Setting $x \asgn 1$ lets the local process assume $local(x) = 1$ but cannot infer anything about the
other processes' knowledge.  A fence converts local to global.  Such a set of rules will not be complete, but should be sufficient for any reasonable example.  For instance, I am
thinking that such rules could not be used to prove a specific possible outcome for the IRIW litmus test, as the non-multicopy atomic behaivour is not easy to reason about
conceptually even, but for standard proofs, in particular rely/guarantee-based proofs, a simple local to global rule is enough.
Proving soundness of such a rule will be complex

Note that local knowledge can be obtained by loads, and a (full) fence is required to convert an observation like $y = 1$ into a global observation.  This relates to
Graeme's example of how the lack of multicopy atomicity creates security issues.

\end{enumerate}

\sbitem{
A lack of multicopy atomicity.  The Power memory model lacks multicopy atomicity, that is, in a system of 3 or more processes, two processes may have a different view of the order or modifications to global variables
(this is not to be confused with reordering of instructions on a single thread -- while a process's view of the order of modifications may disagree with the order in the \emph{progam text}, we are referring here to two
processes disagreeing with each other on the actual memory order).  Although the semantic framwork developed in \cite{FM18} can express the lack of multicopy atomicity, developing rely-guarantee rules to cope for it is
nontrivial (however, the Promising Semantics does \cite{PromisingSemantics}).  Recently the ARM memory model was strenthened to require multicopy atomicity \cite{PultePOPL}, and TSO has always had it.  As these are
significantly more widely used than POWER, we do not consider it a limitation to focus our framework on these cases, even though the \Cxi memory model does not assume multicopy atomicity.  Also see \refsect{lmca} for
further discussion on this limitation
}

	\sbitem{
This is in large part so that C code can be compiled to assembler on processors that are not ``multicopy atomic''; however the only mainstream desktop processor that is not multicopy atomic is 
POWER, but the far more widely used x86 and Arm processors are multicopy atomic.  In Arm's case (used in over a billion devices) they recently removed the possibility of non-multicopy atomic behaviours
because of their complexity.
}

}

\OMIT{
\section{Non-atomics}
\labelsect{non-atomics}

\robnote{Is this even a thing?  What does the spec say about not using them?
I guess the point is that they are shared vars, but not declared atomic.
Are they therefore still affected by all the release sequence stuff?  I think drop}

The precise semantics of these is unclear.  I think the idea is that if a location is not atomic (non-atomic) then it can be treated as a local variable by the compiler.
However this does not hold if the program is data-race free, ie., if non-atomic accesses are guarded by locks, release/acqire etc.  Therefore a different semantics is
needed in each case, but this is context-dependent.  (By which I mean, if we just treat non-atomics as locals, then we will not get the correct behaviour for 
(shared) non-atomics when they are protected by locks.)   We may have to forgo a satisfactory account.

According to Wickerson et al \cite{AutomaticallyComparingMCMs}, Example 1 in S1.2, $\na$ actions cannot be reordered with Rel/Acq, that is, 
\begin{equation}
	(r_1 \asgn a \ppseqc \modR{x \asgn 1}) \pl (\modA{r_2 \asgn x} \ppseqc a \asgn 1)
\end{equation} 
with a final state where $r_1 = 1 \land r_2 = 1$ is forbidden by \Cxi.
However they then go on to say that the semantics of the above program is undefined because it has a data race, so maybe this is not a good lead...

\robnote{Other text..:}
The memory model provides for untagged accesses to atomic (shared) variables.  We have assumed unmarked accesses are relaxed ($\relaxed$), but depending on the compiler these accesses
potentially allow optimisations.  The non-atomic accesses sit at the bottom of the preference pile.  They are still restricted by the same constraints such as $\Gmm$, and they are restricted by fences,
and as far as the ordering goes they are equivalent to relaxed, ie, there are no additional effects.  However they can be optimised away more strongly than relaxed accesses, which can be encoded by
the optimisations thing *as it currently stands*.

The point is that syntactic dependencies aren't necessarily preserved.

In the specification non-atomic accesses of atomic variables are not considered part of the release sequence unless they are behind some sort of synchronisation, in which case they are not a data race and so
have the typical semantics.  As such we treat everything as if it is DRF.  We do not attempt to mimic how a compiler might detect/determine DRFness and the subsequent things it can do or can't; we consider this
at least equivalent to putting an explicit marker "do not optimise" around things (or, preserved seq deps).

\OMIT{
We need to consider the possibility of ``consecutive'' loads of the same variable being coalesced into a single load from which both take their value.  This has a
flow on effect of potentially allowing reorderings that were possible before the coalesce.

We can achieve this in the semantics by generalising the reordering for guards (loads).  Currently we have
\begin{equation}
	\loadx{v_1} \nro \loadx{v_2}
\end{equation}
However we can generalise this if we consider them to be relations \OMIT{(see \refsect{relations})}.  We can actually state:
\begin{equation}
\labeleqn{ro-guard-new}
	\rocmd{\guard{v_1 = v_2}}{\loadx{v_1}}{\loadx{v_2}}{m}
\end{equation}

\newcommand{\ndgen}[2]{{\displaystyle \bigsqcap_{#1}}\, #2}
\newcommand{\exeval}[2]{{\Meaning {#1}_{#2}}}

Now we show the expansion of two consecutive non-atomic load (assignments).
\begin{equation}
	\ry1 \asgn y \scomp \ry2 \asgn y 
\end{equation}
becomes
\begin{equation}
	\left( \ndgen{v_1}{\load{y}{v_1} \scomp \ry1 \asgn v_1}\right) 
	\scomp 
	\left( \ndgen{v_2}{\load{y}{v_2} \scomp \ry2 \asgn v_2}\right) 
\end{equation}
Now we manipulate:

\begin{derivation}
	\step{
		\left( \ndgen{v_1}{\load{y}{v_1} \scomp \ry1 \asgn v_1}\right) 
		\scomp 
		\left( \ndgen{v_2}{\load{y}{v_2} \scomp \ry2 \asgn v_2}\right) 
	}

	\trans{=}{Distribute nondet choice}
	\step{
		\left( \ndgen{v_2}{ \left( \ndgen{v_1}{\load{y}{v_1} \scomp \ry1 \asgn v_1}\right) 
		\scomp 
		\load{y}{v_2} \scomp \ry2 \asgn v_2}\right) 
	}

	\trans{=}{Associativity; reorder}
	\step{
		\left( \ndgen{v_2}{ \ndgen{v_1}{
			\load{y}{v_1} 
			\scomp 
			\load{y}{v_2} 
			\scomp 
			\ry1 \asgn v_1}
			\scomp 
			\ry2 \asgn v_2}
		\right) 
	}

	\trans{=}{Reorder according to new (generalised) rule}
	\step{
		\left( \ndgen{v_2}{ \ndgen{v_1}{
			\guard{v_1 = v_2}
			\scomp 
			\load{y}{v_1} 
			\scomp 
			\ry1 \asgn v_1}
			\scomp 
			\ry2 \asgn v_2}
		\right) 
	}

	\trans{=}{Meta-level manipulation, clearly the only valid traces are those where the values are the same}
	\step{
		\left( \ndgen{v_1}{
			\load{y}{v_1} 
			\scomp 
			\ry1 \asgn v_1
			\scomp 
			\ry2 \asgn v_1}
		\right) 
	}

	\end{derivation}

Now we have shown that both registers can be serviced by the same ``load''.  

A similar manipulation \emph{may} be possible for write coalescing.

\subsection{Address shifting}

Consider:
\begin{equation}
\labeleqn{load-coalesce-ex}
	r_x \asgn x \scomp \ry1 \asgn y + r_x \scomp \ry2 \asgn y \scomp r_z \asgn z + \ry2 \quad
	\pl \quad z \asgn 1 \scomp \fence \scomp x \asgn 1
\end{equation}
For simplicity here we are assuming expression evaluation is atomic, but an assignment is not.  Using address shifting we can easily remove this assumption.
The load into $\ry2$ is blocked by the load into $\ry1$.  Due to the other dependencies it appears the load of $z$ must happen after the load of $x$.
However on real processors this is not the case; operationally it seems the second load of $y$ is allowed to proceed and the rest of the program is executed
speculatively, dependent on whether the first load reads the same value (actually, the same write).  

However the previous reasoning is not all that useful in the address shifting case.  
The tricky thing at the processor is level is something like \emph{address shifting}, where the load address is
dependent on a register and will not proceed until that is evaluation.  This is a case of atomic expression evaluation.  Let us write $f(a,r)$ for an address
calculated from address $a$ shifted by $r$.  The we have, by a processor-specific definition for $f$,
\begin{equation}
	r_1 \asgn f(y,r_2)
	\sdef
	\ndgen{v,v_y}{\guard{y = v_y \land v = f(v_y,r_2)} \scomp r_1 \asgn v}
\end{equation}
Note that the guard contains references to two variables, $y$ and $r_2$.  Now, seemingly, we have
\begin{equation}
	\rocmd{\guard{w = v_y}}{\guard{y = v_y \land v = f(v_y,r_2)}}{\guard{y = w}}{m}
\end{equation}
	

}
}

\OMIT{
\section{Generalising to relations}
\labelsect{relations}

\newcommand{\mangle}[2]{\fwd{#1}{#2}}

In the work we have considered so far we only have used either single assignments or guards.  These are both simple (and common) instances of relations.
We can work out some function which generalises forwarding so that for relations $R_1$ and $R_2$ we have
\begin{equation}
	\mangle{R_1}{R_2} \comp R_1 \imp 
		R_1 \comp R_2
\end{equation}
This should preserve sequential consistency, and reduce down to the special cases we have given, in particular, \refeqn{ro-guard-new}.

It may not be possible to define $\mangle{R_1}{R_2}$, other than via the above property.
}

\section{Mixing incremental and non-incremental evaluation}
\labelappendix{indivis-vecaca}

In \refsect{incremental-evaluation} 
we gave a semantics for evaluating instructions incrementally, as opposed to treating instructions as indivisible in \refsect{semantics}.
In this section, for completeness, we show how to mix both possibilities within the syntax of \impro.

\newcommand{\seqOf}{\mathbf{seq}~}
\newcommand{\instr}{\iota}
\newcommand{\specinstr}{{\instr^{+}}}
\newcommand{\codeinstr}{s}
\newcommand{\vecinstr}{\vec{\instr}}

\newcommand{\divistag}{\mathsf{divis}}
\newcommand{\indivistag}{\mathsf{indivis}}

We define an instruction $\instr$ to be one of the three basic types of assignment, guard and fence, and an action $\aca$ to 
be a list of instructions (written $\vecinstr$).  Actions are the basic type of a step in the operational semantics.
We define a ``specification instruction'' to pair a basic instruction $\instr$ with a designation
as to whether it is divisible or indivisible, \ie, whether it is to be executed incrementally or as a single indivisible step.
Finally, within the syntax of \impro, instead of a list of actions $\vecaca$ as the base type (\refeqndefn{cmd}), 
we allow a \emph{statement} $\codeinstr$, which is a list of specification instructions.
\begin{equation*}
	\instr \in \Instr
	\quad
	\aca \in Action
	\quad
	\specinstr \in SpecInstr
	\quad
	\codeinstr \in Statement
\end{equation*}
\begin{eqnarray*}
	\instr \attdef x \asgn e \csep \guarde \csep \fencepf
	\\
	\aca \asdef \vecinstr
	\\
	\specinstr \attdef \instr \cross (\mathsf{divisible} | \mathsf{indivisible})
	\\
	\codeinstr \attdef \vec{\specinstr}
\end{eqnarray*}
This gives a significant amount of flexibility in describing the execution mode of composite actions, for instance,
$
	\seqT{
	(\guard{x = y}, \indivistag),
	(x \asgn y + z, \divistag)
	}
$ is a statement that calculates whether $x = y$ in the current state, and then incrementally evaluates $y + z$ before assigning the result to $x$.
Of course, this level of flexibility is not necessary for the majority of cases,
and 
syntactic sugar can be used to cover the commonly occurring cases,
in particular, letting a singleton list of specification instructions be written as a single specification instruction; 
and conventions for distinguishing between divisible and indivisible versions of instructions.

We lift \refeqndefn{indivisact} for indivisible instructions to the new types.
\begin{eqnarray*}
	\indivisact{(\instr, \mathsf{indivisible})} \aeq \True \\
	\indivisact{(\instr, \mathsf{divisible})} \aeq \indivisact{\instr} 
	\\
	\indivisact{\codeinstr} \asdef \forall \specinstr \in \codeinstr @ \indivisact{\specinstr}
\end{eqnarray*}
Any specification instruction tagged $\indivistag$ is indivisible, 
while 
a specification instruction tagged $\divistag$ is divisible if its instruction is,
but is otherwise indivisible.

The relevant operational semantics for specification instructions and statements is as follows.
\begin{gather}
	\Rule{
		\instr \tra{\aca} \instr'
	}{
		(\instr, \mathsf{divisible}) 
		\tra{\aca}
		(\instr', \mathsf{divisible}) 
	}
	\labelrule{spec-instr}
	\\
	\Rule{
		\indivisact{\codeinstr_1}
		\qquad
		\specinstr \tra{\aca} \specinstr'
	}{
		\codeinstr_1 \cat \specinstr \cat \codeinstr_2
			\tra{\aca} 
		\codeinstr_1 \cat \specinstr' \cat \codeinstr_2
	}
	\qquad
	\qquad
	\Rule{
		\indivisact{\codeinstr}
	}{
		\codeinstr \ttra{strip(\codeinstr)} \Nil
	}
	\labelrule{code-instr}
\end{gather}
\refrule{spec-instr}
states that any instruction tagged divisible can take an incremental execution step according to the evaluation rules in \refsect{incremental-evaluation}.
This is used to build \refrule{code-instr} for a statement $\codeinstr$, where specification instructions within $\codeinstr$ 
are executed incrementally from left to right:
the first divisible specification instruction ($\specinstr$) in $\codeinstr$ may take a step, which becomes a step of the statement.
When all instructions within $\codeinstr$ are indivisible a final, single, indivisible step is taken.
This action is formed by simply stripping the $\indivistag/\divistag$ tags from the specification instructions, 
\ie,
$
	strip((\instr, \_)) \sdef \instr
$,
which is lifted to statements by applying $strip$ onto each element,
\ie,
$
	strip(\codeinstr) \sdef map(strip, \codeinstr)
$.

A compare and swap command (\refeqndefn{cas})
can be redefined to incrementally evaluate its arguments before 
executing an indivisible (``atomic'') test-and-set step.
\begin{eqnarray}
	\CASxe
	\asdef
	\seqT{(\guard{x = e}, \divistag), (x \asgn e', \divistag)} \choice (\guard{x \neq e}, \divistag)
\end{eqnarray}
In the successful case first $e$ is evaluated to a value $v$, then $e'$ is evaluated to a value $v'$,
and finally the action $\seqT{\guard{x = v} \tsep x \asgn v'}$ is executed (the $\divistag$ tags are stripped).
This means that $e$ and $e'$ can be incrementally evaluated, but the final test/update remains atomic.

\OMIT{
\section{The \Clang standard}
\labelappendix{c-standard}

\robnote{For PAC readers: this will likely be dropped but may be useful to reference}

\robnote{Mention which document versions were accessed}

See \url{http://en.cppreference.com/w/cpp/atomic/memory_order}

Possibly find the std through
\url{
https://stackoverflow.com/questions/81656/where-do-i-find-the-current-c-or-c-standard-documents
}, giving main(?) links to 
\begin{itemize}
\item
the \Clang committee (WG14),
\url{
http://www.open-std.org/jtc1/sc22/wg14/www/projects#9899
}
which
gives
link to n2731, Oct 18 2021,
\url{
http://www.open-std.org/jtc1/sc22/wg14/www/docs/n2731.pdf
}
(see in \T{C11/Literature})

\item
and the \Cpp committee (WG21),
\url{
http://www.open-std.org/jtc1/sc22/wg21/
}
which
gives
link to n3797, Oct 13 2013,
\url{
http://www.open-std.org/jtc1/sc22/wg21/docs/papers/2013/n3797.pdf
}
(see in \T{C11/Literature})

See overview note from October 2020 \url{
http://www.open-std.org/jtc1/sc22/wg21/docs/papers/2020/p0943r6.html
}

\item
Some discussions by the working groups lined by Sewell (Cerberus project)
\url{https://www.cl.cam.ac.uk/~pes20/cerberus/}

\end{itemize}

Best library reference: 
\Clang:
\url{
https://en.cppreference.com/w/c/atomic
}
\\
Cpp:
\url{
https://en.cppreference.com/w/cpp/header/stdatomic.h
}

Highest level library reference?:
\url{
https://en.cppreference.com/w/cpp/atomic
}

\subsection{Terminology}

Terminology, from ``official'' above (document N2176):%
\footnote{
See  \\
Official(?): \\
\url{https://web.archive.org/web/20180118051003/}
\\
\sloppy
\url{http://www.open-std.org/jtc1/sc22/wg14/www/abq/c17_updated_proposed_fdis.pdf} \\
Unofficial: \\
\url{https://en.cppreference.com/w/c/atomic/memory_order} \\
\url{https://en.cppreference.com/w/cpp/atomic/memory_order} \\
\url{https://en.wikipedia.org/wiki/MOESI_protocol} 
}
\footnote{
There is extra info in \cite{AutomaticallyComparingMCMs} Fig 3, S2.2, e.g., all $\seqcst$ reads have acquire semantics, all $\seqcst$ writes have release semantics, $\seqcst$ fences
are both acquire and release, and ``non-atomic reads access only non-atomic locations'', which appears to allow $\na$-writes to atomic locations, and atomic actions only
access atomic locations.
}
\begin{itemize}
\item
\emph{Conflict; C11 Standard 5.1.2.4 (4)}.
``Two expression evaluations \emph{conflict} if one of them modifies a memory location and the other one
reads or modifies the same memory location.''

Hence any program we are interested in has a ``conflict''.

\item
\emph{Data race; C11 Standard 5.1.2.4 (35)}.
``The execution of a program contains a \emph{data race} if it contains two conflicting actions in different
threads, at least one of which is not atomic, and neither happens before the other. Any such data
race results in undefined behavior.''

Hence to have a ``data race'' one must have a non-atomic access of a shared variable.  We will tend to avoid these,
and hence our examples can never have a data race; thus we are mostly dealing with defined behaviour.
Note that the definition of data race refers to ``happens before'', which I think is close to a circularity; certainly entails some
deep understanding of the memory model \emph{before} deciding whether there is a data race.

\item
\emph{Lack of multicopy atomicity? Sect 5.1.2.4 (Note 4)}
``There is a separate order for each atomic object. There is no requirement that these can be combined into a single
total order for all objects. In general this will be impossible since different threads may observe modifications to different
variables in inconsistent orders.''

\item
\emph{Happens before; Sect 5.1.2.4 (18)}
``An evaluation A happens before an evaluation B if A is sequenced before B or A inter-thread happens
before B. The implementation shall ensure that no program execution demonstrates a cycle in the
“happens before” relation''

Sequenced before is essentially program order, but noting that expression evaluation can be in different orders, and 
there is no particular ordering on sequential composition (see 5.1.2.3(3) and Annex C)

Inter-thread happens before is straightforwardish...

\item
\emph{Release sequence; Sect 5.1.2.4 (10)}.
``A release sequence headed by a release operation A on an atomic object M is a maximal contiguous
sub-sequence of side effects in the modification order of M, where the first operation is A and every
subsequent operation either is performed by the same thread that performed the release or is an
atomic read-modify-write operation.''

Given code like
\begin{equation}
	y \asgn 1 
	\scomp
	\pmodR{x \asgn 1}
	\scomp
	y \asgn 2
	\scomp
	x \asgn 2
	\sspace \pl \sspace
	\pmodA{r_1 \asgn x}
	\scomp
	r_2 \asgn y
\end{equation}
the release sequence includes both $x \asgn 1$ and $x \asgn 2$.  If the acquire in the other thread sees either write to $x$, then
$r_2 \asgn y$ is guaranteed to see $y \asgn 1$, \emph{but may not} see $y \asgn 2$.  This agrees with our usual understanding;
the release sequence concept is a confusing way of saying ordering on $x$ is (always) preserved but other orders may not be (due to \Gmm).

\end{itemize}

See also:
\begin{itemize}
\item
Section 7.17.3, ``Order and consistency''
\item
Annex C, ``Sequence points''.  This gives some order to executions, for instance, seems to imply that if-statement conditions are evaluated
before expressions inside the branches.
\item
Section 7.26.4, ``Mutexes''. This might be a good value-add for our paper

\end{itemize}

Other notes of interest:
\begin{itemize}
\item
\emph{Sect 5.1.2.4 (37, Note 19)}.
``... Reordering of atomic loads in cases in which the atomics in question may alias is also generally precluded, since
this may violate the coherence requirements.''

This seems to relate to the RSW difficulty associated with address shifting/pointer arithmetic.

\end{itemize}

\robnote{See \url{http://plv.mpi-sws.org/gps/rcu/paper.pdf}.
Discussion about second example on page 2 doesn't make sense?  Neither may "win"
}

\robnote{Can we deal with \T{kill\_dependency}, \\
\url{https://en.cppreference.com/w/cpp/atomic/kill_dependency} possibly by stripping the tags?}

\begin{itemize}
\item
``If an operation A that modifies an atomic object M happens before an operation B that modifies M,
then A shall be earlier than B in the modification order of M. [...]
The requirement above is known as “write-write coherence”.''

\item
``If a value computation A of an atomic object M happens before a value computation B of M, and A
takes its value from a side effect X on M, then the value computed by B shall either be the value
stored by X or the value stored by a side effect Y on M, where Y follows X in the modification
order of M. [...]
The requirement above is known as “read-read coherence”.''

\item
``If a value computation A of an atomic object M happens before an operation B on M, then A shall
take its value from a side effect X on M, where X precedes B in the modification order of M.
[...]
The requirement above is known as “read-write coherence”''

\item
``If a side effect X on an atomic object M happens before a value computation B of M, then the
evaluation B shall take its value from X or from a side effect Y that follows X in the modification
order of M.
[...]
The requirement above is known as “write-read coherence”.''
\end{itemize}

\subsection*{Consume}

Interesting quotes about consume:

``In fact, the only known weakly-ordered processor that does not preserve data dependency ordering is the DEC Alpha''.  

``Indeed, RCU served as motivation for adding consume semantics to C++11 in the first place.''

See \\
\url{https://preshing.com/20140709/the-purpose-of-memory_order_consume-in-cpp11/}

There was a bug in gcc wrt consume, see \\
\url{https://gcc.gnu.org/bugzilla/show_bug.cgi?id=59448} \\
This suggests the current spec is hard to work with...

Also, where ``consume'' is found on that page, there is a discussion that the compiler "lost track of" a dependency.  This goes to the overall point.  However the above thread covers 2013-2016

Also of relevance:
\url{
	https://stackoverflow.com/questions/38280633/c11-the-difference-between-memory-order-relaxed-and-memory-order-consume
}
\\
and
\url{
	http://www.open-std.org/jtc1/sc22/wg21/docs/papers/2015/p0098r0.pdf
}.

}

\OMIT{
\section{New rule...}

Silent reordering.
\[
	\Rule{
		c_2 \tra{\acb} c_2'
		\qquad
		\rocmd{\acb'}{c_1}{\acb}{\mm}
	}{
		c_1 \ppseqm c_2 
		\tra{\tau}
		\acb' \bef (c_1 \ppseqm c_2')
	}
\]

Requires '$\bef$' as a primitive.

`Fixes' the monotonicity problem...

but kills associativity

Problem case is
\[
	(\aca \ppseqm \acb) \ppseqm \acc
\]
where $\roabab$, $b \nro c$ but $\acb' \ro \acc$ (and $\aca \ro \acc$).
(With $\acb' = \fwdab$)

Refinement says, and will always say, $\aca \ppseqm \acb \refsto \acb' \ppseqs \aca$
Hence assuming monotonicity,
\[
	(\aca \ppseqm \acb) \ppseqm \acc
	\\ \refsto 
	(\acb' \ppseqs \aca) \ppseqm \acc
	\\ \refsto 
	\acc \ppseqs (\acb' \ppseqs \aca) 
\]
But in the original program, operationally there is no way for $\acc$ to occur first.
The above rule fixes that, since the reordering of b and a can happen silently and in-place.

But now associativity is broken, ie,
\[
	(\aca \ppseqm \acb) \ppseqm \acc
	\neq
	\aca \ppseqm (\acb \ppseqm \acc)
\]
There is again no way for $\acc$ to come first in the RHS, so
\[
	(\aca \ppseqm \acb) \ppseqm \acc
	\refsto
	\aca \ppseqm (\acb \ppseqm \acc)
\]
but not the other way around

Presumably $\fwdab = \acb$, or, stronger, $\nddepab$, fixes this issue.

More generally, when $\roabab \imp \acb' \ro \acc \imp \acb \ro \acc$ (all of which is only relevant if $\aca \ro \acc$ as well)

And these can be lifted to cmd level, which is nice

So, associative if no forwarding

Probably better to have associativity as conditional, rather than a complex notion for monotonicity

Note that the new operational rule allows more behaviours, in particular, now
\[
	x \asgn 1 \ppseqt r_1 \asgn x \ppseqt r_2 \asgn y
	\xtra{r_2 \asgn y}
	r_1 \asgn 1 \ppseqs x \asgn 1 
\]
So now the load can happen earlier.  Was this (refinement) prevented by submitted isabelle theory?
}

\OMIT{
\section{Peterson's lock}

See example in \reffig{petersons},
taken from Doherty et al \cite{DohertyVerifyingC11}.

\renewcommand{\f}[1]{f_#1}
\renewcommand{\c}[1]{crit_#1}
\newcommand{\teq}[1]{turn = #1}

\newcommand{\pmutex}[1]{{\mathsf{pmutex_#1}}}

\newcommand{\LAt}{\modLA{turn}}
\newcommand{\LAfa}{\modLA{\f1}}
\newcommand{\LAfb}{\modLA{\f2}}
\newcommand{\swapcmd}[3]{\modAct{#1}{#3} \asgn #2}
\newcommand{\swapt}[1]{\swapcmd{turn}{#1}{\acqrel}}

\begin{figure}
Initially $ \f1 = false \land \f2 = false \land \teq1$
\\
also $\c1 = \c2 = false$.

\begin{minipage}{0.475\textwidth}
\[
\modX{\f1 \asgn true} 
\\
\swapt{2}
\\
\While \modA{(\f2 =true)} \land \modX{(\teq2)}
\\ \qquad
\Do \Skip
\\
\mbox{\emph{\green{(Critical section)}} } \\
\\
\modR{\f1 \asgn false} 
\]
\end{minipage}
$ \pl $
\begin{minipage}{0.478\textwidth}
\[
\modX{\f2 \asgn true}
\\
\swapt{1}
\\
\While \modA{(\f1 =true)} \land \modX{(\teq1)}
\\ \qquad
\Do \Skip
\\
\mbox{\emph{\green{(Critical section)}} } \\
\\
\modR{\f2} \asgn false
\]
\end{minipage}

\centerline{Original from \cite{DohertyVerifyingC11}}

{
\begin{minipage}{0.45\textwidth}
\[
\qquad \pmutex1 \\
\modX{\f1} \asgn true
\\
\swapt{2}
\\
await \ldots
\\
\c1 \asgn true \ppseqc \c1 \asgn false
\\
\modR{\f1 \asgn false} 
\]
\end{minipage}
$ \pl $
\begin{minipage}{0.47\textwidth}
\[
\qquad \pmutex2 \\
\modX{\f2 \asgn true}
\\
\swapt{1}
\\
await \ldots
\\
\c2 \asgn true \ppseqc \c2 \asgn false
\\
\modR{\f2} \asgn false
\]
\end{minipage}

\[
await \sdef
	\Repeat 
		r_1 \asgn \modA{\f2} 
		\ppseqc
		r_2 \asgn \modX{turn}
	\Until \neg r_1 \land r_2 = 1
\]
Will have a problem with variable names overlapping in both processes

\centerline{Our encoding making critical section explicit}
}

\robnote{The acq on the setting of turn is a good example of acq/rel not ordered in C11?}

\Description{TODO}
\caption{Peterson's algorithm with release-acquire, taken from Doherty et al.}
\label{fig:petersons}
\end{figure}

In comparison to Doherty's we have made the $\relaxed$ annotations explicit, and added the $\c*$ variables to model critical sections.
Doherty et. al use a special $\mathbf swap$ command for $\acqrel$ assignment to $turn$, which models an atomic RMW operation, ie, it reads the value of $turn$ before updating.
However the current value of $turn$ appears to be ignored (by the swap), and hence it is just used because $\acqrel$ only makes sense in C11 on a read/update action.  We don't need
to worry about that so just annotate a normal assignment; the annotation itself enforces the required semantics.
The difference is important when calculating complex axiomatic relations at the top level; we just let the local reorderings deal with it.

\OMIT{
\subsubsection{Eliminating the loop}
The main structural difference is we encode the empty while loop as an abstract ``await'' statement, that is, a guard on the negation of the condition.  This is
an abstraction of the while loop, which simplifies the reasoning as we do not need to deal with loop invariants and possible reorderings between loop iterations.
The traces of our version eliminate finite sequences of loads of $\f2$ and $turn$ where the guard evaluates to $true$, but these do not affect functional
correctness.

In terms of traces, the version with the while loop has a sequence loads of $\f2$ and $turn$, followed by a single evaluation of one of them which results in the
while condition being falsified.  Our version is similar, except eliminating all of the earlier loads.  Loads do not affect the functional correctness so this is fine.

The condition itself is negated so that it can become an await statement, and then we split the resulting disjunction into a choice.  This is only justified if
each resulting guard is at least as restrictive on reorderings as the original.  Hence we must strengthen the $\relaxed$ tag on $turn$ to an $\acquire$ tag.  If
we did not then potentially $\c1 \asgn true$ could be reordered before the $turn$ guard.  
}

\OMIT{
Ian says the rely/guarantees become:
\[
	\teq1 \imp turn' = 1 
	\qquad \mbox{(derivable from $turn' = turn \lor turn' = 1$)}
	\\
	\c1 \imp \f1 \land (\neg\f2 \lor \teq1 \lor \neg\c2)
\]
}
}

\end{document}